\documentclass[aps,prd,10pt,notitlepage,nofootinbib,superscriptaddress,showkeys,showpacs]{revtex4-1}

\usepackage{amsmath,amssymb,amsthm,latexsym,bbm}
\usepackage[english]{babel}
\usepackage{color}
\usepackage{xspace}
\usepackage{graphicx}
\usepackage{pifont,dsfont}
\usepackage{marvosym}
\usepackage{slashed}
\usepackage{enumitem}
\usepackage{multirow}
\usepackage{subcaption}
\usepackage{caption}
\usepackage{hyperref}

\theoremstyle{definition}
\newtheorem{lemma}{Lemma}

\newtheorem{theorem}{Theorem}
\newtheorem{remark}{Remark}

\newtheorem{proposition}{Proposition}

\makeatletter
\renewcommand\paragraph{\@startsection{paragraph}{4}{\z@}%
                                     {2ex\@plus .5ex \@minus -.2ex}%
                                     {-1ex \@plus .2ex}%
                                     {\normalfont\normalsize\bfseries}}
\makeatother

\newcommand{\ON}{$O(N)^3$-}
\begin{document}

\title{{\Large \bf Diagrammatics of the quartic 
{\ON invariant Sachdev-Ye-Kitaev-like}
 tensor model}}

\author{{\bf Valentin Bonzom}}\email{bonzom@lipn.univ-paris13.fr}
\affiliation{LIPN, UMR CNRS 7030, Institut Galil\'ee, Universit\'e Paris 13,
99, avenue Jean-Baptiste Cl\'ement, 93430 Villetaneuse, France, EU}

\author{{\bf Victor Nador}}\email{victor.nador@ens-lyon.fr}
\affiliation{LaBRI, UMR CNRS 5800, Universit\'e de Bordeaux, 351 cours de la Lib\'eration, 33405 Talence, France, EU}

\author{{\bf Adrian Tanasa}}\email{ntanasa@u-bordeaux.fr}
\affiliation{LaBRI, Universit\'e de Bordeaux, 351 cours de la Lib\'eration, 33405 Talence, France, EU}
\affiliation{H. Hulubei Nat. Inst. Phys. Nucl. Engineering, P.O.B. MG-6, 077125 Magurele, Romania, EU}
\affiliation{I. U. F., 1 rue Descartes, 75005 Paris, France, EU}
\date{\today}

\begin{abstract}
Various tensor models have been recently shown to have the same properties as the celebrated Sachdev-Ye-Kitaev (SYK) model. In this paper we study in detail the diagrammatics of two such SYK-like tensor models: the multi-orientable (MO) model which has a $U(N)\times O(N)\times U(N)$ symmetry and 
a quartic $O(N)^3$-invariant model whose interaction has the tetrahedral pattern. We show that the Feynman graphs of the MO model can be seen as the Feynman graphs of the $O(N)^3$-invariant model which have an orientable jacket. Then we present a diagrammatic toolbox to analyze the $O(N)^3$-invariant graphs. 
{This toolbox}
allows for a simple strategy to identify all the graphs of a given order in the $1/N$ expansion. We apply it to the next-to-next-to-leading and next-to-next-to-next-to-leading orders which are the graphs of degree 1 and 3/2 respectively.
\end{abstract}

\maketitle

\section{Introduction}
\label{intro}

The fermionic quantum mechanical Sachdev-Ye-Kitaev (SYK) model (in the form introduced in \cite{kitaev}) has attracted, in the recent years, a huge amount of interest from the high energy physics community (see, for example, \cite{maldacena}, \cite{Gross}, \cite{PR}, \cite{Gurau_quenched} and references within).
A crucial diagrammatic property of the SYK model is that the model is dominated, in the large $N$ limit ($N$ being here the number of fermions), by a simple class of graphs called melonic graphs. A combinatorial proof of the melonic dominance of the SYK model has been recently given in \cite{Nador1}.

In \cite{witten}, Witten related the SYK model to the so-called colored tensor model, model originally introduced and extensively studied in the works of Gur\u au and collaborators (see the book \cite{book} and references within). This SYK-like tensor model is known today as the Gur\u au-Witten model.

In \cite{KT}, Klebanov and Tarnopolsky related the SYK model to another tensor model with an \ON symmetry\footnote{Let us emphasize that the symmetry of  tensor models studied here is  $O(N)^3$ and not  $O(N)$. Tensor models invariant under the action of the  $O(N)$ group have been recently studied in \cite{Kton}, \cite{carrozza2} and \cite{Carrozza3}.} ($N$ being here the size of the tensor in each entry). 
This model, whose purely combinatorial part was originally introduced
{by Carrozza and Tanasa} 
 in \cite{CTKT}, will be referred in this article 
{(as it is already referred in a part of the literature)}  
 as 
{the Carrozza-Tanasa-Klebanov-Tarnopolsky}  
 (CTKT), or the \ON-invariant model (although it is only one, with quartic interactions, of all possible \ON-invariant models).

Also in \cite{KT}, the SYK model was related to another tensor model having a $U(N)\times O(N)\times U(N)$ symmetry, model whose combinatorics is known as the multi-orientable (MO) tensor model, and studied in a series of several papers \cite{MO_original}, \cite{MO_expansion}, \cite{Fusy1}, \cite{donald} (see also the review paper \cite{MO_review}).

It is worth emphasizing here that all these tensor models are known to have a well-defined large $N$ limit (dominated by the same melonic graphs as the ones which dominate the large $N$ limit of the SYK model). Tensor models thus enlarge the club of models known to indeed have large $N$ limits, which already consisted in vector and matrix models -- see  the TASI lectures \cite{KT-review} for a recent review.

For the sake of completeness, let us also mention that the SYK model is know to be invariant under the action of the orthogonal group. However, SYK-like quantum mechanics with $Sp(N)$ symmetry has been recently investigated in \cite{Carrozza}.

\medskip

Thus, based on earlier works in the tensor model literature, reference \cite{KT} proposed two SYK-like models whose combinatorics are the MO model and the \ON-invariant model. We will analyze both from the purely combinatorial point of view of the Feynman graphs.

It is known that the MO graphs form a subset of \ON-invariant graphs \cite{CTKT}. In Theorem \ref{thm:Bijection} we identify this subset as the graphs with an orientable jacket. Jackets are key objects which are canonically obtained from the Feynman graphs (see \cite{MO_expansion} and \cite{CTKT}). They are ribbon graphs so one can make use of the notion of genus of these ribbon graphs\footnote{Let us also mention that, from a mathematical point of view, jackets can be seen as Heegaard surfaces canonically associated to tensor graphs \cite{Ryan}{.}}. It is actually the sum of the genera of the jackets which defines the degree. The latter is a positive half-integer which actually controls the $1/N$ expansion of tensor models.

Then we present a set of diagrammatic techniques to analyze the \ON-invariant graphs of the CTKT model. They converge to a strategy to identify all graphs of fixed degree. The strategy is to distinguish first the 2-particle-reducible (2PR) graphs from the 2-particle-irreducible (2PI) graphs. We recall that 2PR graphs are graphs with a 2-edge-cut and 2PI graphs are those without any 2-edge-cuts. The 2PR graphs of degree $\omega$ are easily obtained from those of smaller degrees. Then the strategy to find the 2PI graphs of fixed degree is to distinguish those with and without dipoles. Remarkably, all steps of the strategy except one only require to know the graphs of smaller degrees. The exception consists in finding the graphs of fixed degree which are 2PI and dipole-free. This step requires an independent analysis.

Graphs of degree 0 and of degree 1/2 are leading and next-to-leading order and consist in melonic graphs and tadpole graphs respectively. We apply our strategy to find all the graphs of degree 1, which are the next-to-next-to-leading order graphs. Theorem \ref{thm:Degree1} gives the unique 2PI, dipole-free graph of degree 1, then Theorem \ref{thm:Degree1AllGraphs} gives all the graphs of degree 1. We go as far as the next order in the $1/N$ expansion, which are the graphs of degree 3/2. Theorem \ref{thm:2PIDipoleFreeDegree3/2} shows that there is a unique, explicit graph of degree 3/2 which is 2PI and dipole-free. The other steps of the procedure to find all the graphs of degree 3/2 
{follow the exact same line of reasoning and are thus}
 left to the reader.

\medskip

The paper is organized as follows. As jackets and their orientability are
{a}
 key 
{ ingredient}  
 in our work, we present a quick review of the orientability of discrete surfaces in Section \ref{sec:Orientability}, both from the point of view of ribbon graphs and 
{of} 
 3-colored graphs, which are the two relevant representations of discrete surfaces here. Then we present the two SYK-like models in Section \ref{sec:Models}. We prove that MO graphs are \ON-invariant graphs with an orientable jacket in Section \ref{sec:Relation}.
Then in Section \ref{sec:tehnici} we present our diagrammatic toolbox for the analysis of \ON-invariant graphs, ending in {sub}section \ref{sec:Strategy} with our strategy to find all graphs of any given fixed degree. In Section \ref{sec:Degree0} we recall the graphs of degree 0 and 1/2. In Section \ref{sec:degre1}, we apply our strategy in detail and find all graphs of degree 1. 
Our last result is 
{exhibited} 
in Section \ref{sec:Degree3/2} which gives the unique graph of degree 3/2 which is 2PI and dipole-free.


\section{Short review of orientability for discrete surfaces} \label{sec:Orientability}

A ribbon graph is a graph whose edges and vertices are thickened as ribbons. In addition to vertices and edges, it has faces which are the connected components of the ribbon complement. A useful representation is in terms of 2-stranded graphs, where the ribbon is drawn as a couple of strands which delimits its edges. The faces are then identified as the closed strands.

A ribbon graph thus encodes a discrete surface, whose genus is given by Euler's formula. The standard representation of ribbon graphs is to draw all ribbon vertices as flat road crossings,
\begin{equation}
\begin{array}{c} \includegraphics[scale=.5]{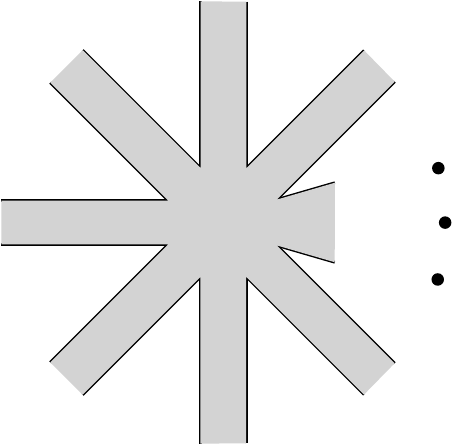} \end{array}
\end{equation}
Along an edge, the ribbon can be either flat too or it can be twisted,
\begin{equation}
\begin{array}{c} \includegraphics[scale=.5]{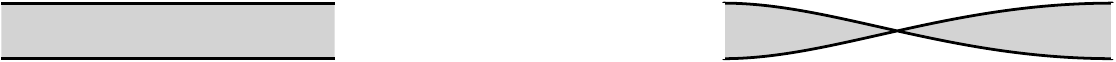} \end{array}
\end{equation}
The number of twists is modulo 2. A non-loop edge can be twisted and untwisted while taking the mirror image of one of its end vertices and twisting/untwisting the other incident non-loop edges. Two ribbon graphs are equivalent if they differ by a sequence of such operations. We refer to \cite{MoharThomassen} for all topological aspects of ribbon graphs, in particular the following theorem.

\begin{theorem} \label{thm:OrientableRibbon}
A ribbon graph without twist encodes an orientable surface.
\end{theorem}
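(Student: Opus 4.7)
The plan is to construct an orientation of the encoded surface directly from the ribbon graph picture. The surface $\Sigma$ is obtained by gluing the vertex disks to the edge ribbons along arcs on their boundaries, so it suffices to orient each vertex disk and each edge ribbon in a way that the induced orientations on every common gluing arc disagree. This is the standard local criterion for local orientations to patch together into a global one.

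I would begin by giving each vertex disk the counterclockwise orientation of the plane in which the graph is drawn, with all vertices represented as flat road crossings. This fixes, at each end of each edge, an induced orientation on the short arc along which the ribbon is attached. For each edge, which is topologically a rectangle $[0,1]\times[0,1]$, I would then choose one of its two orientations. The key local statement to verify is: for an untwisted edge, one can choose the ribbon orientation so that its induced orientation on each of the two gluing arcs is the opposite of the one coming from the corresponding vertex disk. This is precisely the content of the flat untwisted drawing, in which the two strands of the ribbon run parallel without crossing, so that the rectangle orientation is compatible with both adjacent vertex disks simultaneously; a twist would instead force the induced orientations at one end to agree rather than disagree, obstructing compatibility. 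Once this is established, the chosen orientations patch across every gluing arc and define a global orientation of $\Sigma$.

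The main obstacle I anticipate is the bookkeeping in the presence of loops (both ends of a ribbon on the same vertex) and parallel or nested edges at a common vertex, where one must check that independently fixing each ribbon's orientation never produces a conflict. The key observation is that distinct edges, even when sharing a vertex, are glued along disjoint boundary arcs, so the constraints are genuinely independent and the construction goes through. An equivalent but more economical way to organize the argument is to work in the 2-stranded representation and orient each strand so that the two strands bordering every ribbon run in opposite directions while the strand orientations circulate coherently around each vertex; the untwisted condition is exactly what makes these two requirements simultaneously satisfiable, and the induced orientations on the closed strands (the faces) then witness the orientability of $\Sigma$.
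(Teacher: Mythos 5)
The paper does not actually prove this statement: it is quoted from the literature, with a pointer to Mohar--Thomassen for all topological aspects of ribbon graphs. Your disk-and-band argument is the standard textbook proof of exactly this fact, and it is correct: the surface is the union of oriented vertex disks and edge rectangles glued along boundary arcs, local orientations patch precisely when the induced boundary orientations disagree on each gluing arc, and the untwisted condition is what lets you give every rectangle the planar (counterclockwise) orientation compatibly with both of its end disks -- including for loops, since the two attaching arcs of a loop are disjoint arcs of the same disk boundary. Two small points worth making explicit if you write this up: first, the ``standard representation'' is a drawing in the plane, not an embedding, so distinct ribbons may cross one another; this is harmless because the surface is the abstract identification space and each individual flat ribbon still inherits the plane's orientation. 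Second, your parenthetical that a twist ``obstructs compatibility'' only obstructs this particular choice of local orientations, not orientability of the surface itself -- the converse of the theorem is false, as the paper itself notes that twists on non-loop edges can be removed by vertex flips -- but since you never claim the converse this does not affect the proof. Your closing reformulation in terms of coherently oriented strands is the same argument read off on the faces and is equally valid.
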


Another representation of discrete surfaces which we will encounter is as edge-colored, 3-valent graphs, whose edges carry the colors $\{0,1,2\}$ and every vertex has exactly every color incident once. A discrete surface is reconstructed by using as faces the bicolored cycles of the graph, i.e. the cycles with colors $\{0,1\}$, $\{0, 2\}$, $\{1,2\}$. Notice that any ribbon graph can be transformed into an edge-colored graph by barycentric subdivisions, both encoding the same surface. In the case of edge-colored graph-encoded surface, the theorem about orientability is the following \cite{BipartitenessItalian}.

\begin{theorem} \label{thm:OrientableGraph}
An edge-colored graph is bipartite if and only if the corresponding surface is orientable.
\end{theorem}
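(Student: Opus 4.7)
The plan is to reduce the statement to Theorem \ref{thm:OrientableRibbon} on ribbon graphs by setting up a local dictionary between a bipartition of the $3$-colored graph and a twist-free ribbon thickening. At every $3$-valent vertex, the three distinctly colored incident edges admit exactly two cyclic orderings, $(0,1,2)$ and $(0,2,1)$, so I would label every vertex ``$+$'' or ``$-$'' according to which cyclic order its three edges appear in. With this labeling, the theorem reduces to showing that the $\pm$ assignment can be made globally consistent, in the sense that two vertices joined by an edge must carry opposite labels.

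The main step is a single local check at a generic edge $e$ of color $c$ joining a vertex $v$ to a vertex $w$. Around $e$ there sit exactly two bicolored faces, one of each type $\{c,c'\}$ with $c'\ne c$, and these two faces correspond to the two strands of the ribbon lying along $e$. By drawing the three strands at $v$ and the three strands at $w$ in their prescribed cyclic orders and following the two strands along $e$, I would verify that the ribbon on $e$ is flat precisely when $v$ and $w$ carry opposite $\pm$-labels, and twisted otherwise. This local lemma is the combinatorial heart of the statement.

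Granting this, both implications follow quickly. If the graph is bipartite, I would assign $+$ to one class and $-$ to the other; by the local lemma every ribbon edge is then flat, and Theorem \ref{thm:OrientableRibbon} immediately gives orientability of the encoded surface. Conversely, if the surface is orientable I would pick a global orientation, read off at every vertex the induced cyclic ordering of its three colored edges, and obtain a $\pm$ labeling; the local lemma then forces adjacent vertices to have opposite labels, so the $\pm$ partition is a proper $2$-coloring and the graph is bipartite.

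The only real obstacle is the local lemma about twist versus matching of cyclic orders across a single edge. This is a small pictorial case analysis rather than a conceptual difficulty: once the convention for reading a cyclic order off the embedding is fixed, it reduces to comparing the two natural ways of connecting the three strands at $v$ to the three strands at $w$ along a color-$c$ edge, after which the identification of the two cases with ``flat'' and ``twisted'' ribbons is immediate.
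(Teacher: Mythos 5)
Your argument is correct, and in fact the paper does not prove this statement at all: it is imported from the literature (the citation after Theorem \ref{thm:OrientableRibbon}), so there is no in-paper proof to compare against. Your construction is, however, exactly the one the paper uses implicitly in Remark \ref{thm:CanonicalEmbedding}: assigning the cyclic order $(012)$ to one class and $(021)$ to the other and thickening gives a twist-free ribbon graph whose faces are the bicolored cycles, which is your forward direction verbatim. Your local lemma is genuinely the crux and it does check out: once the cyclic orders at the two endpoints of a color-$c$ edge are fixed, the strand connections are forced by the requirement that each strand close up into a bicolored cycle, and the forced connection is flat precisely when the two endpoints carry opposite cyclic orders. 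The only point where you should be slightly more careful is the converse: you invoke not Theorem \ref{thm:OrientableRibbon} itself but its converse, namely that an orientation of the surface induces coherent rotations at the vertices with respect to which every ribbon edge is untwisted. That is standard (it amounts to coherently orienting the discs glued along the bicolored cycles), but since the paper states Theorem \ref{thm:OrientableRibbon} only in one direction, you should state and justify the converse explicitly rather than treat it as already available.
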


An illustration is given in Figure \ref{fig:Orientability}. 
\begin{figure}
\includegraphics[scale=.5]{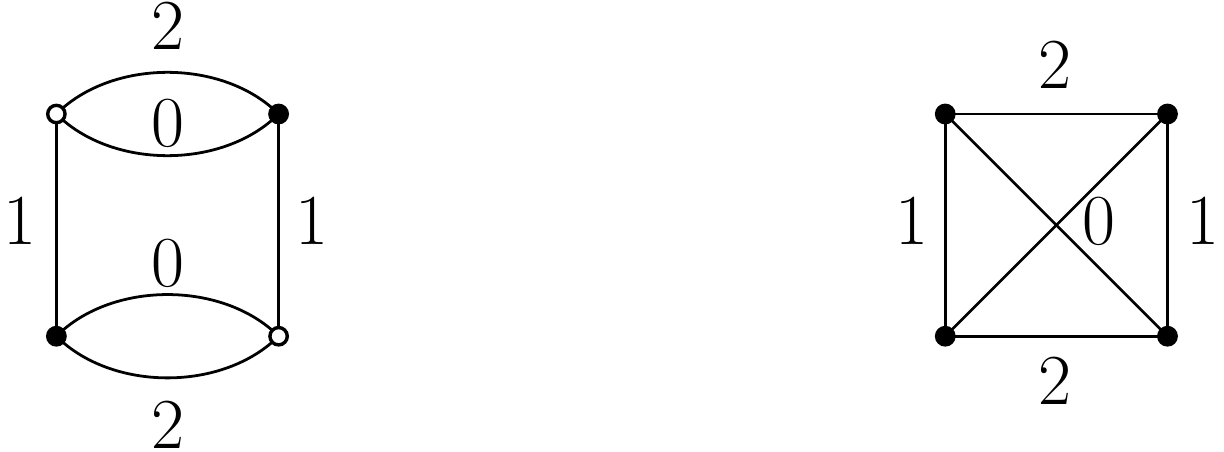}
\caption{\label{fig:Orientability} On the left is a bipartite 3-colored graph representing the sphere and on the right a non-bipartite graph representing the projective plane.}
\end{figure}

\begin{remark} \label{thm:CanonicalEmbedding}
\emph{Canonical embedding.} In the bipartite case, there is a natural way to construct 
{a}  
ribbon graph of the same genus, for which the bicolored cycles become the faces. 
{This graph} 
 is obtained by using a cyclic ordering of the colors, say $(012)$, at white vertices and the other ordering, $(021)$ at black vertices. In practice, one can draw the edges around a white vertex with the colors $(012)$ counter-clockwise, and the other way around for black vertices. Thickening this embedded graph gives a ribbon graph without twist.

Any non-bipartite colored graph, of half-integer genus $g$, can still be embedded without crossing on a surface of half-integer genus $g$ in such a way that the bicolored cycles become the boundary of faces homeomorphic to discs and such that the union of those faces is the surface itself. In practice, one can always start from representing the bicolored cycles with colors $\{1,2\}$ as polygons, with faces on the interior. Then the color 0 is added on the outside of the polygons of colors $\{1,2\}$.
\end{remark}

\section{Two SYK-like tensor models} \label{sec:Models}
In this section, we present the two SYK-like tensor models studied thereafter, namely the $O(N){^3}$-invariant model  and the multi-orientable model (MO model). For each model, we give a brief presentation of its properties and discuss the structure of its Feynman graphs. The two models 
{are}
 studied 
for rank three tensor fields.

\subsection{The \ON-invariant model, or { the} CTKT model} \label{sec:ON}

As already mentioned in the introduction, the $O(N)^3$-invariant model was initially introduced in \cite{CTKT} in $2015$; in $2016${,}  in \cite{KT}, this model was related to the SYK model. This model consist of a single real fermionic tensor $\psi_{ijk}$ of size $N$ with $O(N)$ symmetry on each of its indices:
\begin{equation}
\psi_{abc} \rightarrow \psi'_{a'b'c'} = {O_1}_{a'}^{a}{O_2}_{b'}^{b}{O_3}_{c'}^{c} \psi_{abc}, \hspace{20pt} O_i \in O(N){.}
\label{symmetry_CTKT}
\end{equation}
With this symmetry, we can build two different invariants of order 4 in $\psi_{ijk}$ depending on how indices of the tensor are contracted. We can either have a {\bf tetrahedral} interaction,
\begin{equation}
I_1 = \psi_{abc}\psi_{ade}\psi_{fbe}\psi_{fdc}
\label{tetra_int}
\end{equation}
or a ``pillow'', also known as {\bf melonic} interaction,
\begin{equation}
I_2 = \psi_{abc}\psi_{dbc}\psi_{aef}\psi_{def}.
\label{pillow_int}
\end{equation}
These interaction terms can be represented diagrammatically by edge-colored graphs with four vertices of degree $3$ given in Figure \ref{CTKT_bubbles}. This diagrammatic correspondence is such that each vertex represents a tensor and each incident edge represents an index. Each edge carries a color corresponding to its position on the tensor. An edge of color $i$ connecting two vertices denotes a summation on the $i$-th index between two tensors. This makes sure that the interactions are \ON-invariant. More generally, $O(N)^d$- and $U(N)^d$-invariant polynomials are called bubbles in the literature.

Notice that $I_1$ is fully symmetric under permutations of the colors, while there are in fact three different versions of $I_2$ (where the two dipoles can be connected by edges of color 1, or 2, or 3).

\begin{figure}[ht]
\centering
\includegraphics[scale=0.65]{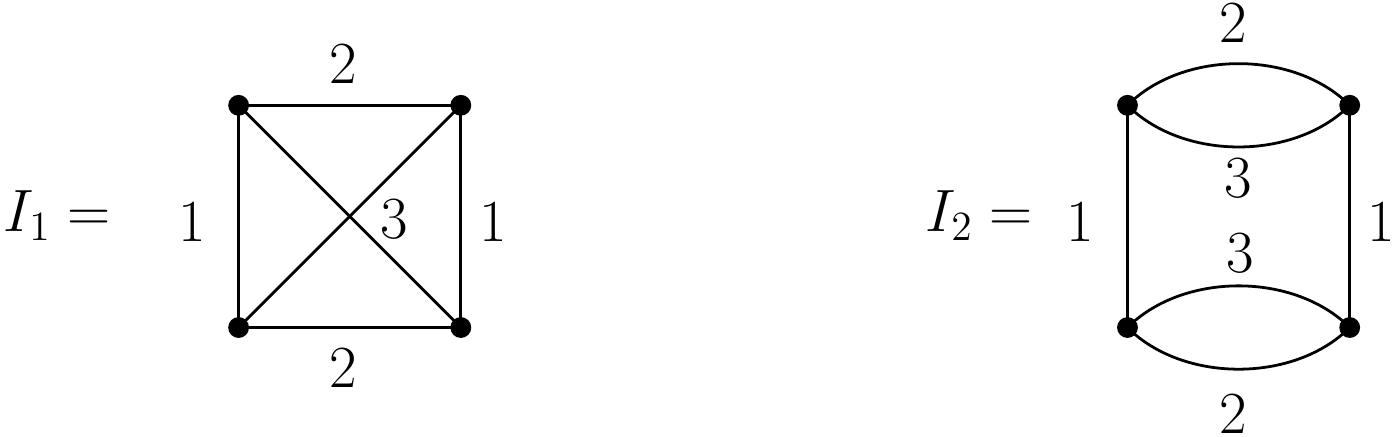}
\caption{On the left, the tetrahedral bubble. On the right, a melonic bubble with 4 vertices.}
\label{CTKT_bubbles}
\end{figure}

In this paper{,}  as in \cite{KT}, we only study the tetrahedral interaction. The SYK-like 
{$(0+1)-$dimensional}
action writes
\begin{equation}
\label{ctkt}
S_{CTKT}=\int dt \left( \frac{\imath}{2} \psi_{abc} \partial_t \psi_{abc}+ \frac{\lambda}{4}\psi_{abc}\psi_{ade}\psi_{fbe}\psi_{fdc}\right).
\end{equation}

Therefore, a Feynman graph of the model is represented as bubbles connected by propagators, which are represented as dashed edges, also referred to as $0$ colored edge. Therefore, all the vertices of the Feynman graph are of valency $4$ and have exactly one half-edge of each color $i\in\{0,1,2,3\}$. 

\begin{figure}[ht]
\centering
\includegraphics[scale=0.55]{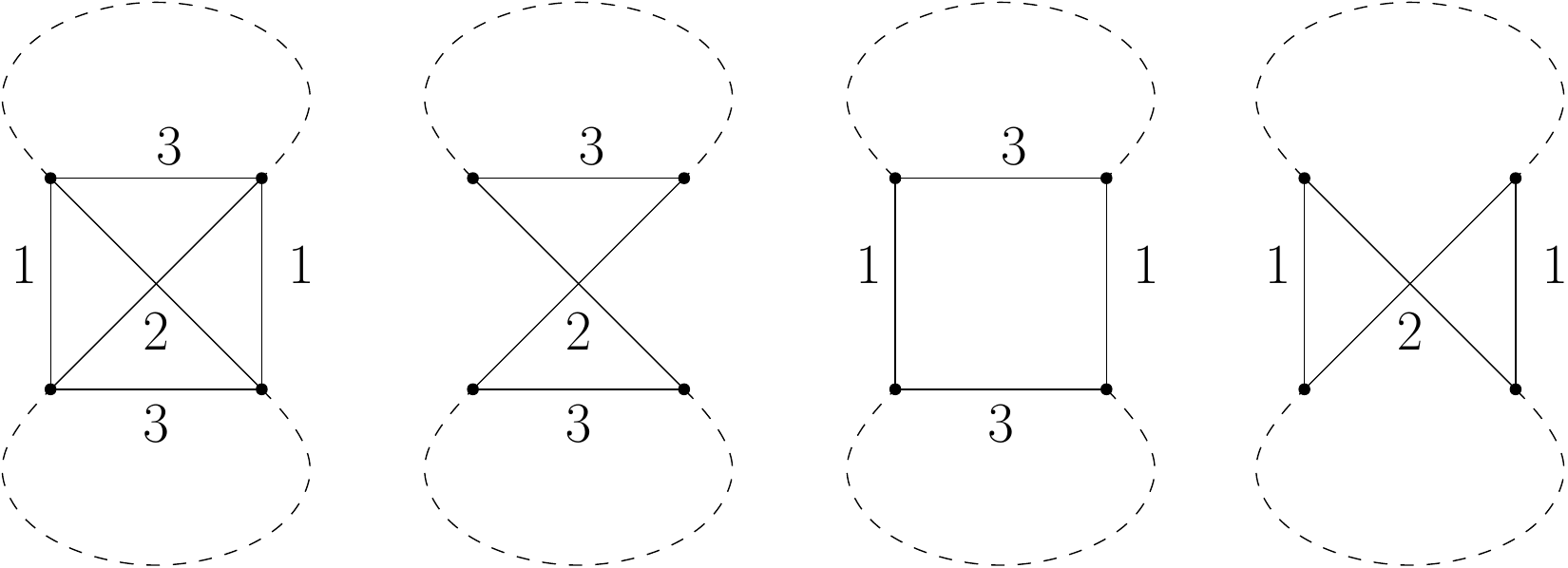}
\caption{An exemple of a Feynman graph of the CTKT model and its three jackets}
\label{CTKT_jacket}
\end{figure}

The {\bf jacket} $J_i$ of a graph $G$, for $i\in\{1,2,3\}$, is the graph obtained by deleting all edges of color $i$. If $j, k$ denotes the complementary colors, $\{i, j, k\}=\{1,2,3\}$, then $J_i$ is a 3-colored graph whose vertices are those of $G$ and have degree 3, and whose edges have colors $0, j, k$. An example of a Feynman graph and its three jackets is given in Figure \ref{CTKT_jacket}. Due to the structure of the tetrahedral bubble, all the jackets of a connected graph are connected. Therefore, a jacket represents a connected surface, whose genus is given by the Euler characteristic formula,
\begin{equation}
\chi_i = 2-2g_i = V - E(J_i) + F(J_i),
\end{equation}
where $g_i$ can be a half-integer. Here $V$ is the number of vertices of $G$, $E(J_i)$ the number of edges of $J_i$ and $F(J_i)$ the number of faces of $J_i$. For example, the jackets of Fig. \ref{CTKT_jacket} have $V=4$, $E=6$.

Since $J_i$ has vertices of degree 3, $E(J_i) = 3V/2$. Moreover, for graphs encoding surfaces, the faces correspond to the bicolored cycles, with color pairs $\{0,j\}$, $\{0, k\}$ and $\{j,k\}$. Those bicolored cycles can be read either on $J_i$ or directly on $G$. Notice that the bicolored cycles with colors $\{j,k\}$ lie within the bubbles and there is exactly one for each bubble. Denoting $F_i, F_j, F_k$ the number of bicolored cycles with colors $\{0,i\}$, $\{0,j\}$, $\{0,k\}$, we have
\begin{equation}
F(J_i) = F_{j} + F_{k} + V/4.
\end{equation}
We then get
\begin{equation} \label{Euler_charac}
2-2g_i = F_j + F_k - n,
\end{equation}
where $n=V/4$ is the number of bubbles.

As mentioned above, the bicolored cycles of a jacket represent the faces of the corresponding discrete surface. Those bicolored cycles are also objects of the graph $G$ and we will still call them faces for $G$. In particular, we call a {\bf face of color $i$} of $G$ a bicolored cycle with colors $\{0,i\}$. A face has even length, with an equal number of edges of both its colors. The length of a face is defined by this number.

The degree of the graph which organizes the perturbative expansion of the model is given by the sum of the genera of the jackets,
\begin{equation}
\omega = g_1 + g_2 + g_3,
\label{deg_CTKT}
\end{equation}
from which we know that $\omega \geq 0$.
Using \eqref{Euler_charac} for all three colors, one finds
\begin{equation}
\omega = 3 + \frac{3}{2}n - \left(F_{1} + F_{2} + F_{3}\right).
\label{deg_CTKT2}
\end{equation}

It was proved in~\cite{CTKT} that the graph of degree $0$ of this model were exactly the melonic graphs. 

\subsection{The MO model} \label{subsec:1-2}

{As already mentioned in the introduction}, the multi-orientable model was first introduced within a group field theory context~\cite{MO_original}. 
A detailed review of this model can be found in~\cite{MO_review}.
This type of model was related to the SYK model in \cite{KT}.

This model has 
a complex Fermionic tensor field $\psi_{ijk}$. However
 the symmetry associated to this tensor is 
${U}(N) \times {O}(N) \times {U}(N)$. This stems from the interaction term of this model which is:
\begin{equation}
I = \psi_{abc}\bar{\psi}_{ade}\psi_{fbe}\bar{\psi}_{fdc}
\label{MO_int}
\end{equation}
In this model, the interaction term is represented as a 4-valent vertex, where a 
{field} 
$\psi$ is an incident half-edge decorated with the sign $+$ and a 
{field} 
$\bar{\psi}$ is an incident half-edge with the sign $-$. To represent the indices, it is customary to blow up the half-edges into three strands, one for each index, and then connect the strands according to the contraction pattern,
\begin{equation}
I = \begin{array}{c} \includegraphics[scale=0.40]{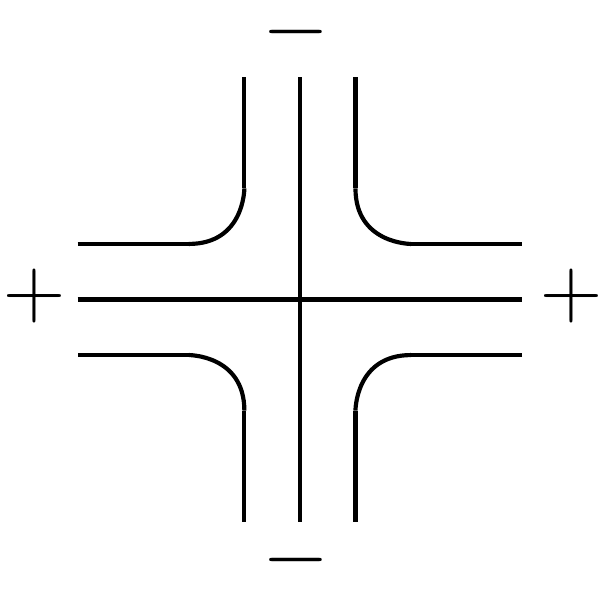} \end{array}
\end{equation}


The 
{$(0+1)-$dimensional}
SYK-like action 
{ of the MO model} 
writes:
\begin{equation}
S_{MO}=\int dt \left(\imath \bar  \psi_{abc}  \partial_t \psi_{abc} + \frac{\lambda}{2} \psi_{abc}\bar{\psi}_{ade}\psi_{fbe}\bar{\psi}_{fdc}\right).
\end{equation}

In the rest of this subsection we follow \cite{MO_expansion}
{ and \cite{Fusy2}}.
The propagator of this model is a 3-stranded edge which propagates each index of $\phi$, connecting two half-edges of different signs. When arriving onto a sign $+$ at a vertex, we call the {\bf left} strand the one which goes to the left, the {\bf right} strand the one which goes to the right and the {\bf straight} strand the one going straight. Those notions are preserved by the propagators so that it makes sense to define left, right and straight strands on the whole graph, denoted $L$, $R$, $S$ respectively.


Denote 
{ by} 
$F$ the number of closed strands, ${F=}F_L + F_S + F_R$. The degree of the MO model for a connected graph with $n$ vertices is
\begin{equation}
\omega_{MO} = 3 + \frac{3}{2}n - F{.}
\label{deg_MO}
\end{equation}
Note that this formula is similar to the one obtained in the case of the CTKT model, except for the replacing of the faces of color 1, 2, 3 with closed strands of type $L$, $R$, $S$. 

In fact, the MO model also has a notion of jackets. The jacket $J_i$ is the 2-stranded graph obtained by deleting the type $i\in\{S,L,R\}$ strand. This leads to a similar formula as in the \ON-invariant cas, in term of the genus of the jackets{:}
\begin{equation}
\omega_{MO} = g_L + g_S + g_R.
\label{deg_MO2}
\end{equation}
Indeed, 
{ recall that} 
a 2-stranded graph, also called a fat graph or a ribbon graph, represents a discrete surface whose faces are the interior of the closed strands and whose genus is given by Euler formula.


\section{Relating MO graphs to \ON-invariant graphs} \label{sec:Relation}


In this section we first recall the construction from 
{ \citep{CTKT}} 
 allowing to associate an \ON-invariant graph to any MO graph. The reciprocal proposition is not true - counter-examples are easily found and the class of \ON-invariant graphs is strictly larger than the class of MO graphs. We nevertheless give a sufficient condition under which the reciprocal holds. It can be formulated topologically, as a jacket being orientable, or combinatorially, as the genus being an integer.

\begin{theorem} \label{thm:Bijection}
There is an explicit bijection between \ON-invariant graphs with a marked, orientable jacket and MO graphs supplemented with a color in $\{1,2,3\}$. 
{ This bijection} 
maps bubbles to vertices and faces to closed strands.
\end{theorem}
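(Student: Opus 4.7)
The plan is to construct explicit inverse maps $\Phi\colon$ (MO graph, color $i\in\{1,2,3\}$) $\to$ (CTKT graph, marked orientable jacket $J_i$) and $\Psi$ in the opposite direction. The key idea is that the $\pm$ labels of MO correspond to a bipartition of CTKT vertices, which by Theorem~\ref{thm:OrientableGraph} is equivalent to orientability of a jacket.

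For $\Phi$, blow up each MO vertex into a tetrahedral CTKT bubble of four vertices: the four MO half-edges become the four color-$0$ half-edges of the bubble, and the internal S, L, R strands of the MO vertex become bubble edges of colors $i$, $j$, $k$ respectively (with $\{i,j,k\}=\{1,2,3\}$ under a fixed convention for the ordering of $j,k$). Each MO propagator becomes a color-$0$ edge. The marked jacket $J_i$ consists of the color-$0$, $j$, and $k$ edges. By definition of MO, the color-$0$ (propagator) edges and the internal L, R strands of each vertex connect a $+$ half-edge to a $-$ one (only S strands can link equal signs). Hence every edge of $J_i$ crosses the $\pm$ partition, so $J_i$ is bipartite, and Theorem~\ref{thm:OrientableGraph} gives orientability.

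For the reverse map $\Psi$, contract each CTKT bubble to a single $4$-valent vertex, sending color-$i,j,k$ edges to internal S, L, R strands and color-$0$ edges to MO propagators. The bipartition of $J_i$ (guaranteed by orientability through Theorem~\ref{thm:OrientableGraph}) yields a $\pm$ assignment to the MO half-edges. The essential local check is that this restricts to an MO-compatible sign pattern on each bubble: the $J_i$-subgraph inside a bubble is a $4$-cycle alternating colors $j$ and $k$, whose bipartite $2$-coloring places two $+$ and two $-$ vertices alternately around the cycle, producing exactly the $\psi\bar\psi\psi\bar\psi$ pattern of the MO interaction. Global consistency of signs across propagators follows from the global bipartiteness of $J_i$. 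The maps $\Phi$ and $\Psi$ are then inverses by construction.

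For the face/strand correspondence, a color-$\ell$ face of the CTKT graph is a $\{0,\ell\}$-bicolored cycle, alternating color-$0$ edges (MO propagators, which preserve each strand type) and color-$\ell$ edges (an internal L, R, or S strand according as $\ell\in\{j,k,i\}$); it therefore traces out exactly one closed L, R, or S strand of the corresponding MO graph. The main obstacle will be the local verification in paragraph three: one must check that the two allowed bipartitions of the local $4$-cycle of $J_i$ inside a bubble correspond to the two MO sign patterns related by the global $\psi\leftrightarrow\bar\psi$ flip (a symmetry of the MO model), and that the convention identifying $\{j,k\}\leftrightarrow\{L,R\}$ is applied consistently across all bubbles. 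Once this is settled the bijection follows.
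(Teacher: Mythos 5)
Your proof is correct and follows essentially the same route as the paper: both rest on identifying the MO $\pm$ signs with a bipartition of the marked jacket, and your reverse map (orientability of $J_i$ $\Rightarrow$ bipartiteness via Theorem~\ref{thm:OrientableGraph} $\Rightarrow$ sign assignment, with the local check that the $\{j,k\}$ $4$-cycle inside each bubble alternates signs while the color-$i$ edges join equal signs) is exactly the paper's argument. The only divergence is in the forward direction, where the paper proves orientability of the straight jacket by observing it is an untwisted ribbon graph and invoking Theorem~\ref{thm:OrientableRibbon}, whereas you exhibit the bipartition of $J_i$ directly from the $\pm$ signs and again use Theorem~\ref{thm:OrientableGraph}; both are valid, yours treats the two directions more symmetrically, and the residual ambiguity you flag (the two bipartitions of a connected $J_i$) is just the global $\psi\leftrightarrow\bar\psi$ conjugation, which the paper likewise leaves implicit.
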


\begin{proof}
We first build a map from MO graphs to a subset of \ON-invariant graphs. To do so, we identify the interaction of the MO model with a bubble of the \ON-invariant model in the following way
\begin{equation} \label{MappingMO-CTKT}
\begin{array}{c} \includegraphics[scale=.5]{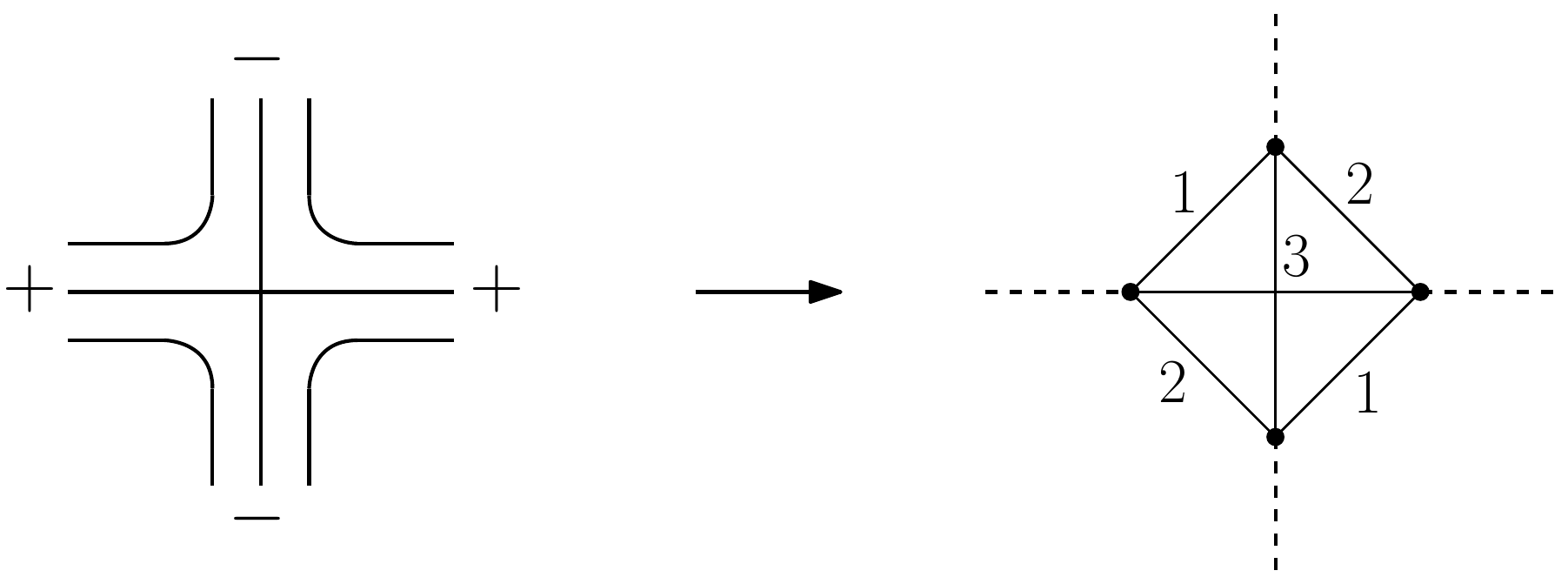} \end{array}
\end{equation}
which induces the following correspondence

{\centering
\begin{tabular}{|l|l|}
\hline
MO interaction & \ON-invariant bubble\\
\hline
Left strands & Edges of color 1\\
\hline
Right strands & Edges of color 2\\
\hline
Straight strands & Edges of color 3\\
\hline
Edges & Edges of color 0\\
\hline
\end{tabular}
\par}
Propagators then extends this correspondence to the whole graph, so that left strands become faces of color 1, right strands faces of color 2 and straight strands faces of color 3. If $G$ is an MO graph, then denote $\tilde{G}$ the corresponding \ON-invariant graph. Moreover if $i\in\{1,2,3\}$ is a color, define $\tilde{G}_i$ by exchanging the colors $i$ and 3 in $\tilde{G}$.

We now show that the jacket $J_i$ in $\tilde{G}_i$ is orientable. By definition, this is the same as $J_3$ being orientable in $\tilde{G}$. Due to the correspondence we have just described, this is in turn equivalent to the jacket $J_S$ being orientable in $G$. The latter statement is now easily proved, as $J_S$ in $G$ is a 2-stranded graph obtained by removing all straight strands. This means that it is represented as a ribbon graph with 4-valent ribbon vertices and ribbon edges which do not twist the ribbons, 
\begin{equation}
\begin{array}{c} \includegraphics[scale=.5]{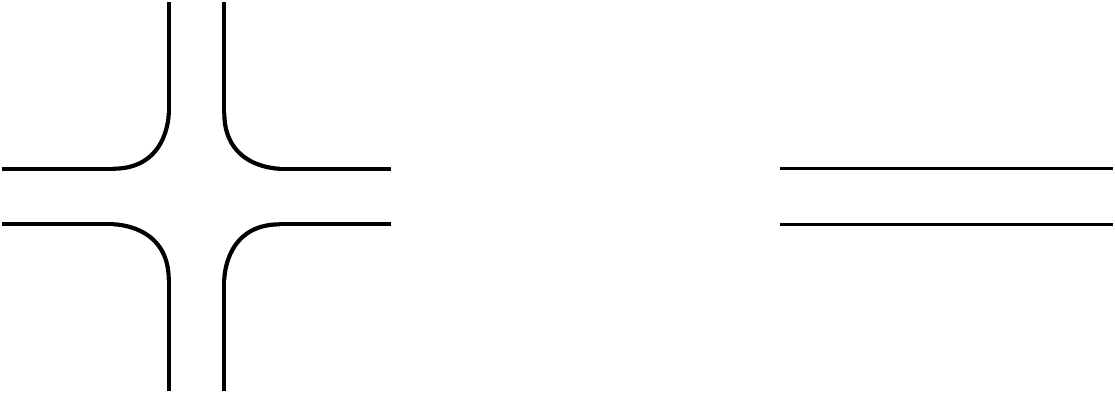} \end{array}
\end{equation}
These local rules generate orientable surfaces, as recalled in Theorem \ref{thm:OrientableRibbon}. This means that $J_i$ in $\tilde{G}_i$ is orientable.

Next we invert the map starting from $\tilde{G}_i$ an \ON-invariant graph with orientable, marked jacket $J_i$ for $i\in\{1, 2, 3\}$. We exchange the color $i$ with 3 and keep $i\in\{1,2,3\}$ as additional data for the MO graph $G$ that we are going to get. We thus consider $\tilde{G}$ an \ON-invariant graph whose jacket $J_3$ is orientable. The difficulty in inverting \eqref{MappingMO-CTKT} is to find how the signs $+/-$ can appear.

This is obviously due to orientability of $J_3$. The latter is an edge-colored graph, with colors $\{0,1,2\}$. As recalled in Theorem \ref{thm:OrientableGraph}, its orientability is equivalent to the edge-colored graph being bipartite. The jacket $J_3$ is thus bipartite. We color its vertices, say, black and white. Since the vertices of $G$ and $J_3$ are the same, one obtains a coloring of the vertices of $G$ ($G$ is not bipartite however). One can then apply the following mapping
\begin{equation}
\begin{array}{c} \includegraphics[scale=.5]{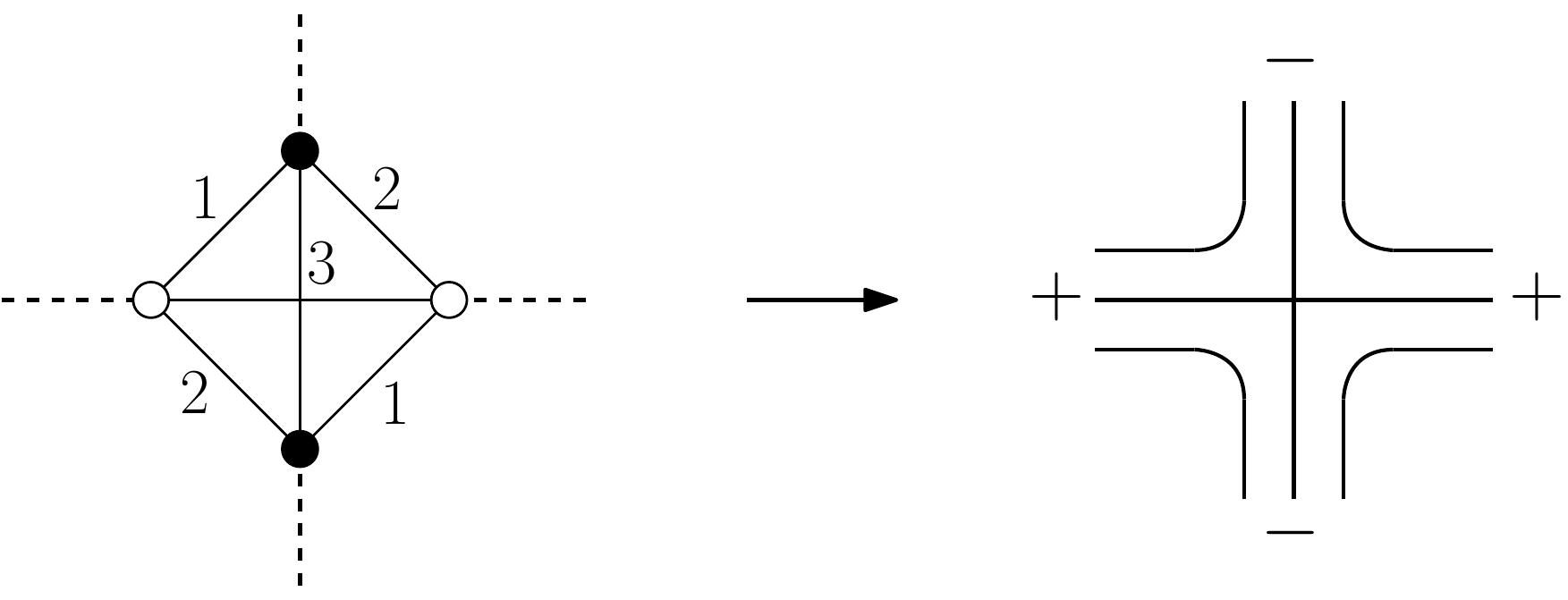} \end{array}
\end{equation}
In other words, white/black vertices inherited from orientability become the $+/-$ signs we were looking for. Edges of color 0 connect white to black vertices in $G$. This implies that $+$ can only connect to $-$ and the other way around. This way, $\tilde{G}$ is mapped to a{n} MO graph $G$, with the color $i$ of the marked jacket needing to be stored in 
addition with $G$.
\end{proof}

\section{Diagrammatic techniques for \ON-invariant graphs} \label{sec:tehnici}

In this section we develop explicit diagrammatic techniques which we use to study the graphs of degrees $1$ and $3/2$ of the \ON-invariant tensor model.

\subsection{2-edge-cuts}
\label{sec:EdgeCuts}

Consider $G$ a 2-particle-reducible (2PR) graph, i.e. a graph with a 2-edge-cut: a pair of edges $\{e, e'\}$ of color 0 whose removal disconnects $G$,
\begin{equation} \label{2EdgeCut}
G = \begin{array}{c} \includegraphics[scale=.4]{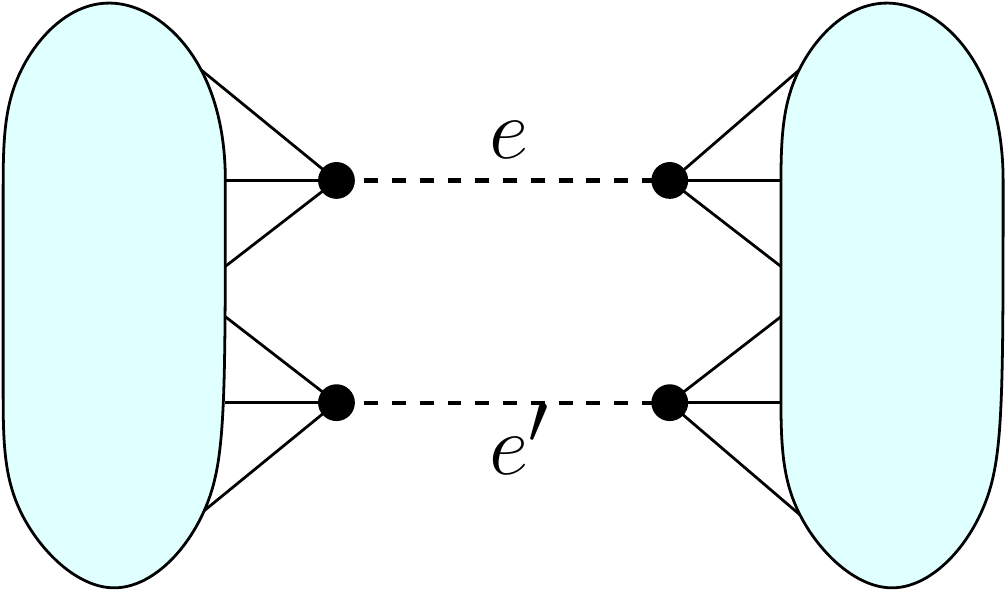} \end{array}
\end{equation}
There is a natural flip operation which turns $G$ into a pair of graphs,
\begin{equation} \label{2EdgeCutDisconnected}
G_L = \begin{array}{c} \includegraphics[scale=.4]{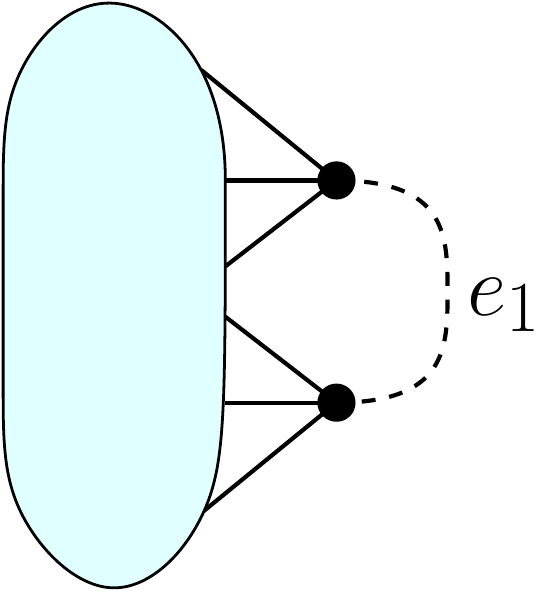} \end{array} \hspace{2cm} G_R = \begin{array}{c} \includegraphics[scale=.4]{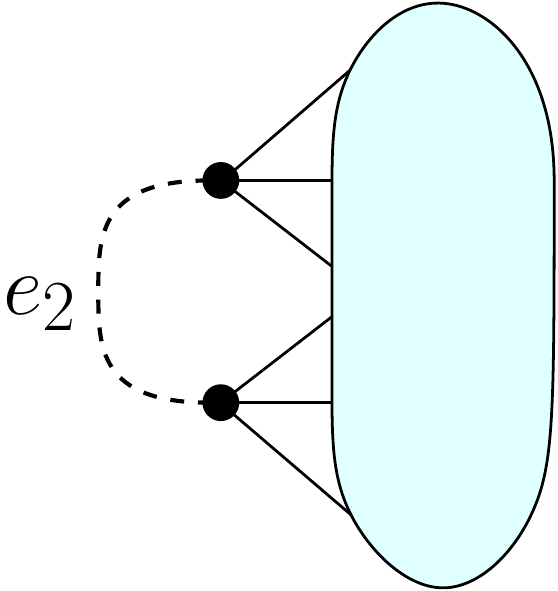} \end{array}
\end{equation}
by cutting $e, e'$ into half-edges and gluing them as above.

The following proposition is well-known.

\begin{proposition}
With the above notations
\begin{equation} \label{DegreeAddition}
\omega(G) = \omega(G_L) + \omega(G_R).
\end{equation}
This is the additivity of the degree for 2PR graphs.
\end{proposition}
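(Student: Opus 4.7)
The plan is to compute $\omega(G)$ and $\omega(G_L)+\omega(G_R)$ via formula \eqref{deg_CTKT2}, $\omega = 3 + \frac{3}{2}n - (F_1+F_2+F_3)$, and verify additivity of each ingredient. The bubble count $n$ is immediately additive, since the flip only rewires the two color-$0$ edges $e,e'$ and leaves every bubble intact: $n(G)=n(G_L)+n(G_R)$. All of the content therefore lies in tracking how the faces of each color split between $G_L$ and $G_R$.

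Label the endpoints of $e$ as $u$ (left) and $v$ (right), and those of $e'$ as $u'$ and $v'$. Because removing $\{e,e'\}$ disconnects $G$ and the cut involves only color-$0$ edges, every bubble and every edge of colors $1,2,3$ sits entirely on one side. Fix $i\in\{1,2,3\}$ and partition the faces of $G$ of color $i$ into those supported entirely on the left, those supported entirely on the right, and those that traverse at least one of $\{e,e'\}$; denote their counts by $F_i^L$, $F_i^R$, and $F_i^{\mathrm{cross}}$.

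The key step is to prove $F_i^{\mathrm{cross}}=1$ for every color $i$. In the edge-colored representation, each color-$0$ edge belongs to exactly one bicolored face of colors $\{0,i\}$, because at every vertex a bicolored cycle uses the single color-$0$ half-edge at most once. If such a face traverses $e$, then it passes to the opposite side and must return via a color-$0$ cut edge; since $e$ is used only once and $\{e,e'\}$ are the only color-$0$ edges crossing the cut, the return is necessarily through $e'$. Hence the two strands of color $i$ at $e$ and at $e'$ lie on the same face, and this face is unique. This gives $F_i(G)=F_i^L+F_i^R+1$.

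Under the flip, the left- and right-supported faces survive intact in $G_L$ and $G_R$ respectively, while the left segment of the unique crossing face of color $i$ closes up via the new color-$0$ edge joining $u$ to $u'$ to form exactly one new face in $G_L$; symmetrically on the right. Thus $F_i(G_L)=F_i^L+1$ and $F_i(G_R)=F_i^R+1$. Summing over the three colors yields $\sum_i F_i(G_L)+\sum_i F_i(G_R)=\sum_i F_i(G)+3$, and plugging this together with additivity of $n$ into \eqref{deg_CTKT2} gives $\omega(G_L)+\omega(G_R)=\omega(G)$. The main obstacle is the crossing-face uniqueness argument; once that is secured, the remainder is bookkeeping inside the degree formula.
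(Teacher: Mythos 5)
Your proof is correct and follows essentially the same route as the paper: both establish that for each color $i$ the same (unique) face of color $i$ traverses both $e$ and $e'$, deduce $F(G_L)+F(G_R)=F(G)+3$, and conclude via the degree formula \eqref{deg_CTKT2}. Your write-up merely makes explicit the uniqueness-of-the-crossing-face argument that the paper states in one line.
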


\begin{proof}
Recall that each edge of color $0$ contributes to exactly one face of each color 1, 2, 3. 
{ Since $G$} 
 is 2PR, it is the same face of color $i$ which goes along $e$ and $e'$ for all $i=1,2,3$. However{,}  the{ re} are different faces along $e_1$ and $e_2$ since they live in different connected components. 
{ One has:}
\begin{equation}
F(G) = F(G_L) + F(G_R) - 3.
\end{equation}
The result 
{ then follows from} 
 the formula \eqref{deg_CTKT2} for the degree.
\end{proof}

Therefore, if one is interested in finding all graphs at a fixed value $\omega$ of the degree, it is possible to distinguish the cases of 2PR and 2PI graphs. The 2PR graphs of degree $\omega$ are given by 
\begin{itemize}
\item a 2PI graph of degree $\omega$ with insertions of 2-point graphs of vanishing degree ($\omega(G_R)=0$). Any such 2-point graph is obtained by recursive insertions of a fundamental graph called the \emph{melonic insertion},
\begin{equation} \label{MelonicInsertion}
\begin{array}{c} \includegraphics[scale=.5]{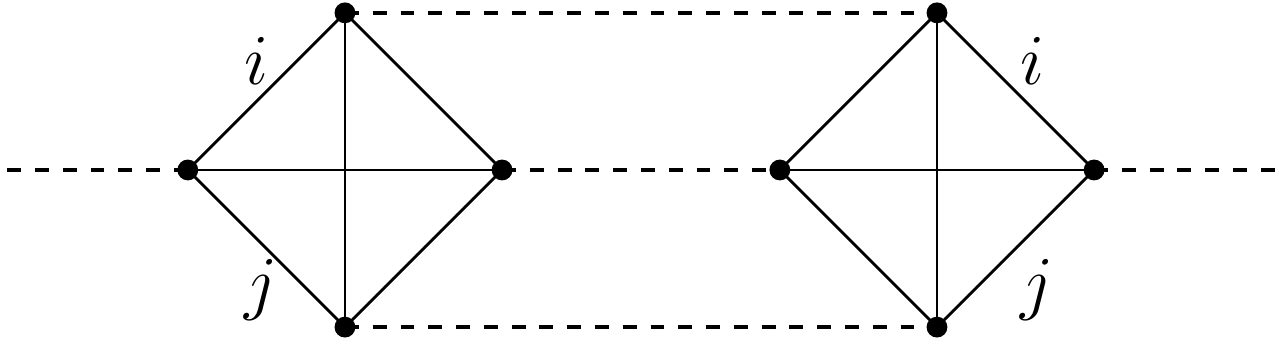} \end{array}
\end{equation}
to be inserted on any edge of color 0.
\item $G_L$ and $G_R$ both have degrees less than $\omega$.
\end{itemize}

%

\subsection{Dipole removals}
\label{sec:Dipoles}

Dipoles were introduced in~\cite{Fusy1} to facilitate the analysis of graphs at fixed degree in the MO model. Here we adapt the notion to the \ON-invariant model. Just 
{as} 
 in~\cite{Fusy1}, dipoles are defined as the minimal subgraphs which have an internal face of length two. There are no dipoles with a single bubbles. With two bubbles, 
{ one has three types of dipoles:}
\begin{equation}
\begin{array}{c} \includegraphics[scale=.5]{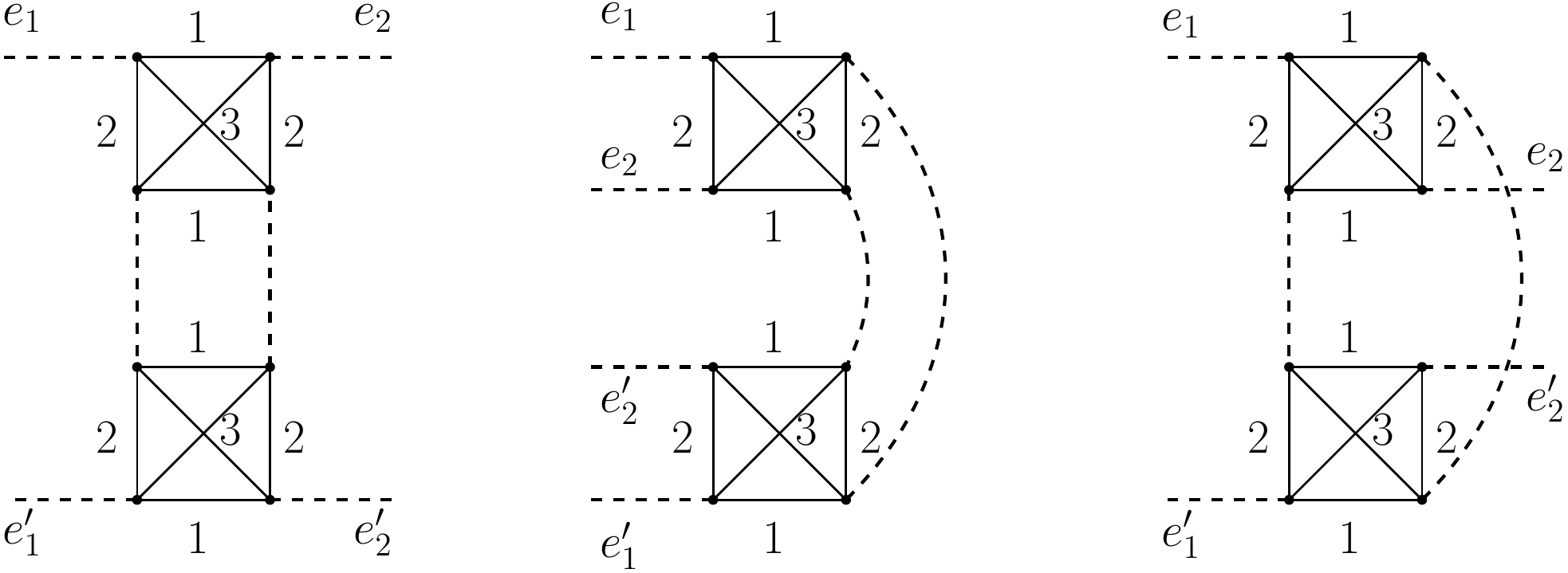} \end{array}
\end{equation}
We label them with the color of the their internal face of length 2: the left one has color 1, the middle one has color 2 and the right one has color 3.

If a 2PI graph contains a dipole, it can be removed while preserving connectedness\footnote{If $\{e_1, e_2\}$ (or $\{e'_1, e'_2\}$) forms a 2-edge-cut, then the dipole removal disconnects the graph.}
\begin{equation} \label{DipoleRemoval}
\begin{array}{c} \includegraphics[scale=.5]{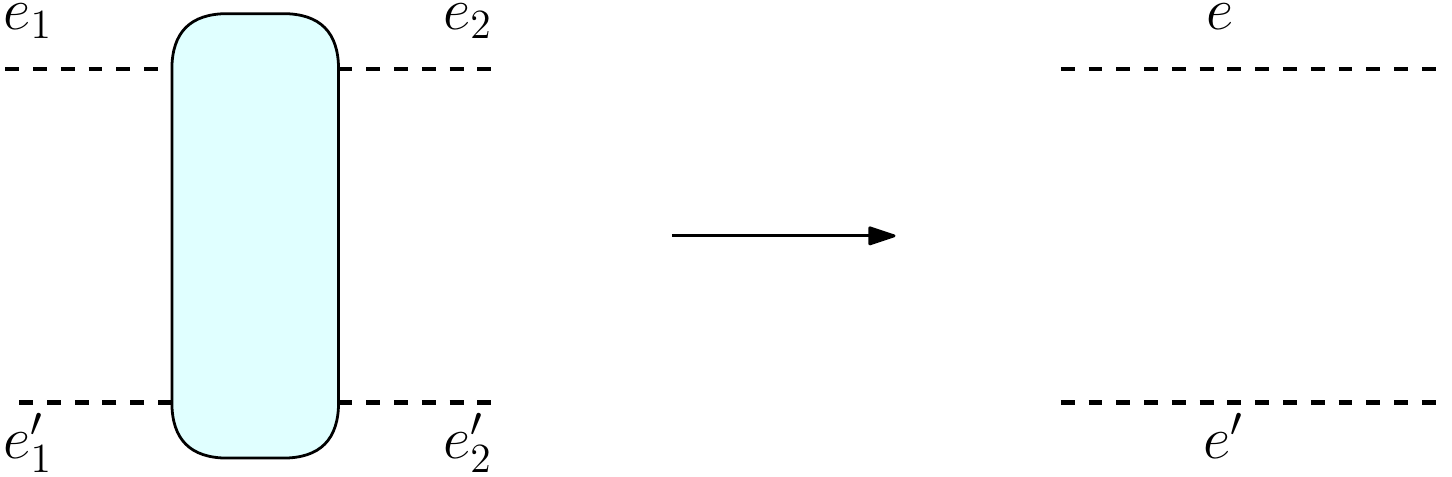} \end{array}
\end{equation}

\begin{proposition} \label{thm:Dipoles}
There are four types of dipole removals{, which are} described in the proof below. Three of them decrease the degree, by 4, 2 or 1, and the other one does not change it.
\end{proposition}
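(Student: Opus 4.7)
My plan is to reduce the proposition to a face-counting exercise via the degree formula \eqref{deg_CTKT2}. A dipole removal takes the number of bubbles $n$ to $n-2$, so $\omega_{\mathrm{old}} - \omega_{\mathrm{new}} = 3 - \Delta F$ with $\Delta F := F_{\mathrm{old}} - F_{\mathrm{new}}$; the announced degree drops $4, 2, 1, 0$ thus correspond to $\Delta F = -1, 1, 2, 3$, and the task reduces to exhibiting exactly four configurations that realize these four values.

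First I would unpack the local combinatorics around the dipole. For a color-$i$ dipole, I would label the four external color-$0$ half-edges $e_1, e_2, e'_1, e'_2$ and record, for each color $c \in \{1,2,3\}$, the matching $M_c^{\mathrm{int}}$ that the color-$c$ face structure induces on these four half-edges when traced through the dipole. The tetrahedral structure forces $M_i^{\mathrm{int}}$ to be the same-side matching $\{e_1 e_2,\, e'_1 e'_2\}$ (with the internal length-$2$ face of color $i$ closed by the unused color-$i$ edge in each bubble), and it further forces $M_j^{\mathrm{int}} = M_k^{\mathrm{int}}$ to be a single cross-bubble matching; the two possible values of this common matching distinguish the two intrinsic dipole types. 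The removal itself installs a new matching $M^{\mathrm{new}}$ on the same four half-edges via the two new color-$0$ edges.

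Applying the elementary identity $\#\,\mathrm{cycles}(M \cup M') = 2$ if $M = M'$ and $1$ otherwise to the matching $M_c^{\mathrm{ext}}$ induced on the four half-edges by the color-$c$ face structure of the complement of the dipole, one obtains
\begin{equation}
\Delta F_c = \#\,\mathrm{cycles}(M_c^{\mathrm{int}} \cup M_c^{\mathrm{ext}}) - \#\,\mathrm{cycles}(M^{\mathrm{new}} \cup M_c^{\mathrm{ext}}) + \delta_{c,i},
\end{equation}
the Kronecker $\delta$ encoding the destruction of the internal length-$2$ face. Summing over $c$ and enumerating the combinations of intrinsic dipole type with the admissible positions of $M_i^{\mathrm{ext}}$ relative to $\{M_i^{\mathrm{int}}, M^{\mathrm{new}}\}$ and of $M_j^{\mathrm{int}} = M_k^{\mathrm{int}}$ relative to $M^{\mathrm{new}}$ then yields the four claimed degree drops.

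The main obstacle will be the bookkeeping in this enumeration and, in particular, checking that the rigid constraint $M_j^{\mathrm{int}} = M_k^{\mathrm{int}}$ together with the prescribed form of $M^{\mathrm{new}}$ collapse the \emph{a priori} larger arithmetic range of $\sum_c \Delta F_c + 1$ down to exactly the four announced values; a sanity check against explicit small dipoles of each type would close the argument.
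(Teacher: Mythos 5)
Your reduction of the proposition to a matching calculus on the four external half-edges is sound, and it is essentially an algebraic repackaging of what the paper does: the paper's four pictures are precisely the possible positions of $(M_j^{\mathrm{ext}},M_k^{\mathrm{ext}})$ relative to $M_j^{\mathrm{int}}=M_k^{\mathrm{int}}$ and $M^{\mathrm{new}}$. Your identification of the internal matchings is correct (same-side for the dipole colour $i$, a common cross-bubble matching $P$ for the two other colours), $M^{\mathrm{new}}$ is the same-side matching $M_i^{\mathrm{int}}$ (this is forced by the fact that $F_i$ drops by exactly one in every case, and by the footnote saying the removal disconnects the graph when $\{e_1,e_2\}$ is a 2-edge-cut), and your cycle-counting formula for $\Delta F_c$ is right.

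The gap is that the enumeration you defer to the end is the entire content of the proposition, and it does not collapse to the four announced values. For $c\in\{j,k\}$ the bracket $\#\,\mathrm{cycles}(M_c^{\mathrm{int}}\cup M_c^{\mathrm{ext}})-\#\,\mathrm{cycles}(M^{\mathrm{new}}\cup M_c^{\mathrm{ext}})$ equals $+1$, $-1$ or $0$ according to whether $M_c^{\mathrm{ext}}$ equals the cross matching $P$, the same-side matching $M^{\mathrm{new}}$, or the third matching $T$; and $M_j^{\mathrm{ext}}$, $M_k^{\mathrm{ext}}$ are independent because the complement of the dipole is an arbitrary 4-point graph. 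Up to the $j\leftrightarrow k$ swap this gives six configurations, $(P,P),(P,T),(P,S),(T,T),(T,S),(S,S)$, with degree drops $0,1,2,2,3,4$: the combination $M_j^{\mathrm{ext}}=M^{\mathrm{new}}$, $M_k^{\mathrm{ext}}=T$ yields a drop of $3$, which is not on the announced list. It is realized: take three bubbles $A,B,C$, each with vertices $1,\dots,4$ and colour-$1,2,3$ edges $\{12,34\}$, $\{13,24\}$, $\{14,23\}$, and colour-$0$ edges $a_1b_1$, $a_2b_2$ (a colour-$1$ dipole) together with $a_3c_1$, $a_4c_4$, $b_3c_2$, $b_4c_3$; one checks $F_1=2$, $F_2=F_3=1$, hence $\omega=7/2$, while removing the dipole leaves the double tadpole on $C$ with colour-$0$ edges $c_1c_4$ and $c_2c_3$, of degree $1/2$. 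So you must either exhibit a constraint excluding that configuration (I do not see one) or amend the statement to allow a drop of $3$. Note that the paper's own proof lists only four of the six configurations, so your framework in fact exposes an incompleteness there as well; the omission is harmless for the paper's later applications, where only the drops $0$ and $1$ are used.
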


\begin{proof}
Let us calculate the variation of the degree. Without loss of generality, we consider the removal of a dipole of color 1. The new graph $G'$ has two bubbles less. It also loses a face of color 1. As for the faces of color 2 and 3, it depends on their paths in $G$. Up to symmetry, there are the following four possibilities
\begin{equation} \label{FaceCircuits}
\begin{gathered}
\begin{array}{c} \includegraphics[scale=.6]{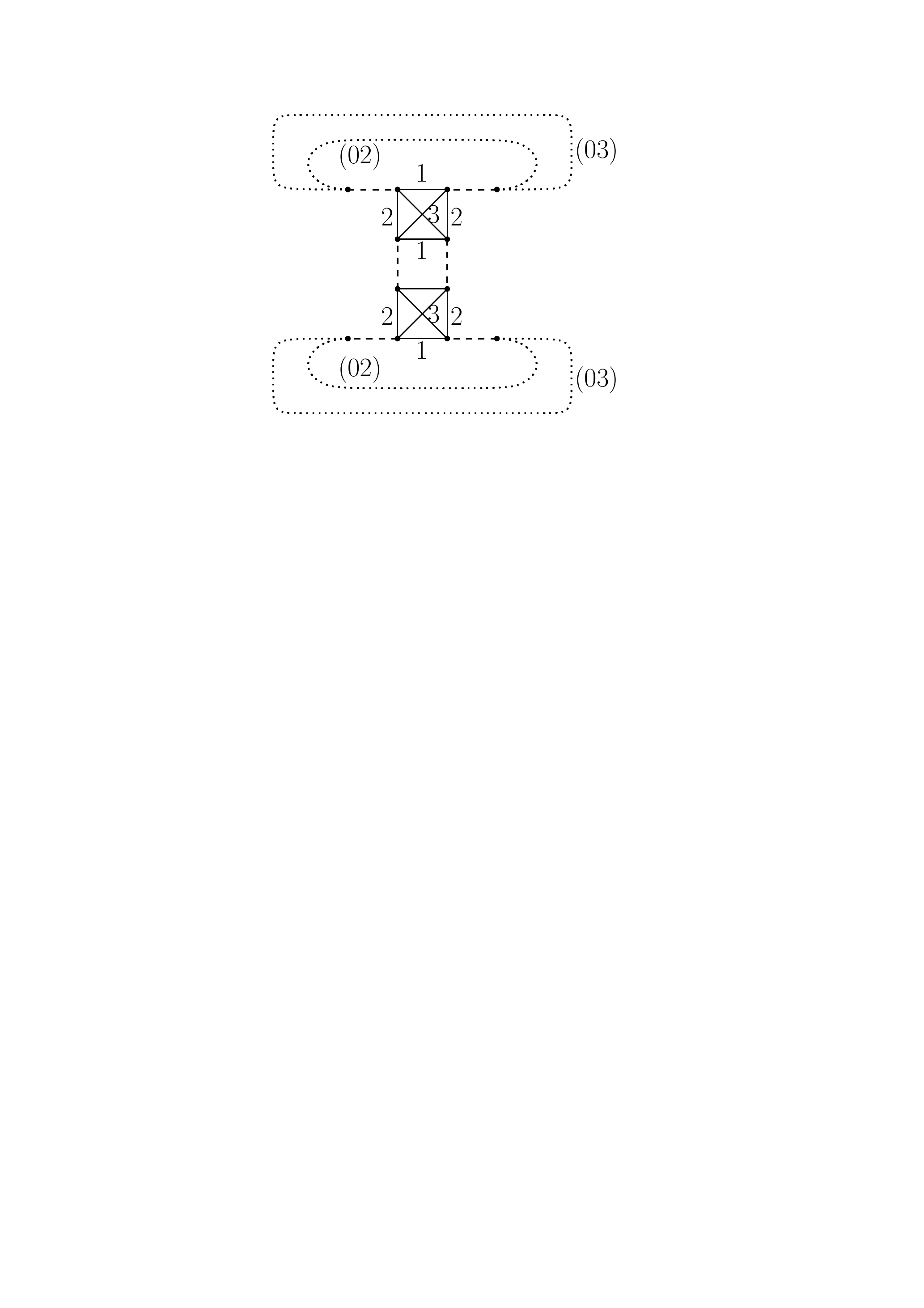} \end{array} \hspace{2cm} \begin{array}{c} \includegraphics[scale=.6]{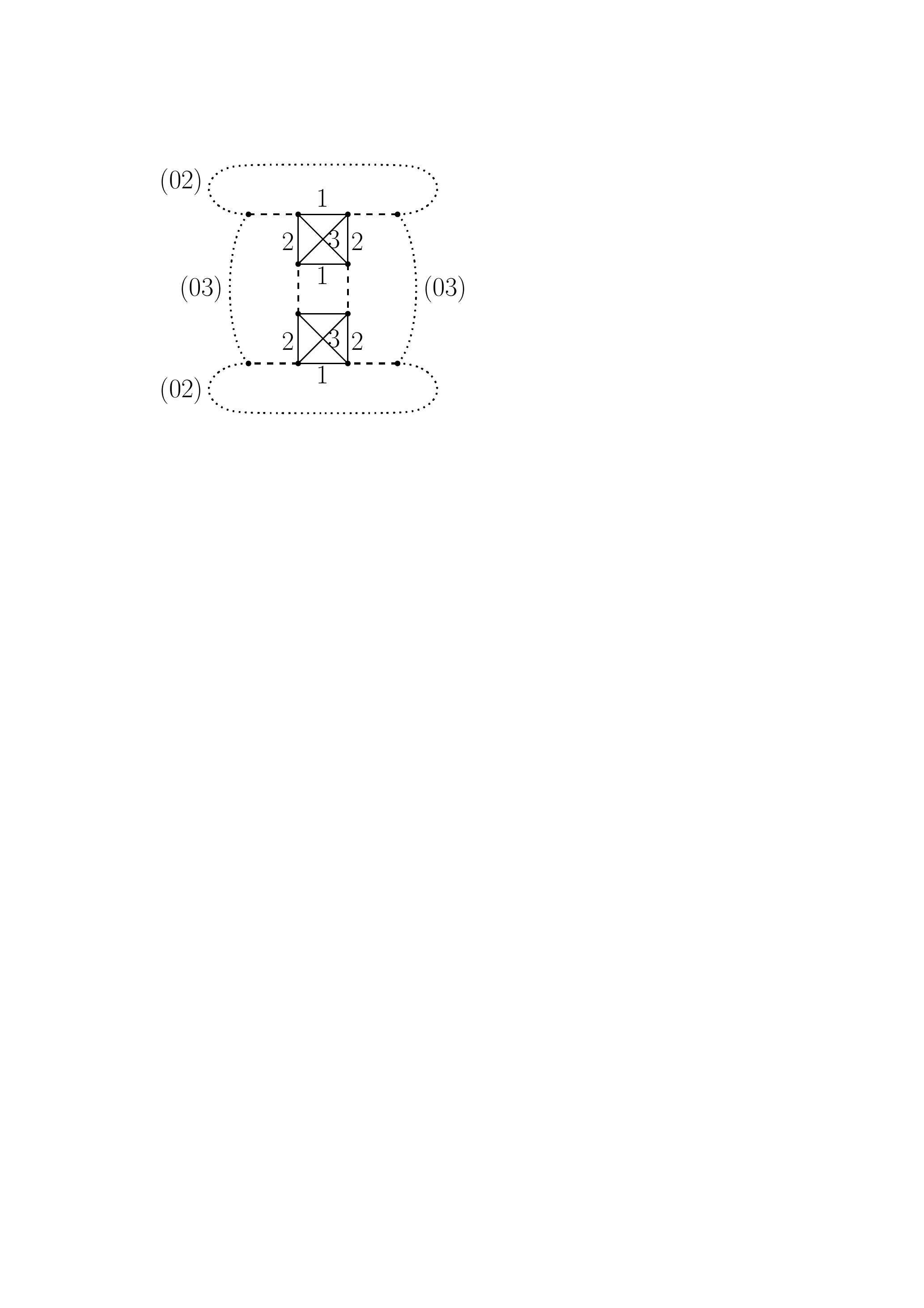} \end{array} \\
\begin{array}{c} \includegraphics[scale=.6]{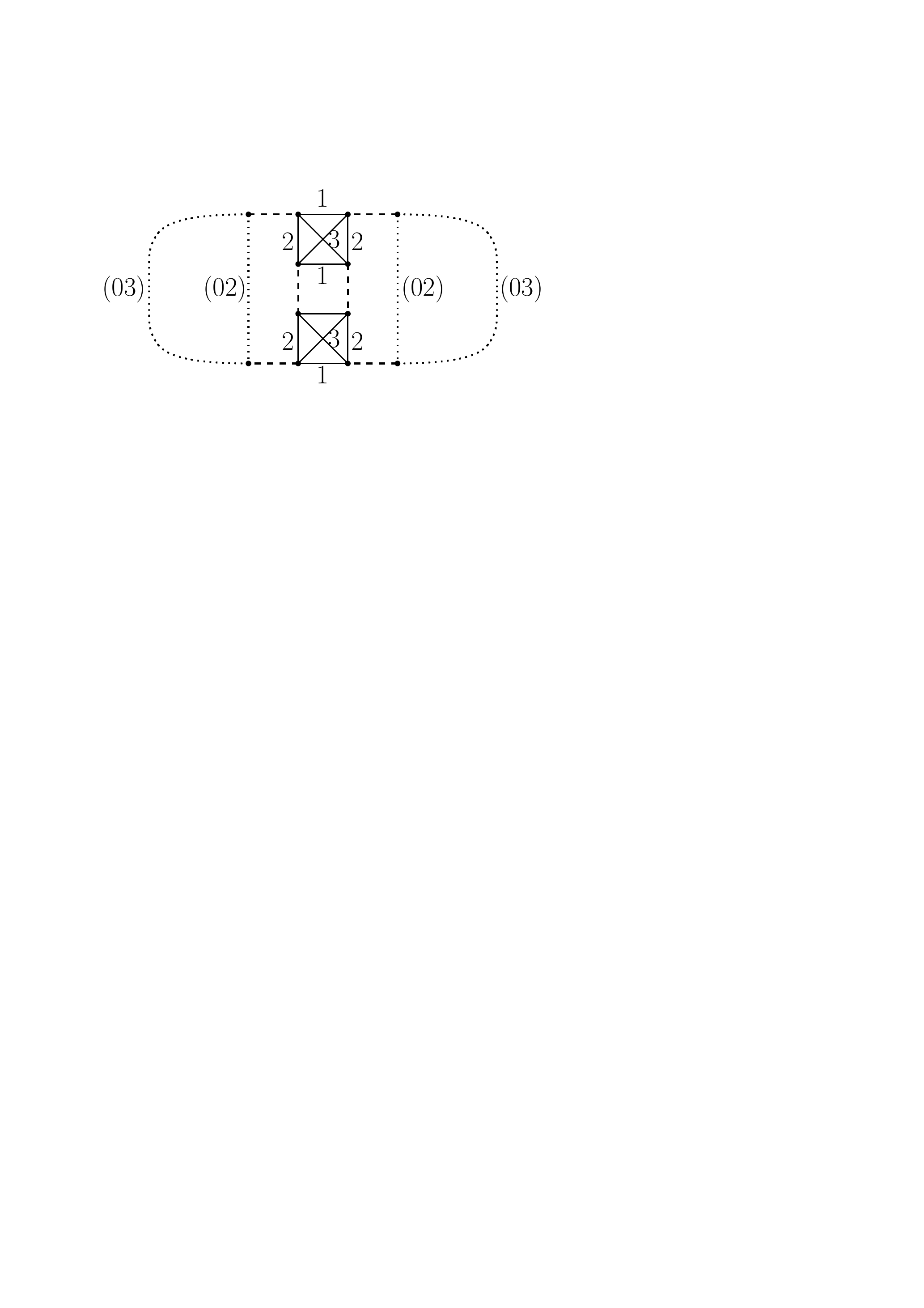} \end{array} \hspace{2cm} \begin{array}{c} \includegraphics[scale=.6]{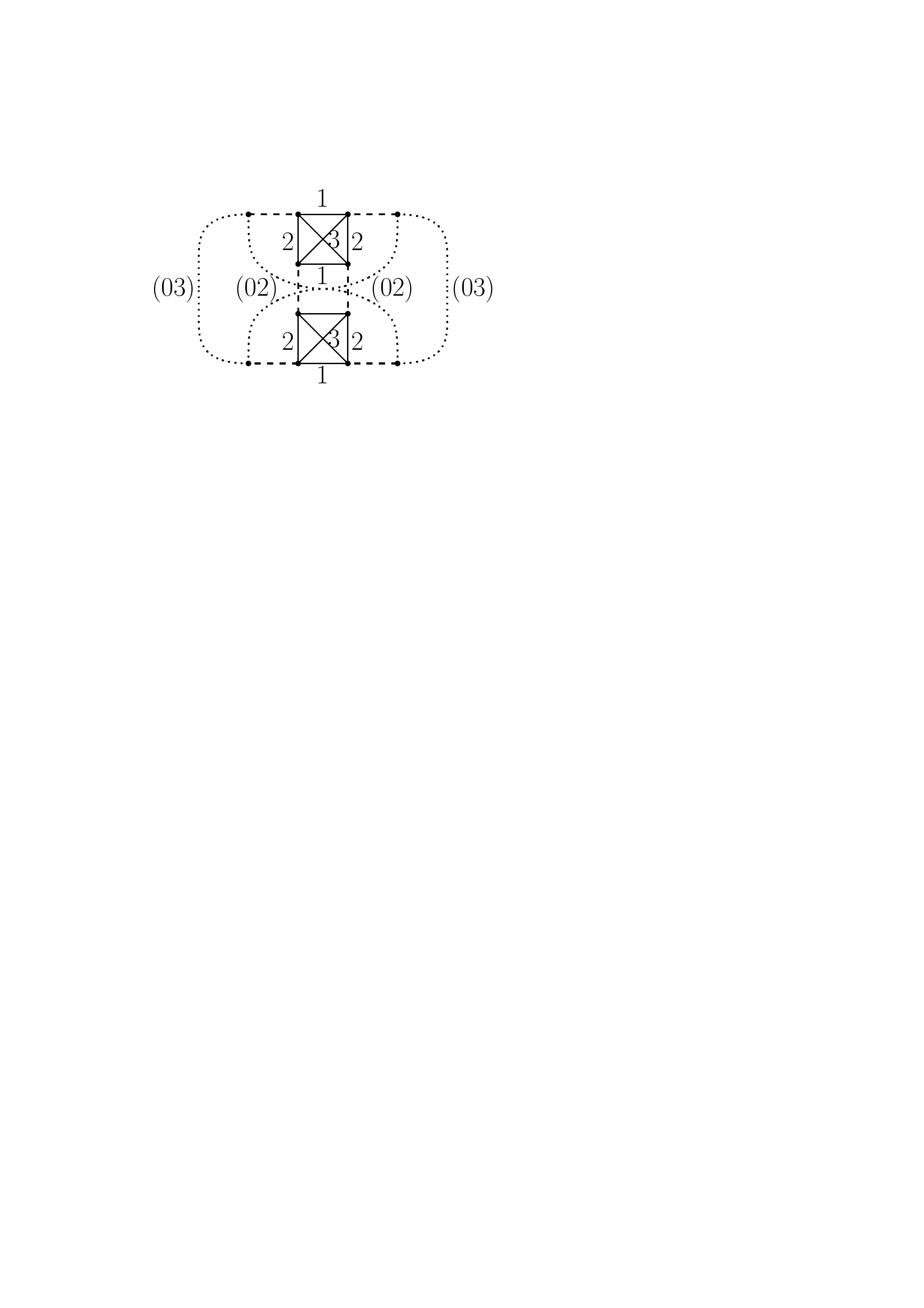} \end{array}
\end{gathered}
\end{equation}
Here the colored lines do not represent edges but instead the paths of the faces, i.e. bicolored cycles with alternating colors $0$ and $i$, which go along the bubbles of the dipole. Denote 
{by} 
$\epsilon_2, \epsilon_3 \in\{1,2\}$ the number of faces of color 2 and 3 in $G$ which go along the dipole, and 
{by} 
$\epsilon'_2, \epsilon'_3$ the number of those faces left in $G'$.
\begin{itemize}
\item In the first case (top left), $\epsilon_2=\epsilon_3=1$. Then each of those faces gets split into two\footnote{Notice that our drawing becomes disconnected after the removal, but this is misrepresentation due to the fact that our drawing does not contain other edges which would leave $G'$ connected.} in $G'$, i.e. $\epsilon'_2=\epsilon'_3=2$. 
{ One has:}
\begin{equation}
\begin{aligned}
\omega(G') &= 3+\frac{3}{2}(n(G)-2) - (F_1(G)-1 + F_2(G)+1 + F_3(G)+1)\\
&= \omega(G) - 4.
\end{aligned}
\end{equation}
In particular, since the degree is positive, it means that $\omega(G) \geq 4$.
\item In the second case (top right), $\epsilon_2=1$ and $\epsilon_3=2$. The face of color 2 is split into two in $G'$ while those of color 3 are merged, i.e. $\epsilon'_2=2$ and $\epsilon_3=1$. Therefore the number of faces of color 2 and 2 is globally unchanged and
\begin{equation}
\begin{aligned}
\omega(G') &= 3+\frac{3}{2}(n(G)-2) - (F_1(G)-1 + F_2(G)+1 + F_3(G)-1)\\
&= \omega(G) - 2.
\end{aligned}
\end{equation}
In particular, since the degree is positive, it means that $\omega(G) \geq 2$.
\item In the third case (bottom left), $\epsilon_2=\epsilon_3=2${,}  and for both colors the two faces are merged into one in $G'$, leaving $\epsilon'_2=\epsilon'_3=1$ and 
\begin{equation}
\begin{aligned}
\omega(G') &= 3+\frac{3}{2}(n(G)-2) - (F_1(G)-1 + F_2(G)-1 + F_3(G)-1)\\
&= \omega(G),
\end{aligned}
\end{equation}
i.e. the degree is unchanged by the dipole removal.
\item In the fourth case (bottom right), $\epsilon_2 = 1$ and $\epsilon_3=2$. However, in contrast with the previous cases, the face of color 2 is not split into two, i.e. $F_2(G)=F_2(G')$. The two faces of color 3 are merged into one, so $\epsilon'_2=\epsilon'_3=1$. This 
{ leads to} 
\begin{equation}
\begin{aligned}
\omega(G') &= 3+\frac{3}{2}(n(G)-2) - (F_1(G)-1 + F_2(G) + F_3(G)-1)\\
&= \omega(G)-1,
\end{aligned}
\end{equation}
\end{itemize}
This concludes the proof.
\end{proof}

\subsection{Dipole insertions}

One can consider the reverse operation: the dipole insertion on the edges $e$ and $e'$. By inspecting the above proof, one finds the following lemma which provides a necessary condition for a dipole insertion not to increase the degree by more than 1.

\begin{lemma} \label{thm:NumberColors}
A dipole insertion on $\{e,e'\}$ which increases the degree by 1 or preserves it{,}  requires that there are two colors $\{i,j\} \subset \{1,2,3\}$ such that the face of color $i$ along $e$ also goes along $e'${,} and similarly for $j$.
\end{lemma}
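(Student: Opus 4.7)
My plan is to derive the lemma as an inversion of the case analysis carried out in the proof of Proposition~\ref{thm:Dipoles}. Since dipole insertion is the reverse operation of dipole removal, the change $\Delta\omega$ produced by the insertion is the opposite of the change produced by the corresponding removal, and therefore takes only the values $+4, +2, +1, 0$. The hypothesis $\Delta\omega \in \{0,1\}$ singles out the inverses of the third case ($\Delta\omega = 0$) and the fourth case ($\Delta\omega = 1$) of that proof. My remaining task is to translate the characterization of cases~3 and~4---originally expressed in terms of the face multiplicities $\epsilon_{j'}, \epsilon_{k'}$ inside the dipole in $G$---into a statement about the face structure of $G'$ along $e$ and $e'$.

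Let the dipole have color $k$ and set $\{j',k'\} = \{1,2,3\} \setminus \{k\}$. For each non-dipole color $c \in \{j',k'\}$, the defining feature of cases~3 and~4 in the proof of Proposition~\ref{thm:Dipoles} is that the number of color-$c$ faces meeting the dipole is unchanged by the removal: either two faces of $G$ merge into one in $G'$ (case~3, $\epsilon_c = 2 \to \epsilon'_c = 1$), or the single face is preserved (case~4, $\epsilon_c = 1 \to \epsilon'_c = 1$). In both situations the resulting single color-$c$ face of $G'$ is built from color-$c$ strand pieces of $G$ that collectively traverse all four external color-$0$ half-edges of the dipole. Since the dipole removal pairs these four half-edges into the two edges $e$ and $e'$, the resulting color-$c$ face of $G'$ must pass through both $e$ and $e'$.

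This conclusion applies simultaneously to both non-dipole colors. In case~3, both $c = j'$ and $c = k'$ are of the merging type, so the face of color $j'$ along $e$ also goes along $e'$, and similarly for color $k'$. In case~4, one non-dipole color is of the merging type and the other of the preserved type; the argument above applies to each independently. Taking $\{i,j\} = \{j',k'\} \subset \{1,2,3\}$ yields the two colors required by the lemma.

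The main obstacle I expect to address is the claim at the end of the second paragraph: that a merged or preserved color-$c$ face of $G'$ necessarily uses both edges $e$ and $e'$, rather than, say, passing twice through the same one. Establishing this requires a careful local bookkeeping, since the color-$c$ strand inside each bubble pairs the four external color-$0$ half-edges of the dipole in one specific way while the removal operation pairs the same four half-edges into $e$ and $e'$ in another specific way; one must verify that in cases~3 and~4 these two pairings are transverse, in the sense that the resulting face is forced to cross between $e$ and $e'$. Once this local verification is made explicit---essentially by reading off the figures of the faces circuits in the proof of Proposition~\ref{thm:Dipoles}---the proof is complete.
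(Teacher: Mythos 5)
Your argument is correct and follows essentially the same route as the paper: the paper's proof likewise observes that a degree variation of at most $1$ corresponds to the third and fourth cases of Proposition~\ref{thm:Dipoles} and reads off from the face-circuit pictures \eqref{FaceCircuits} that in those cases each of the two non-dipole colors has a single face running along both $e$ and $e'$. The residual worry you flag is harmless: since every vertex carries exactly one half-edge of each color, the subgraph of colors $\{0,c\}$ is a disjoint union of cycles, so each edge of color $0$ lies on exactly one face of color $c$ and is traversed once by it, and $\epsilon'_c=1$ in cases~3 and~4 directly means that this single face visits both $e$ and $e'$.
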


\begin{proof}
A variation of the degree of at most 1 corresponds to the third and fourth cases of the above proof. After the dipole removals in those cases, the edges $e$ and $e'$ have two colors (2 and 3 in our pictures) for which the same face goes along $e$ and $e'$.
\end{proof}


We recall that melonic insertions do not change the degree of a graph. It is thus common, in order to analyze the degree, to remove at first all melonic 2-point functions, and then add the full, melonic 2-point function on all edges of color 0 at the end of the analysis. For instance, here we will aim at identifying all graphs of degree 1 so that we can try and identify all melon-free graphs fo{r} degree 1 and then add all melonic decorations on the edges of color 0. However, since we also consider dipole removals and insertions, the question arises as to whether new dipole insertions which change the degree by at most 1 would be allowed after melonic insertions. The following lemma answers this question.

\begin{lemma} \label{thm:Commuting}
Let $G'$ be obtained from $G$ by a melonic insertion on the edge $e$. Then there exists a possible dipole insertion preserving the degree on the pair $\{e',e''\}$ in $G$ iff it exists in $G'$, for $e',e''\neq e$.


\end{lemma}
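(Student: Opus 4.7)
The plan is to reduce the statement to a purely local combinatorial condition on the bicolored face structure at $\{e',e''\}$, and then to observe that a melonic insertion at an edge $e \neq e',e''$ leaves this local data invariant. The argument breaks into three steps.

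First, I would refine the necessary condition of Lemma \ref{thm:NumberColors} into a sharp characterization of degree-preserving dipole insertions. Going back to the four-case analysis in the proof of Proposition \ref{thm:Dipoles}, case 3 is the unique case which preserves the degree. Reading that case in reverse yields the following: a dipole insertion of color $k$ on $\{e',e''\}$ preserves the degree if and only if, for each of the two colors $i\in\{1,2,3\}\setminus\{k\}$, the face of color $i$ through $e'$ coincides with the face of color $i$ through $e''$. This recasts the existence of a degree-preserving dipole insertion at $\{e',e''\}$ as a property of which bicolored cycles of $G$ pass through $e'$ and $e''$.

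Next, I would compare the face structures of $G$ and $G'$. The melonic insertion on $e$ replaces that single color-$0$ edge by a 2-bubble subgraph consisting of two new vertices joined by three edges of colors $1,2,3$, flanked by two fresh color-$0$ edges connecting the endpoints of $e$ to the new bubble. This operation creates three new bicolored length-$2$ cycles, one for each pair $\{i,j\}\subset\{1,2,3\}$, living entirely inside the melonic bubble and using no color-$0$ edge. All other faces of $G'$ are in canonical bijection with the faces of $G$: a face of color $i$ of $G$ not meeting $e$ is unchanged, while a face of color $i$ meeting $e$ is simply extended in $G'$ by re-routing its $e$-segment through the two new color-$0$ edges and the color-$i$ edge of the melonic bubble. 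In particular, for any edge $\tilde e\neq e$, this bijection identifies the color-$i$ face through $\tilde e$ in $G$ with the color-$i$ face through $\tilde e$ in $G'$.

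Finally, I would conclude. Since $e',e''\neq e$, the bijection of step 2 identifies the color-$i$ face through $e'$ (resp. $e''$) in $G$ with that through $e'$ (resp. $e''$) in $G'$, for every $i\in\{1,2,3\}$. Hence the coincidence condition isolated in step 1 holds at $\{e',e''\}$ in $G$ if and only if it holds in $G'$, and applying step 1 in both directions proves the lemma. The main obstacle is really step 1: one has to read off case 3 from the proof of Proposition \ref{thm:Dipoles} and confirm that it admits a clean local characterization independent of the global topology; once this is established, step 2 is a routine face-tracking argument and step 3 is immediate.
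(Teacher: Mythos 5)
Your overall strategy --- reduce the existence of a degree-preserving dipole insertion at $\{e',e''\}$ to local face data at those two edges, then observe that a melonic insertion at $e\neq e',e''$ does not disturb that data --- is the same as the paper's (whose treatment of this case is essentially one sentence), and your Steps 2 and 3 are fine in themselves. The problem is Step 1. The characterization you extract from case 3 of Proposition \ref{thm:Dipoles} is not an ``if and only if'': the condition that, for both colors $i\neq k$, the face of color $i$ through $e'$ coincides with the face of color $i$ through $e''$ is exactly the \emph{necessary} condition of Lemma \ref{thm:NumberColors}, and it is satisfied both by case 3 (degree preserved) and by case 4 (degree increased by 1 under insertion). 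Indeed, in case 4 one has $\epsilon'_2=\epsilon'_3=1$ after removal, i.e.\ in the pre-insertion graph a single face of color 2 and a single face of color 3 each pass through both edges --- the same coincidence pattern as in case 3. What separates the two cases is not \emph{which} faces pass through $e'$ and $e''$ but \emph{how} they are routed there (whether the insertion splits each such face into two or leaves it as a single face), which is the extra information encoded in the pictures \eqref{FaceCircuits}. As written, your Step 3 therefore only proves that the necessary condition holds in $G$ iff it holds in $G'$; this does not yield the stated equivalence, since a pair satisfying that condition could a priori support only a degree-increasing insertion on one side.

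The gap is repairable with the machinery you already set up: the face bijection of your Step 2 preserves not only the identity of the faces through $e'$ and $e''$ but the entire local routing picture at those two edges, because the melonic insertion merely lengthens some faces along a segment far from $e'$ and $e''$ and leaves their visits to those edges untouched. Since the degree variation of a dipole insertion is determined by precisely this local picture (the four cases of Proposition \ref{thm:Dipoles}), the insertion at $\{e',e''\}$ falls into the same case in $G$ and in $G'$, which gives the lemma. A minor further point: the melonic insertion \eqref{MelonicInsertion} consists of two tetrahedral bubbles joined by three edges of color 0, not of two vertices joined by three colored edges; this misdescription does not affect your argument, since the new internal faces of the insertion never reach $e'$ or $e''$.
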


{ This means that} melonic insertions do not ``offer'' new dipole insertions which preserve the degree or increase it by one.

\begin{proof}
Let $G$ be a graph and $G'$ obtained by a melonic insertion on $e$,
\begin{equation}
G = \begin{array}{c} \includegraphics[scale=.7]{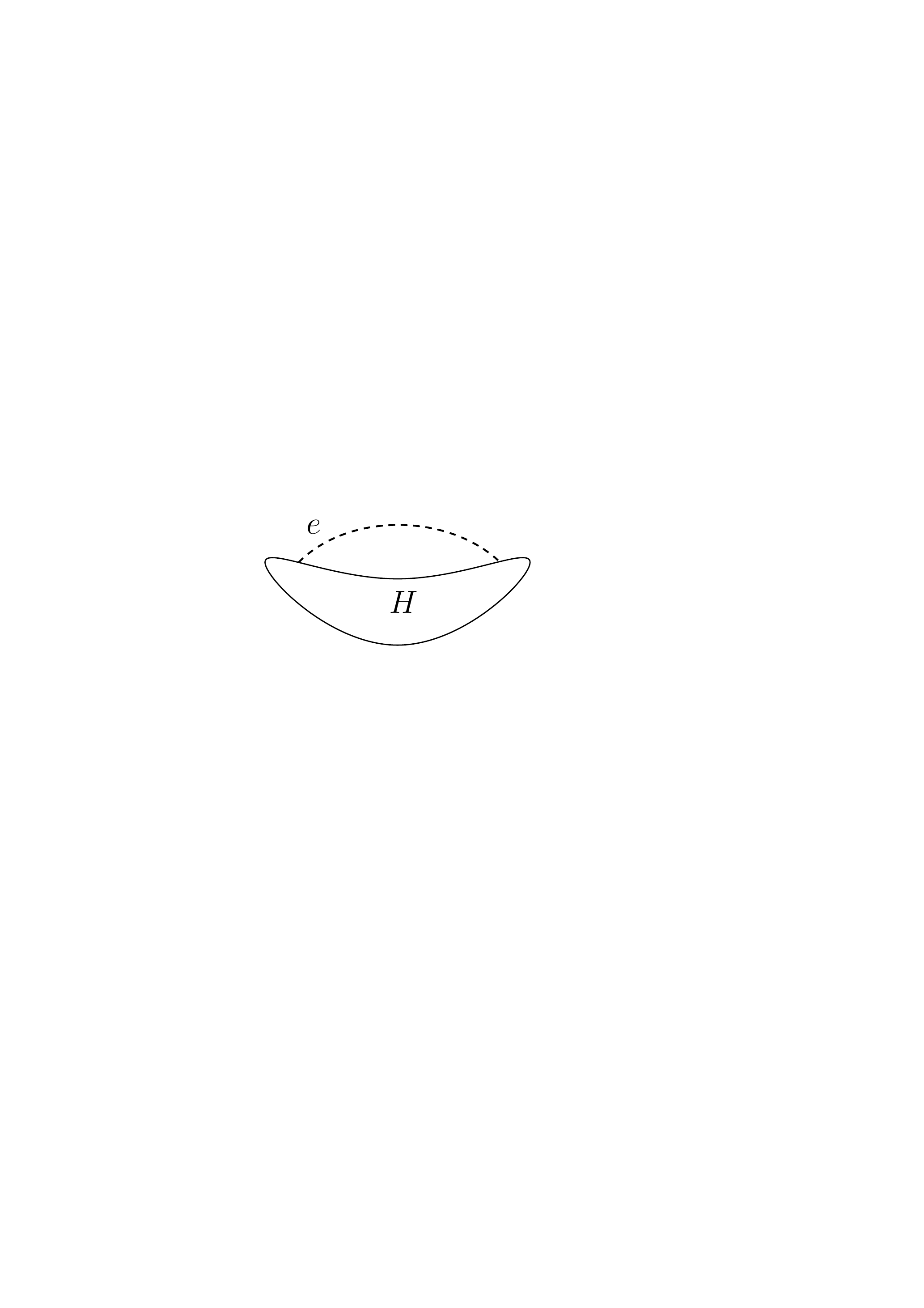} \end{array} \hspace{2cm}
G' = \begin{array}{c} \includegraphics[scale=.7]{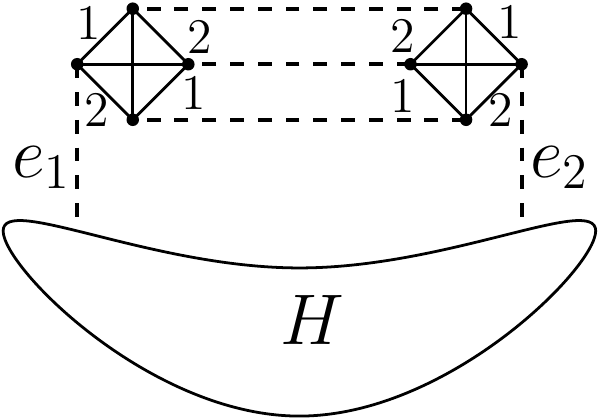} \end{array}
\end{equation}

To investigate the insertion of dipoles which increase the degree by one or preserve it, we use Lemma \ref{thm:NumberColors} and consider the pairs of edges of color 0 $\{e',e''\}$ such that there are two colors for which the faces going along $e'$ and $e''$ are the same. Also denote $\tilde{G}$ the graph obtained from $G$ by performing the dipole insertion on $\{e',e''\}$.

If $e'$ and $e''$ lie in $H$ in $G$, that is also the case in $G'$ so that the same dipole insertion can be made before or after the melonic insertion.

If $e'= e$ (or $e''=e$), then the corresponding dipole insertion gives rise to two possible insertions in $G'$, one with $e_1$ and $e''$, the other with $e_2$ and $e''$. In fact, the resulting graphs { only} differ by the position of the melonic insertion in $\tilde{G}$, so the melonic insertion does not provide new dipole insertions.

All the faces going along $e_1$ also go along $e_2$ so that it seems that a new dipole insertion is possible. However, because $e_1$ and $e_2$ share three faces instead of two, the dipole insertion is in fact a melonic insertion.

Finally, it is easy to check that for each of the three edges of color 0 connecting the bubbles of the melonic insertions, there are no other edges which would share at least two faces with it. Therefore, there is no possible dipole insertion which preserves the degree or change it by one.
\end{proof}

\subsection{Chains of dipoles}

A chain of dipoles is a sequence of dipoles of arbitrary length. 
{Such a chain} 
 can be of fixed color,
\begin{equation}
\begin{array}{c} \includegraphics[scale=.4]{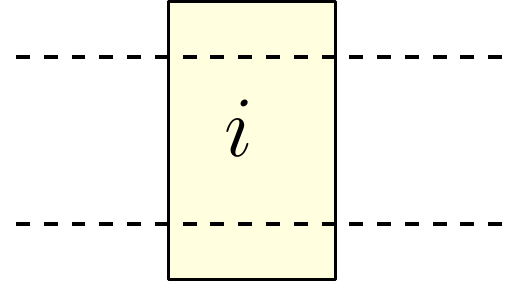} \end{array} = \begin{array}{c} \includegraphics[scale=.4]{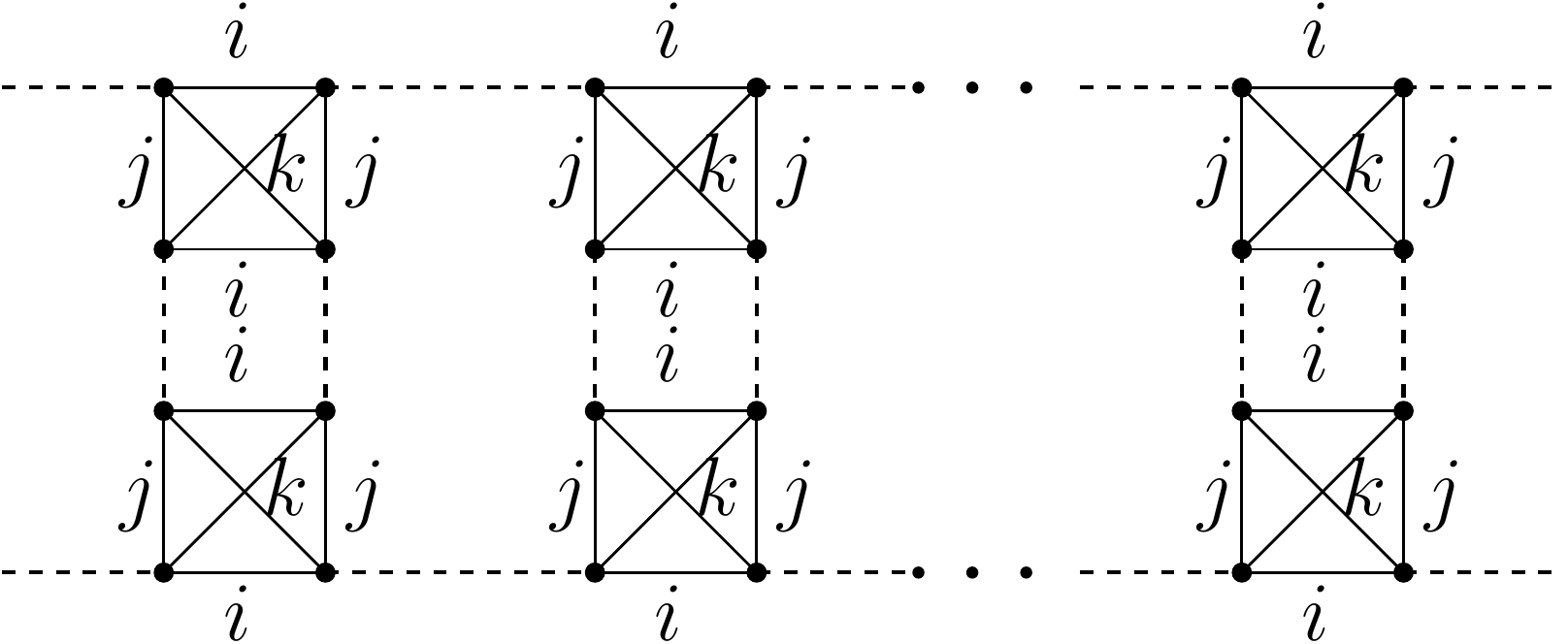} \end{array}
\end{equation}
or it can have multiple colors, in which case we use the same box diagram without any color label.

\begin{lemma}
\begin{enumerate}
\item If there is a pair $\{e,e'\}$ in $G$ where a dipole of color $i$ can be inserted without changing the degree, then a chain of color $i$ can be inserted without changing the degree.

\item If, in addition, a dipole of a different color can also be inserted without changing the degree, then any chain can be inserted without changing the degree.
\end{enumerate}
\end{lemma}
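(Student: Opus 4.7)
The plan is to prove both parts by induction on the length $n$ of the chain, with base case $n=1$ given by the hypothesis.

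For part 1, the key input is the characterization of degree-preserving color-$i$ dipole insertions extracted from case 3 (bottom left) of Proposition~\ref{thm:Dipoles}, together with Lemma~\ref{thm:NumberColors}: inserting a color-$i$ dipole at a pair of edges $\{e,e'\}$ preserves the degree if and only if, for each of the two other colors $j \in \{1,2,3\}\setminus\{i\}$, a single face of color $j$ goes along both $e$ and $e'$. Read in the insertion direction, case 3 moreover tells us how those faces transform: each of the two matching faces of color $j$ is split into two faces in the enlarged graph, one traversing each side of the new dipole. Applied to the last dipole of a chain of length $n$ already built, this shows that the pair of color-$0$ edges at the far end again satisfies the matching condition for both colors $\neq i$, so an $(n+1)$-th color-$i$ dipole can be inserted without changing the degree.

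For part 2, combining the two hypotheses through the same characterization gives that all three face colors match at $\{e,e'\}$, and hence (by part 1's criterion applied to each color separately) a degree-preserving dipole of any color may be inserted there. For the inductive step one must show that after inserting a color-$\ell$ dipole at a pair where all three face colors match, the far-side pair again has all three colors matching. For the two colors $m \neq \ell$ the splitting argument of part 1 applies verbatim. For color $\ell$ itself a direct face-tracing is needed: the matching color-$\ell$ face in the pre-insertion graph continues through the new dipole via the two ``leftover'' color-$\ell$ edges inside its bubbles (the color-$\ell$ edges not used by the internal length-$2$ face), which by the tetrahedral pattern each connect the two external color-$0$ vertices of their bubble. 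Tracing the face shows that the old matching face together with these two leftover edges assembles into a single face in the enlarged graph that visits the external color-$0$ edges on both the near and the far side of the new dipole, so the color-$\ell$ matching condition also survives at the far pair.

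The main obstacle is the face-tracing for color $\ell$ in part 2, since its behaviour is qualitatively different from the splitting seen for the two other colors; however it relies only on the combinatorial observation that, inside a tetrahedral bubble, the leftover color-$\ell$ edge connects the two vertices carrying the external color-$0$ half-edges, which makes the propagation of the matching face through the dipole immediate.
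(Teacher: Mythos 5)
Your proposal is correct and takes essentially the same route as the paper: both read case 3 of Proposition~\ref{thm:Dipoles} in the insertion direction, track how the single face of each of the two other colors splits into one face on each side of the new dipole, and observe that the face of the dipole's own color traverses each bubble via the leftover edge joining its two external vertices, so the matching conditions propagate to the far pair and the chain can be extended. The only (harmless) imprecision is phrasing the degree-preservation criterion as an ``if and only if'' of the face-matching condition, whereas Lemma~\ref{thm:NumberColors} only gives necessity (case 4 of Proposition~\ref{thm:Dipoles} also satisfies the matching condition yet raises the degree by one); your explicit face-tracing supplies the required splitting at each inductive step anyway, at the same level of rigor as the paper's own argument.
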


\begin{proof}
Without loss of generality, 
{let}  $i=1$.  
 Inserting a dipole without changing the degree means that we are in the third case of Proposition \ref{thm:Dipoles}, see \eqref{FaceCircuits}. There, the paths of the faces of color 2 and 3 had been drawn, but not those of color 1. There are two possible paths for the faces of color $i$ going along $e$ and $e'$: they are either different, or the same faces.
\begin{equation}
\begin{array}{c} \includegraphics[scale=.5]{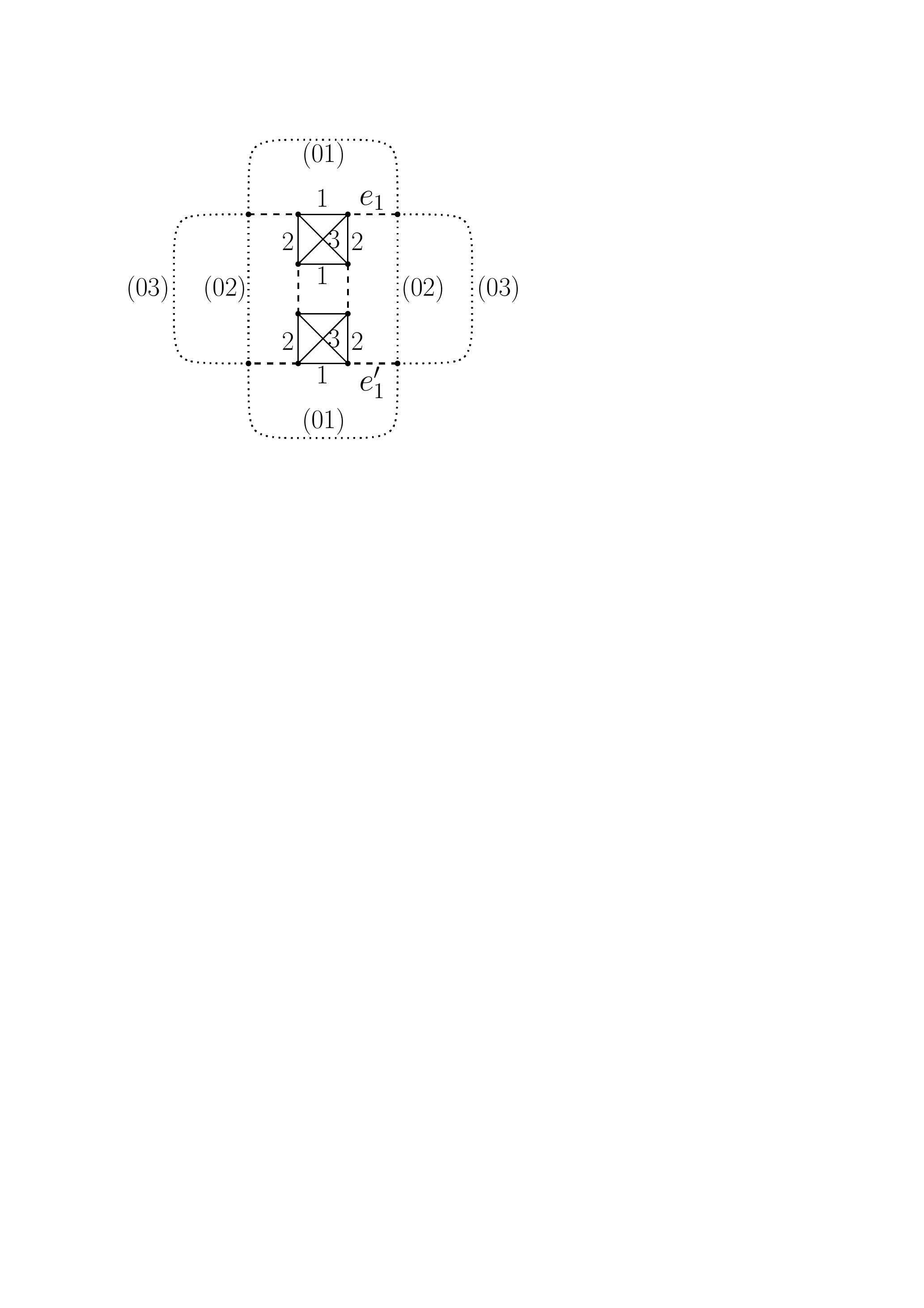} \end{array} \qquad \text{or} \qquad \begin{array}{c} \includegraphics[scale=.5]{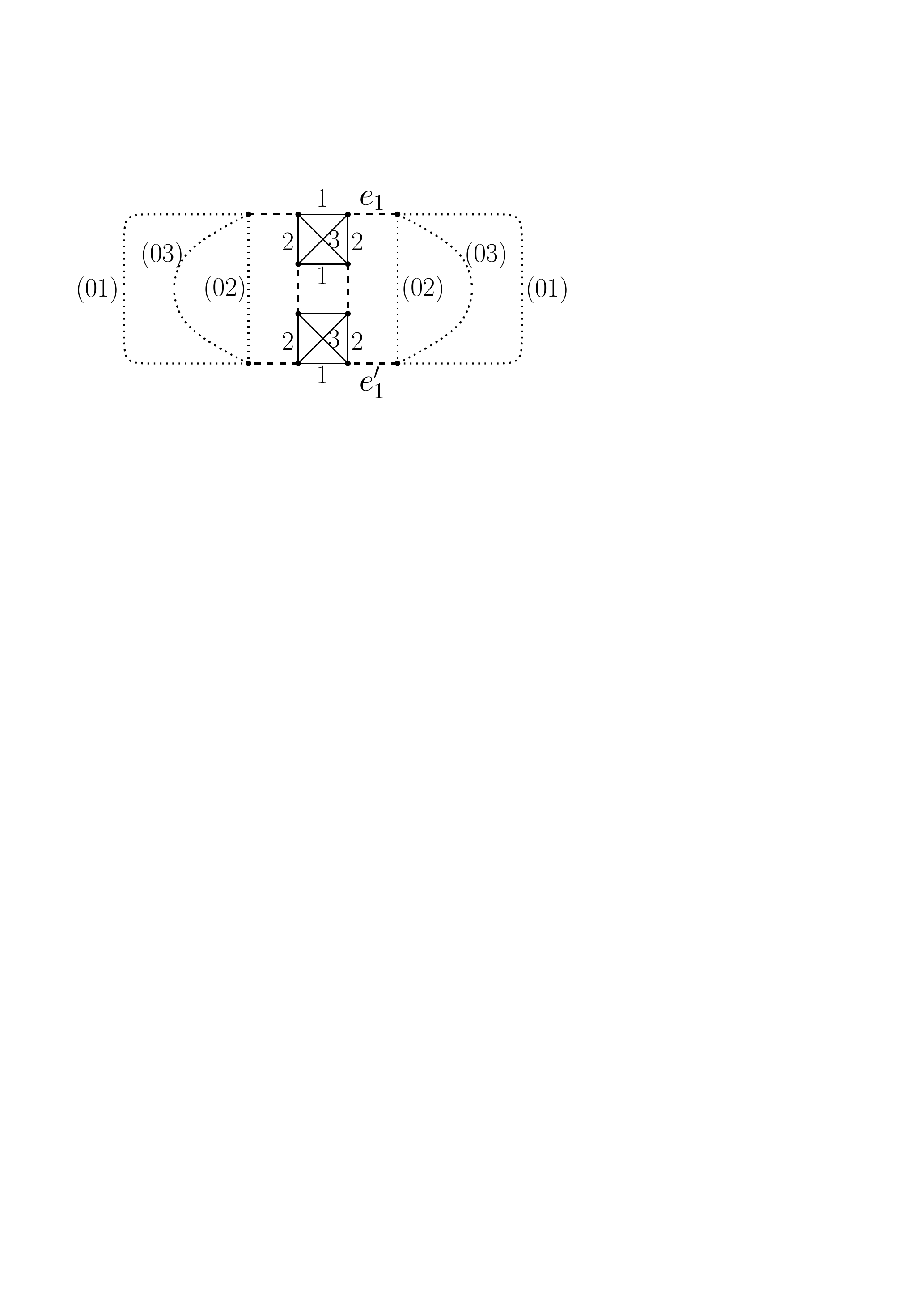} \end{array}
\end{equation}
In the first case, the edges $e_1$ and $e'_1$ share their faces of color 2 and 3, meaning that indeed a new dipole of color 1 can be inserted without changing the degree. However, a dipole of another color would change the degree because the faces of color 1 going along $e_1$ and $e'_1$ are different but would be merged after a dipole insertion of color 2 or 3.

In the second case, $e_1$ and $e'_1$ share their 3 faces so that any dipole insertions can be performed without changing the degree.
\end{proof}

Chains of dipoles are related to melonic graphs in the following way. If one opens up a melonic subgraph by cutting an edge of color 0, this creates a chain of dipoles. The following lemma is useful to glue two graphs along a 2-edge-cut as in Section \ref{sec:EdgeCuts}.

\begin{lemma} \label{thm:EdgeCutting}
If $G$ has the form
\begin{equation}
G = \begin{array}{c} \includegraphics[scale=.5]{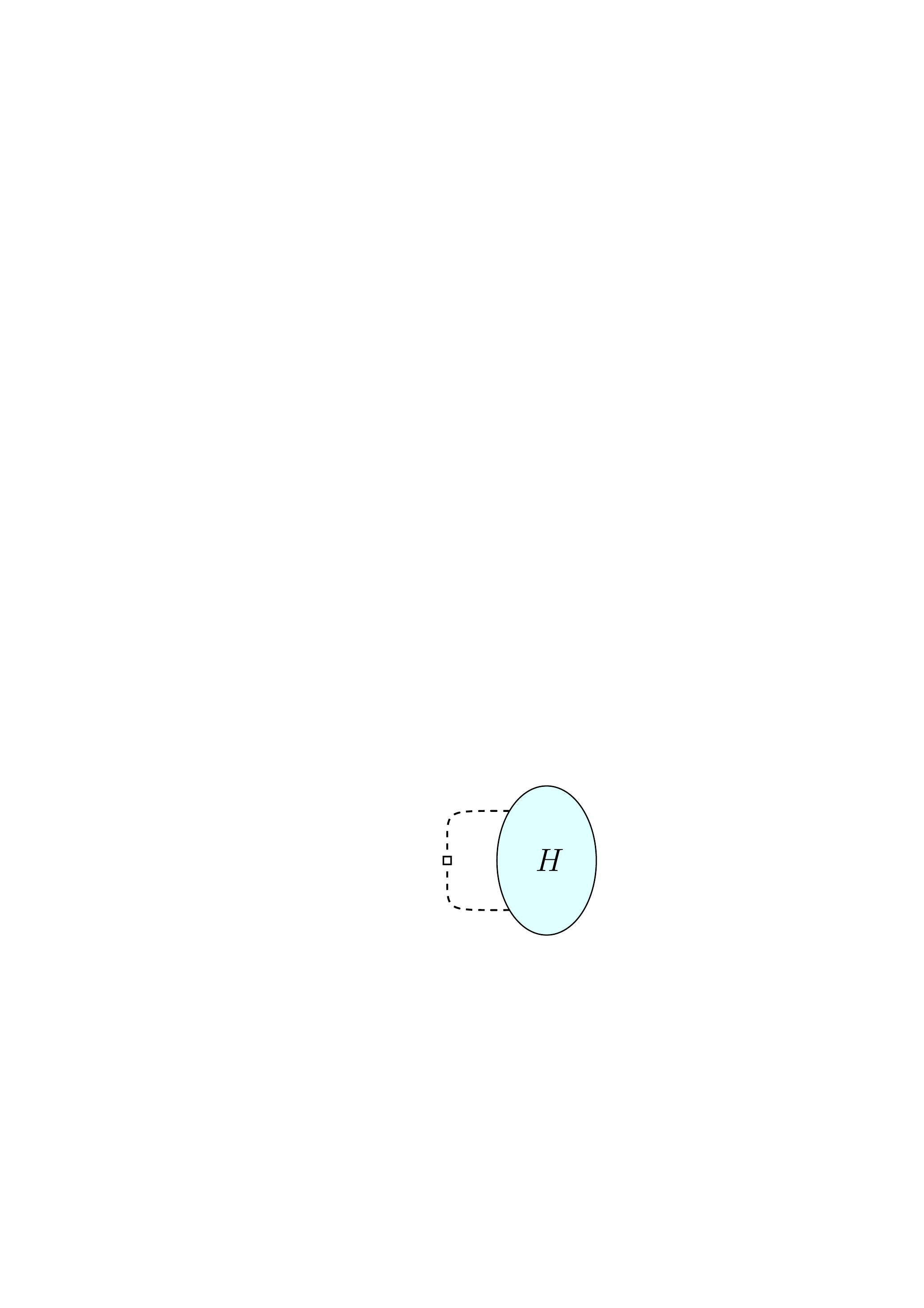} \end{array}
\end{equation}
where the square represents an arbitrary melonic 2-point function, then cutting an arbitrary edge of color 0 within this 2-point function results in a 2-point graph of the form
\begin{equation}
\begin{array}{c} \includegraphics[scale=.5]{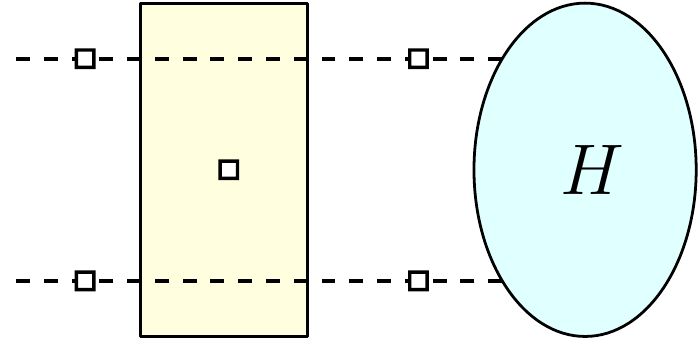} \end{array}
\end{equation}
where the square inside the chain means that all edges of color 0 can have melonic 2-point function {insertions}.
\end{lemma}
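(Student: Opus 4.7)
My plan is to argue by induction on the number of fundamental melonic insertions used in the recursive construction of the melonic 2-point function $M$, exploiting that by definition $M$ is built by nesting \eqref{MelonicInsertion} inside itself.

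For the base case, $M$ is a single fundamental melonic insertion. I verify directly by inspection of \eqref{MelonicInsertion} that each of its internal color-$0$ edges, when cut, leaves a unique closed face of length $2$ intact among the faces of length $2$ of the insertion; the resulting subgraph has two bubbles, two remaining internal color-$0$ edges and four external color-$0$ half-edges (two from the cut and two that connect back to the rest of $G$), i.e.\ it is precisely a dipole whose color in $\{1,2,3\}$ is the color of the surviving face. This is a chain of length one with no additional melonic decoration, as claimed.

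For the inductive step, I encode the recursive construction of $M$ in a rooted tree $T_M$: the nodes of $T_M$ are the fundamental insertions, the root $I_0$ is the outermost insertion (placed on the edge of $G$ drawn in the statement), and the children of a node $I$ record the fundamental insertions recursively placed inside $I$. The cut edge $e^{\star}$ belongs to a unique node $I^{\star}=I_m$ of $T_M$; let $I_0,I_1,\ldots,I_m$ be the unique path from the root to $I^{\star}$, and for $1\leq j\leq m$ write $f_j$ for the color-$0$ edge of $I_{j-1}$ on which $I_j$ was recursively inserted. Applying the base case to each $I_{j-1}$ at the ``virtual cut'' $f_j$ produces a dipole $D_{j-1}$, and applying it to $I_m$ at the genuine cut $e^{\star}$ produces a dipole $D_m$. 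The dipoles $D_0,D_1,\ldots,D_m$ concatenate naturally along $f_1,\ldots,f_m$ into the target chain: one end of the chain reconnects to the rest of $G$ through the two original external half-edges of $I_0$, the other end carries the two external half-edges of the resulting 2-point graph (coming from the cut inside $D_m$), and the subtrees of $T_M$ that do not lie on the path $I_0,\ldots,I_m$ attach to the chain as melonic 2-point function decorations on its inter-dipole color-$0$ edges, exactly matching the squares in the target picture.

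The main obstacle is the base case, which rests on the explicit combinatorial inspection of \eqref{MelonicInsertion} described above. Once it is verified that every internal color-$0$ edge of the fundamental melonic insertion, when cut, leaves precisely one closed face of length $2$ and hence yields a dipole, the inductive step is a routine bookkeeping argument on the melonic skeleton tree $T_M$.
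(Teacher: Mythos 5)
Your proof is correct and follows essentially the same route as the paper's, which simply asserts the ``most generic'' nested form of a melonic 2-point function around the cut edge (a sequence of shells, each a fundamental insertion, with optional melonic decorations) and observes that cutting the innermost edge unfolds it into a chain of dipoles; your induction on the tree of fundamental insertions, with the root-to-cut path supplying the dipoles of the chain, makes that assertion precise, and your base-case computation (each internal color-0 edge of \eqref{MelonicInsertion} lies on exactly two of its three length-2 faces, so cutting it leaves one such face and hence a dipole) is the right key fact. One small bookkeeping remark: the off-path insertions do not only decorate \emph{inter}-dipole edges --- an insertion on an internal edge of $I_{j-1}$ other than $f_j$ lands on one of the two parallel (intra-dipole) color-0 edges of $D_{j-1}$, and an insertion in series on a segment of the original edge is a decoration adjacent to, rather than inside, the chain --- but since the lemma permits melonic 2-point insertions on all color-0 edges, this does not affect the conclusion.
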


\begin{proof}
Let $e$ be the edge of color 0 which is cut. By definition of melonic 2-point functions, the most generic situation (up to color relabeling of each bubble) is like
\begin{equation}
G = \begin{array}{c} \includegraphics[scale=.5]{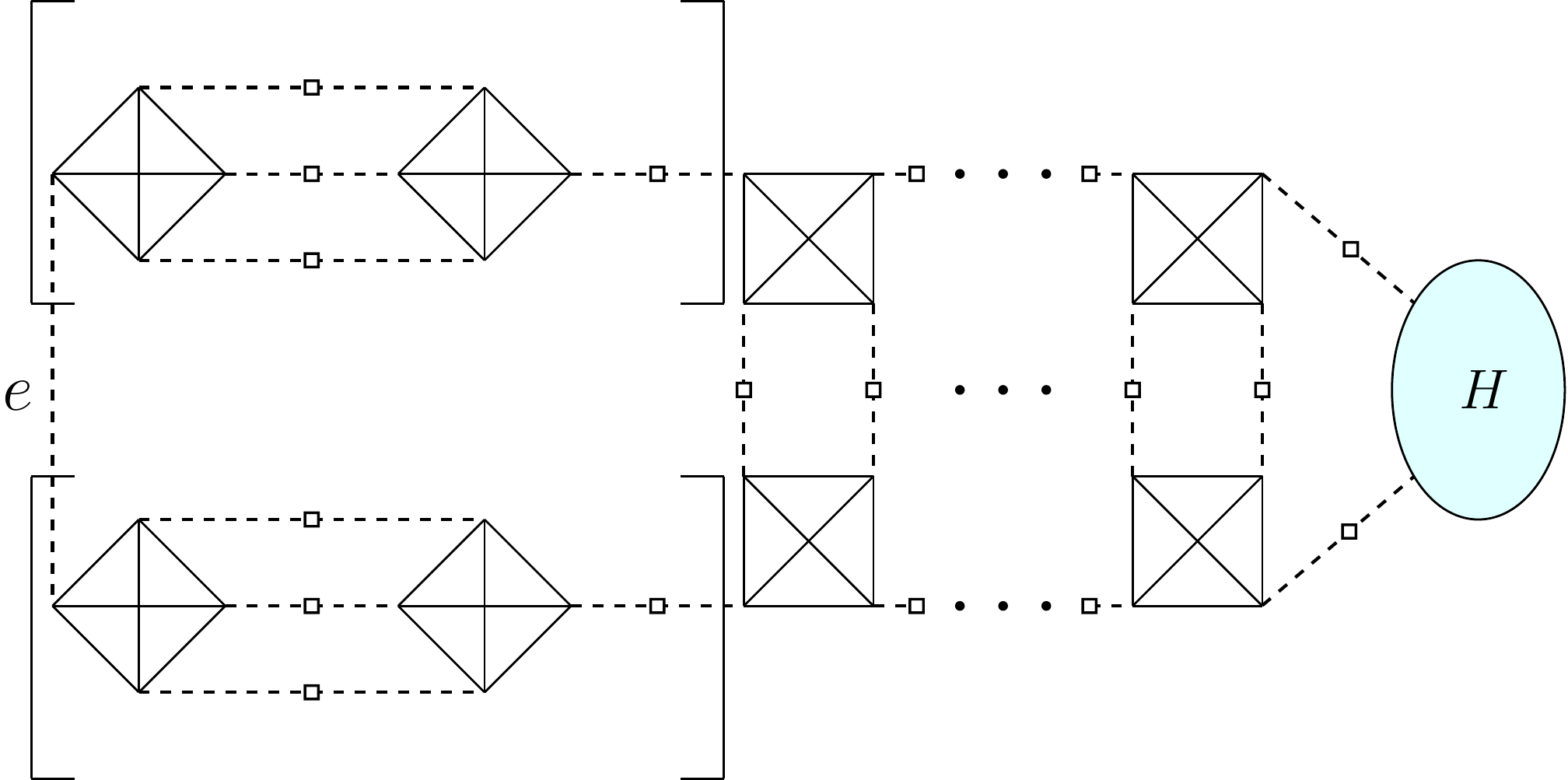} \end{array}
\end{equation}
where the parts between brackets may be empty in which case $e$ is extended and the three dots indicate potential dipole repetitions. Cutting 
{the edge} 
$e$ 
{leads to the expected result}.
\end{proof}

\subsection{Face length}
\label{sec:FaceLength}

In this section we gather some simple results 
{on}  
 the faces of \ON-invariant graphs.

\begin{lemma} \label{thm:FaceDegree}
If $G$ has degree $\omega$, $n$ bubbles and jackets $J_i$ of genus $g_i$, then
\begin{equation} \label{face_degree2}
F_i = 1 - \omega + 2 g_i +\frac{n}{2}.
\end{equation}
\end{lemma}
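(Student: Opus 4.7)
The plan is to isolate $F_i$ from two identities already derived in Section \ref{sec:ON}, both of which are purely topological consequences of Euler's formula. First, Euler's formula applied to the single jacket $J_i$ is equation \eqref{Euler_charac}, namely $F_j + F_k - n = 2 - 2g_i$, where $\{j,k\} = \{1,2,3\}\setminus\{i\}$. Second, rearranging equation \eqref{deg_CTKT2} gives the total face count
$$F_1 + F_2 + F_3 = 3 + \frac{3}{2}n - \omega.$$

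The second step is to subtract the $J_i$-Euler identity from the total face count, which directly isolates $F_i$:
$$F_i = \Bigl(3 + \tfrac{3}{2}n - \omega\Bigr) - \bigl(2 - 2g_i + n\bigr) = 1 - \omega + 2g_i + \frac{n}{2},$$
as claimed.

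Honestly, there is no hard step here: the lemma is a bookkeeping identity that follows at once from the defining relations between $\omega$, the jacket genera $g_i$, and the face counts $F_i$, with no combinatorial input on the structure of $G$ beyond what is already encoded in \eqref{Euler_charac} and \eqref{deg_CTKT2}. As a consistency check, in the melonic case $\omega = 0$ forces $g_i=0$ for each $i$ (since each $g_i \geq 0$), and the formula collapses to $F_i = 1 + n/2$, which matches the known face counts of melonic graphs of degree zero.
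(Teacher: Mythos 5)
Your proof is correct and follows essentially the same route as the paper: the paper inverts the linear system formed by the three Euler relations \eqref{Euler_charac} to get $F_i = \frac{1}{2}(\chi_j+\chi_k-\chi_i+n)$ and then substitutes $\omega=\sum_i g_i$, whereas you perform the same elimination by subtracting one Euler relation from their sum (which is exactly \eqref{deg_CTKT2}). The two computations are algebraically identical, and your melonic consistency check is a nice touch.
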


The degree is usually given 
{in the literature} 
in terms of the number of faces. Here we have an inverse, simple relation.

\begin{proof}
{Recall that} 
jackets 
{are defined} 
 such that there are as many jackets as colors. As it turns out, the linear system made of the three equations \eqref{Euler_charac}
 can be inverted to get an expression for the number of faces of a given color in terms of the Euler characteristics of the jackets,
\begin{equation} \label{face_degree}
F_i = \frac{1}{2}(\chi_j+\chi_k-\chi_i+n).
\end{equation}
Using the definition of the degree $\omega=\sum g_i$, 
{leads to identity}
\eqref{face_degree2},
{which concludes the proof}.
\end{proof}

A standard method in the analysis of colored graphs is to 
{analyze} 
the length of faces.
\begin{itemize}
\item A face of length one is equivalent to a tadpole, {\it i.e.} the subgraph
\begin{equation}
\begin{array}{c} \includegraphics[scale=.5]{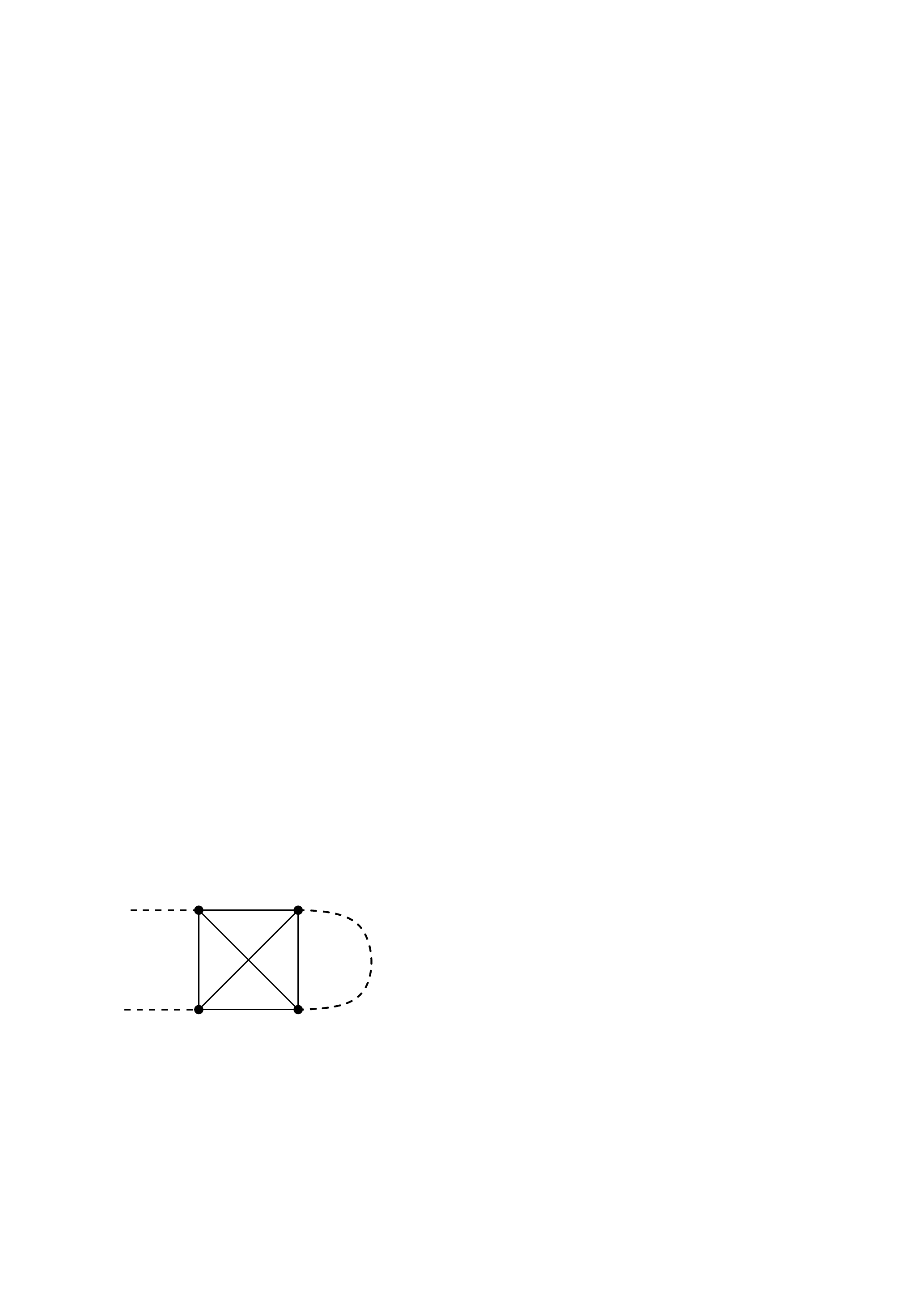} \end{array}
\end{equation}
for any color assignment. Then the graph is 2PR and after performing the flip as in \eqref{2EdgeCut}, \eqref{2EdgeCutDisconnected}, we have $G_R = \begin{array}{c} \includegraphics[scale=.2]{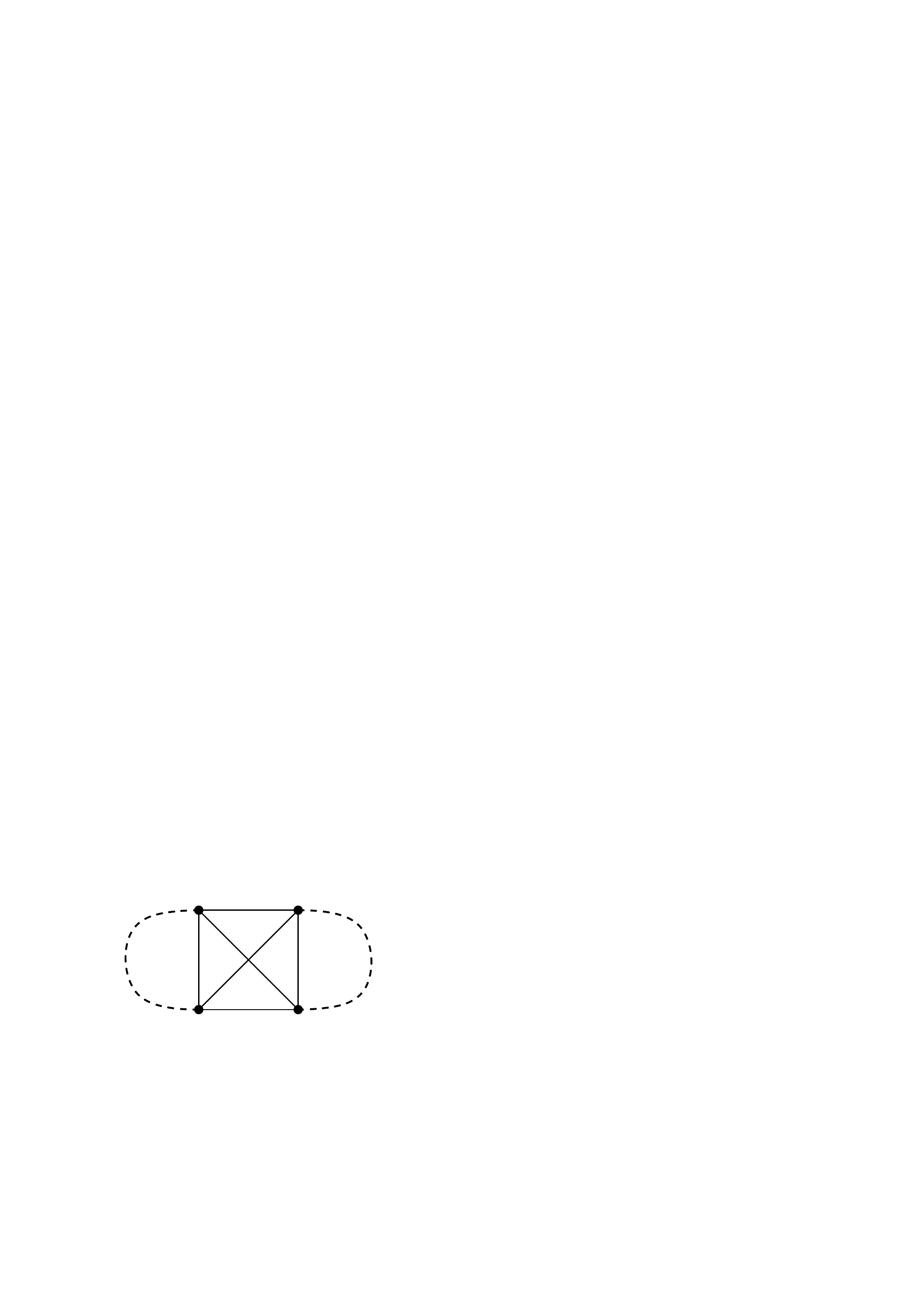} \end{array}$ which has degree $1/2$ and
\begin{equation}
\omega(G) = \omega(G_L) + 1/2.
\end{equation}
There are obviously no tadpoles in 2PI graphs.
\item A face of length 2 is part of dipole {(see the definition of dipoles above)} 
\end{itemize}
 We can thus focus on 2PI, dipole-free graphs, which in particular only have faces of length greater than or equal to three.

\begin{lemma} \label{FaceLength}
If $G$ is 2PI and dipole-free, then
\begin{equation} \label{cond_face}
\sum_{l\geq 5} (l-4)\,F_{i,l} = F_{i,3} + 4(\omega-2g_i-1)
\end{equation}
where $F_{i,l}$ is the number of faces of color $i$ and length $l$. 
\end{lemma}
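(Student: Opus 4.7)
The plan is to combine two elementary face-counting identities with the hypotheses that $G$ is 2PI and dipole-free. I would first count the total number of color-$0$ edges in two ways. Each edge of color $0$ belongs to exactly one face of color $i$, and a face of color $i$ of length $l$ contains $l$ edges of color $0$. Since the number of edges of color $0$ equals $E_0 = 2n$ (each of the $4n$ vertices carries exactly one half-edge of color $0$), this gives
\begin{equation}
\sum_{l \geq 1} l\, F_{i,l} \;=\; 2n.
\end{equation}

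Next I would invoke Lemma \ref{thm:FaceDegree}, which gives the total number of faces of color $i$:
\begin{equation}
\sum_{l \geq 1} F_{i,l} \;=\; F_i \;=\; 1 - \omega + 2g_i + \frac{n}{2}.
\end{equation}
The two hypotheses then eliminate the short-length contributions: a face of length $1$ is a tadpole, which by the discussion in Section \ref{sec:FaceLength} cannot occur in a 2PI graph, so $F_{i,1}=0$; a face of length $2$ is the defining feature of a dipole (of color $i$), so the dipole-free hypothesis forces $F_{i,2}=0$.

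Finally, the identity is obtained by a one-line manipulation: subtract $4$ times the second identity from the first to get
\begin{equation}
\sum_{l\geq 3}(l-4)\,F_{i,l} \;=\; 2n - 4\Bigl(1-\omega+2g_i+\tfrac{n}{2}\Bigr) \;=\; 4(\omega - 2g_i -1),
\end{equation}
and split the left-hand sum at $l=3$ and $l=4$; the $l=4$ contribution vanishes while the $l=3$ term contributes $-F_{i,3}$, which when moved to the right-hand side yields the claimed formula. There is no real obstacle here — the lemma is a direct bookkeeping consequence of Lemma \ref{thm:FaceDegree} together with the removal of the $l=1,2$ terms.
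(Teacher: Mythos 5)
Your proof is correct and follows essentially the same route as the paper: the edge-of-color-$0$ count $\sum_l l\,F_{i,l}=2n$, combined with Lemma \ref{thm:FaceDegree} to eliminate $n$, then using 2PI and dipole-freeness to discard $l=1,2$ and isolating the $l=3$ term. No gaps.
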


\begin{proof}
By noticing that each edge of color $0$ contributes to exactly one face of each color, and that there are $2n$ such edges, one obtains
\begin{equation}
\sum_{l\leq 1} l\,F_{i,l} = 2n.
\label{cond_long}
\end{equation}
Here $F_{i,l}$ is the number of faces of color $i$ and length $l$, and $n$ is the number of bubbles of the graph. Combining equations \eqref{face_degree2} and \eqref{cond_long}, the number of bubbles can be eliminated,
\begin{equation}
\sum_{l\leq 1} (l-4)\,F_{i,l} =  4(\omega - 2g_i -1).
\end{equation}
Restricting this equation to 2PI, dipole-free graphs implies that $l\geq 3$. The term $l=3$ in the sum is in fact the only 
 {term} 
with a negative sign in the left hand side and we  extract it to move it to the right hand side.
\end{proof}

If a jacket is orientable, 
{one has the following property:}

\begin{lemma} \label{thm:EvenLength}
If $G$ has its jacket $J_i$ orientable, then its faces of color $i$ are of even length, for $i\in\{1,2,3\}$.
\end{lemma}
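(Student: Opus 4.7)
My plan is to translate the orientability hypothesis into bipartiteness via Theorem \ref{thm:OrientableGraph}, and then to track the induced 2-coloring of the vertices along a face of color $i$. Let $\{i,j,k\}=\{1,2,3\}$. First I would apply Theorem \ref{thm:OrientableGraph} to the jacket $J_i$, which is a 3-colored graph on colors $\{0,j,k\}$, to deduce that its vertex set admits a bipartition into black and white vertices. Since $G$ and $J_i$ share the same vertex set, this 2-colors the vertices of $G$ as well, and by construction every edge of color $0$, $j$ or $k$ in $G$ joins vertices of opposite classes.

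The main step, and the one where the specific structure of the tetrahedral interaction enters, is to show that every edge of color $i$ in $G$ joins vertices of the \emph{same} class. For this I would look inside a single bubble $B$. Since $B$ is a tetrahedral bubble, its edges of colors $\{1,2,3\}$ form a proper 3-edge-coloring of $K_4$, i.e.\ three perfect matchings on its four vertices. Removing the color-$i$ edges leaves the bicolored $\{j,k\}$-cycle inside $B$, which is a 4-cycle; the global bipartition of $J_i$ restricts to a proper 2-coloring of this 4-cycle, so opposite vertices on it lie in the same class. The two color-$i$ edges of $B$ are exactly the chords of this 4-cycle, so each of them joins two vertices of the same class. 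I expect this to be the main combinatorial obstacle, since it is the only place where one has to use more than the bipartiteness of $J_i$ itself; once the tetrahedral structure of the bubbles has been exploited, the rest is purely formal.

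With this in hand, the conclusion follows quickly. A face of color $i$ is a cycle in $G$ alternating color-$0$ and color-$i$ edges, whose length $l$ is, by the conventions in subsection \ref{sec:FaceLength}, the number of color-$0$ (equivalently color-$i$) edges appearing in it. Traversing it once from any chosen base vertex, each color-$0$ edge toggles the bipartite class of the current vertex while each color-$i$ edge leaves it unchanged, so after one full traversal the class has been flipped exactly $l$ times. Since we must return to the starting vertex, and hence to its original class, $l$ must be even, which is the claim.
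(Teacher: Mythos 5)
Your proof is correct and follows essentially the same route as the paper: orientability of $J_i$ gives bipartiteness via Theorem \ref{thm:OrientableGraph}, edges of color $0$ join opposite classes while edges of color $i$ join the same class, and the parity of the face length follows by tracking the class along the alternating cycle. The only difference is that you explicitly justify the same-class property of color-$i$ edges from the tetrahedral structure (they are the chords of the bicolored $\{j,k\}$ $4$-cycle of each bubble), a point the paper simply asserts with a picture.
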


\begin{proof}
Without loss of generality, assume  $i=3$. As already 
{noticed} 
in the proof of Theorem \ref{thm:Bijection}, when $J_3$ is orientable, there is a coloring of the vertices of $G$ such that $J_3$ is bipartite. $G$ itself is not bipartite since the edges of color 3 connect white vertices to white vertices and black vertices to black vertices,
\begin{equation} \label{VertexColoringCTKT}
\begin{array}{c} \includegraphics[scale=.5]{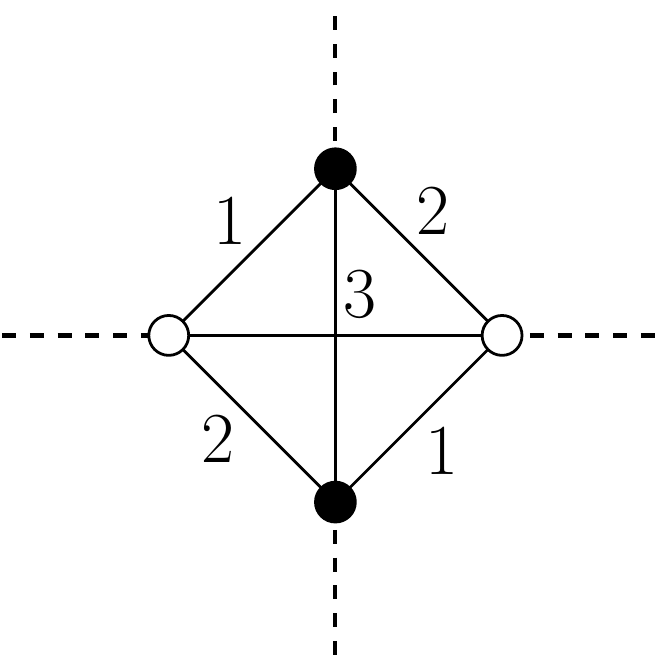} \end{array}
\end{equation}
The edges of color 0 (here dashed) then connect white to black and black to white vertices. We conclude that the faces of color 3 have even lengths.
\end{proof}

With the different operations described above, we will list in the following sections all the graphs of degree $1$ and $3/2$. For each degree, we first give the graphs that can be obtained through composition of smaller degree graph{s}. Then we give all the $2$PI graph{s} with no melons nor tadpoles, and finally we study the possible dipole insertions.  Our proof relies on 
degree $0$ and degree $1/2$ \ON invariant graphs.

\subsection{The strategy}
\label{sec:Strategy}

To identify all graphs at a given degree $\omega$, we propose the following strategy.
\begin{enumerate}
\item \label{enum:melon} Consider all graphs up to melonic 2-point functions on edges of color 0, since they leave the degree invariant.
\item \label{enum:2PR} Put aside the 2PR graphs of degree $\omega$. They are indeed obtained by composing two graphs $G_L, G_R$ of smaller degree and such that $\omega(G_L) + \omega(G_R) = \omega$. This leaves the 2PI graphs to be determined.
\item \label{enum:2PIDipoleFree} Find all 2PI, dipole-free graphs of degree $\omega$.
\item \label{enum:DipoleInsertions} Consider all (chains of) dipole insertions which preserve the degree on the graphs found above of degree $\omega$, and (chains of) dipole insertions on graphs of smaller degrees which bring it up to $\omega$. Then consider dipole insertions on the newly found graphs, and so on iteratively.
\item \label{enum:Induction} {Finally,} an induction on the number of bubbles allows to conclude. This 
{last step is} 
necessary because of the iterative nature of the previous step.
\end{enumerate}
Notice that the steps \ref{enum:2PR} and \ref{enum:DipoleInsertions} are ``automatic'': they rely on simple operations (composition forming a 2-edge-cut, and dipole insertions) to be performed on graphs already known and can be in practice automated.
Only step \ref{enum:2PIDipoleFree} requires an independent analysis, which has to be performed {"}by hand{"}.

\section{Degree $0$ and $1/2$ {graphs} of the \ON invariant tensor model} \label{sec:Degree0}


The degree $0$ and respectively $1/2$ { graphs of the \ON invariant model} are the melonic graphs and tadpoles {(see \cite{CTKT} for details)}. Melonic graphs are obtained by recursive melonic insertions, i.e. insertions of the 2-point graph \eqref{MelonicInsertion}, starting from the 2-bubble graph 
\begin{equation} \label{TwoBubbleGraph}
\begin{array}{c} \includegraphics[scale=.5]{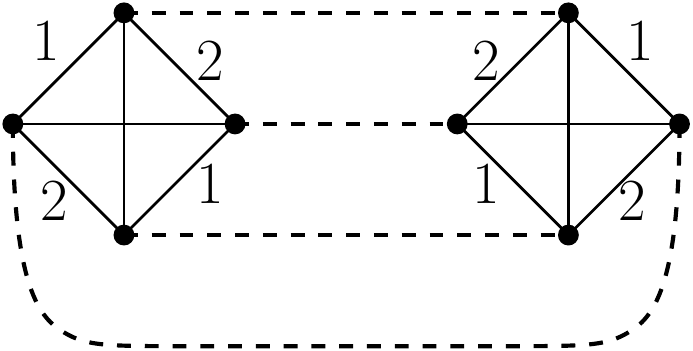} \end{array}
\end{equation}
which is invariant under color permutations.

The graphs of degree $1/2$ are obtained by recursive insertions of the same 2-point graph \eqref{MelonicInsertion} starting with the double-tadpole
\begin{equation} \label{DoubleTadpole}
\begin{array}{c} \includegraphics[scale=.5]{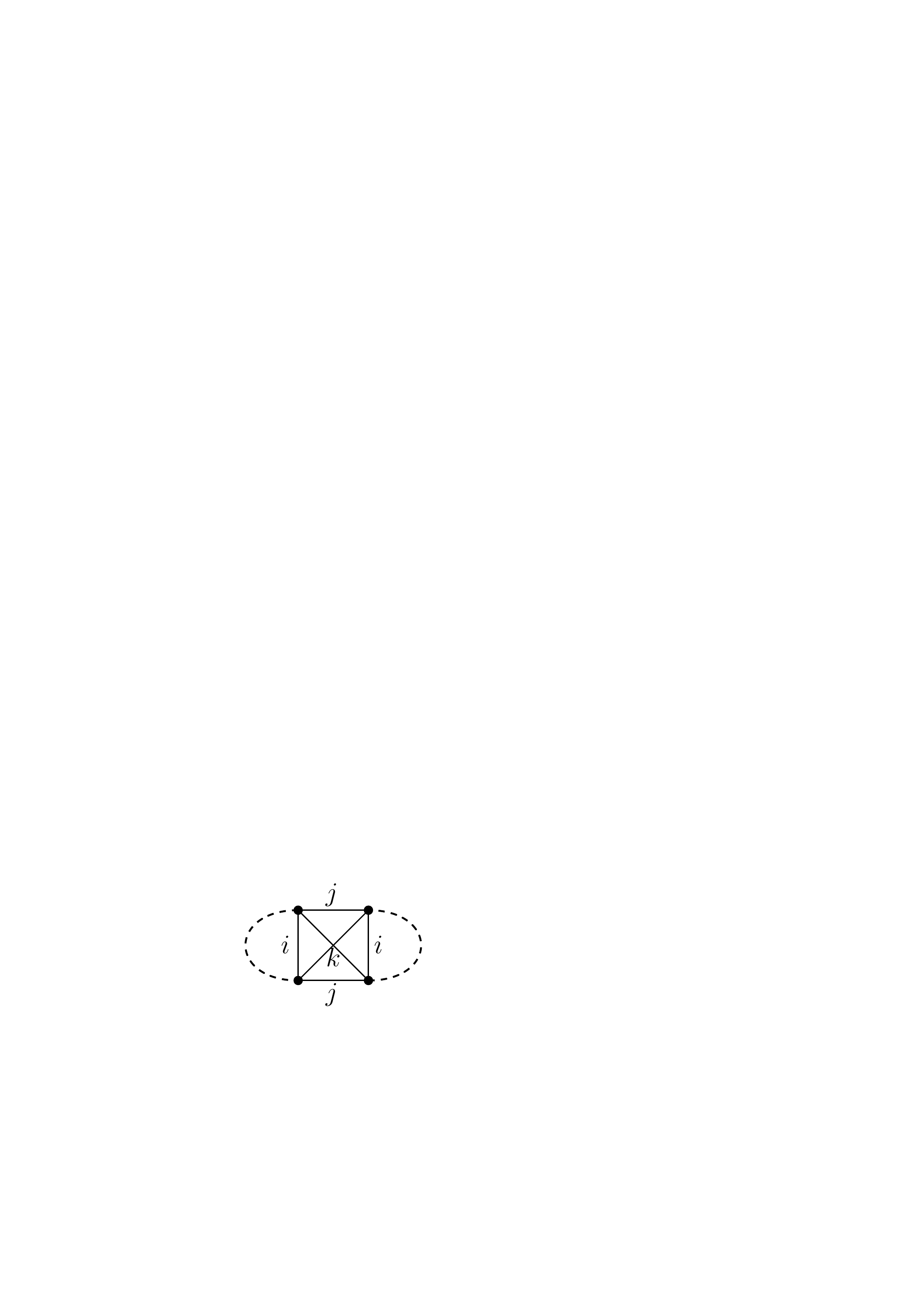} \end{array}
\end{equation}
Notice that there are 3 different double-tadpoles, depending on the color $i$ of the two faces of length one.


\section{Degree $1$ graphs of the $O(N)$ invariant SYK-like tensor model} \label{sec:degre1}

In this section we focus on {the} $\omega=1$ {case} and apply the strategy proposed in {subs}ection \ref{sec:Strategy}. We {thus} find all the Feynman graphs of degree $1$ of the \ON-invariant SYK-like tensor model. Notice that this also gives all the graphs of the MO model.

As already emphasized, the only non-automatic step of our strategy is Step \ref{enum:2PIDipoleFree} whose goal is to obtain the 2PI, dipole-free graphs of degree 1. This is what we start with, in {subs}ection \ref{sec:2PIDipoleFreeDegree1}. We then go directly to Step \ref{enum:Induction} since all the other steps are automatic, 
in {subs}ection \ref{sec:Degree1AllGraphs}.

\subsection{2PI, dipole-free graph of degree 1} \label{sec:2PIDipoleFreeDegree1}

\begin{theorem} \label{thm:Degree1}
There is an unique 2PI, dipole-free, \ON-invariant graph of degree 1, given below
\begin{equation} \label{deg12PI}
\begin{array}{c} \includegraphics[scale=0.5]{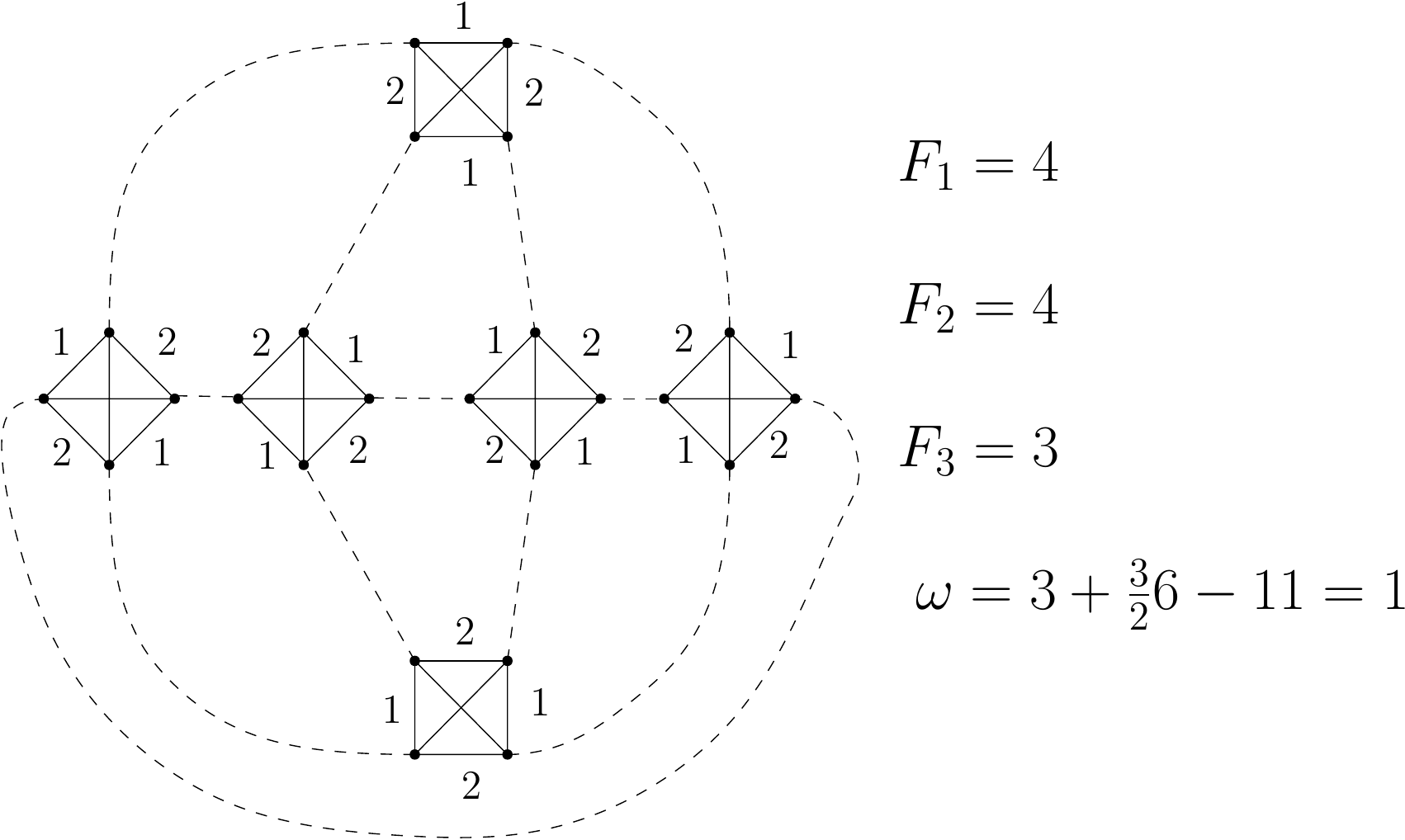} \end{array}
\end{equation}
\end{theorem}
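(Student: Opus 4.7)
The plan is to combine the genus decomposition $\omega = g_1+g_2+g_3 = 1$ with the face-length identity of Lemma \ref{FaceLength} and the parity constraint of Lemma \ref{thm:EvenLength}, reducing the problem to a small case analysis and then reconstructing the 2PI dipole-free graph by hand.

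First, since each $g_i$ is a non-negative half-integer and $\sum_i g_i = 1$, up to color relabeling only two genus patterns are possible: $(g_1,g_2,g_3) = (1,0,0)$ or $(1/2,1/2,0)$. Specializing Lemma \ref{FaceLength} to $\omega = 1$ gives
\begin{equation*}
F_{i,3} \;=\; 8\,g_i \;+\; \sum_{l\ge 5}(l-4)\,F_{i,l}.
\end{equation*}
A jacket of genus $0$ is a sphere, hence orientable, so Lemma \ref{thm:EvenLength} forbids odd face lengths of color $i$; together with $\sum_l l\,F_{i,l} = 2n$ and the dipole-free bound $l\ge 3$, this forces every face of that color to be a quadrilateral, so $F_i = n/2$. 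A jacket with $g_i = 1/2$ only yields $F_{i,3}\ge 4$, and $g_i = 1$ gives $F_{i,3}\ge 8$. Inserting these into $F_1+F_2+F_3 = 2 + 3n/2$ produces the lower bounds $n\ge 12$ in the pattern $(1,0,0)$ and $n\ge 6$ in the pattern $(1/2,1/2,0)$, with equality forcing the minimal face distribution.

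I then intend to eliminate the pattern $(1,0,0)$ by exploiting that two colors carry only quadrilateral faces: by tracking how these squares fit around a single bubble and using 2PI together with dipole-freeness, I expect to locate either a dipole of the remaining color or a 2-edge-cut, contradicting the hypotheses. The constructive core is the case $(1/2,1/2,0)$ with $n = 6$: three quadrilateral faces of color $3$ plus four length-$3$ faces of each of colors $1$ and $2$, carried by $24$ vertices and $12$ edges of color $0$. Fixing one quadrilateral of color $3$ and propagating the face incidences along its color-$0$ edges, the 2PI condition forbids two length-$3$ faces from sharing a pair of parallel color-$0$ edges (which would be a 2-edge-cut) and dipole-freeness forbids a length-$3$ face of color $1$ and one of color $2$ from bounding a common digon. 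These local constraints should leave essentially one admissible gluing, matching the graph displayed in \eqref{deg12PI}. A final short argument excludes $n > 6$ in this pattern: any extra length-$3$ or length-$4$ face beyond the minimal count would create either a dipole or a 2-edge-cut.

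The main obstacle will be the reconstruction and uniqueness argument in the minimal case, since the ``by hand'' step anticipated in Section \ref{sec:Strategy} amounts to controlling every way of gluing three quadrilaterals and eight triangles across the twelve color-$0$ edges compatibly with the tetrahedral bubble pattern. I expect this to require a careful enumeration of face incidences at each color-$0$ edge, combined with an appeal to the color symmetry between $1$ and $2$, rather than a single slick argument.
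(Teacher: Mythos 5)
Your counting framework is sound and matches the paper's: the genus split $(1,0,0)$ vs.\ $(1/2,1/2,0)$, the specialization of Lemma \ref{FaceLength} to $F_{i,3}=8g_i+\sum_{l\ge 5}(l-4)F_{i,l}$, the conclusion via Lemma \ref{thm:EvenLength} that the planar-jacket color carries only quadrilateral faces, and the resulting minimal configuration ($n=6$, three squares of color $3$, four triangles of each of colors $1$ and $2$) are all correct and are exactly the arithmetic the paper uses. But everything after that point is deferred rather than proved, and that deferred part \emph{is} the proof. Three steps are left as intentions: the elimination of the pattern $(1,0,0)$ (``I expect to locate either a dipole \ldots or a 2-edge-cut''), the uniqueness of the gluing in the minimal case (``should leave essentially one admissible gluing''), and the exclusion of $n>6$ (``a final short argument''). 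None of these is automatic. In particular, the local constraints you propose to drive the gluing are not valid as stated: two color-$0$ edges sharing two (or even all three) faces is \emph{not} a 2-edge-cut --- that is precisely the configuration of Lemma \ref{thm:NumberColors} where a degree-preserving dipole can be inserted, and it occurs in perfectly 2-edge-connected graphs. Likewise the claim that any face count beyond the minimal one creates a dipole or 2-edge-cut is unsubstantiated; in the paper the bound $n=6$ is a \emph{consequence} of the full reconstruction, not an a priori input.

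The paper's proof supplies the missing mechanism: it does not bound $n$ or split the genus patterns at all. It uses the canonical planar embedding of the bipartite jacket $J_3$ (Remark \ref{thm:CanonicalEmbedding}) to embed $G$ itself, starts from one triangle of color $1$ (guaranteed by $g_1>0$), and then forces every color-$0$ edge one at a time: each candidate attachment is excluded because it creates a tadpole, a dipole, a color-$3$ face of length $\ne 4$, or a crossing in the planar embedding of $J_3$. Planarity is the tool that makes the case analysis finite and tractable; your plan never invokes the embedding, so ``controlling every way of gluing three quadrilaterals and eight triangles across the twelve color-$0$ edges'' remains an open-ended enumeration. As it stands the proposal is a correct reduction to the hard step plus a sketch of that step containing at least one false lemma, so it does not yet constitute a proof of Theorem \ref{thm:Degree1}.
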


\begin{proof}
Let $G$ be a 2PI graph of degree $1$ with no dipoles
{and} 
with $n$ bubbles. Considering \eqref{deg_CTKT}, the genera of its three jackets are either $(g_1,g_2,g_3)=(0,\frac{1}{2},\frac{1}{2})$ or $(g_1,g_2,g_3)=(0,0,1)$ up to color permutations. In both cases, $G$ has a planar jacket which we assume{,} without loss of generality{,} to be $J_3$, hence $g_3 = 0$.

Since $J_3$ is planar, it is bipartite and thus admits a canonical embedding described in Remark \ref{thm:CanonicalEmbedding}, obtained by using the counter-clockwise orientation for the colors $(012)$ around white vertices{,} and clockwise around black vertices. This in turn provides a canonical embedding for $G$ itself, obtained by adding the edges of color 3 at the corners between the colors 1 and 2. We use this representation in the remaining of the proof.

Moreover, among the two other jackets, at least one, say $J_1$, has non-zero genus, i.e. $g_1>0$. Using equation \eqref{cond_face} with $\omega=1$, we find{:}
\begin{equation}
F_{1,3} = 8 g_1 + \sum_{l\geq 5} (l-4)\, F_{1,l} > 0.
\end{equation}
Let us thus focus on a face of color 1 in $G$,
\begin{equation} \label{FaceLength3}
\begin{array}{c} \includegraphics[scale=.5]{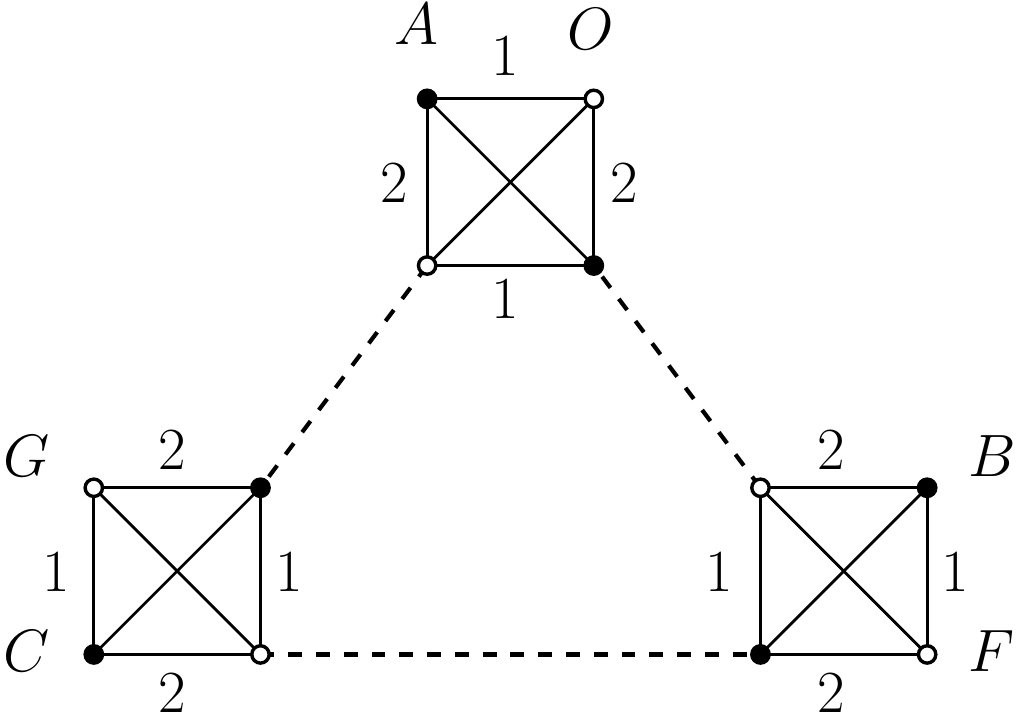} \end{array}
\end{equation}
where the letters label vertices.

To complete the graph, we need to investigate the faces of color 3. Lemma \ref{thm:EvenLength} applies to $J_3$: the faces of $G$ of color 3 are of even lengths. Since $G$ is 2PI, it has no tadpoles, so no faces of length 1, and since it is dipole-free, it also has no faces of length 2. The faces of color 3 thus have length at least 4. Equation \eqref{cond_face} with $g_3=0$ and $\omega=1$ becomes
\begin{equation}
\sum_{l\geq 5} (l-4)\,F_{3,l} = 0,
\end{equation}
meaning {that} all the faces of color $3$ of $G$ have to be of length exactly $4$. We will thus complete $G$ so that all its faces of color 3 have length 4.

Consider the face of color 3 going through the vertex $O$ and 
{recall} 
 that edges of color 0 only connect black to white vertices. If $O$ is connected to $A$, a tadpole would be created which is forbidden. If $O$ is connected $B$, a dipole would be formed{,} which is 
{also}  
 forbidden, and similarly for $C$. 
 {The vertex} $O$ must therefore be connected to a vertex of another bubble
\begin{equation}
\begin{array}{c} \includegraphics[scale=.5]{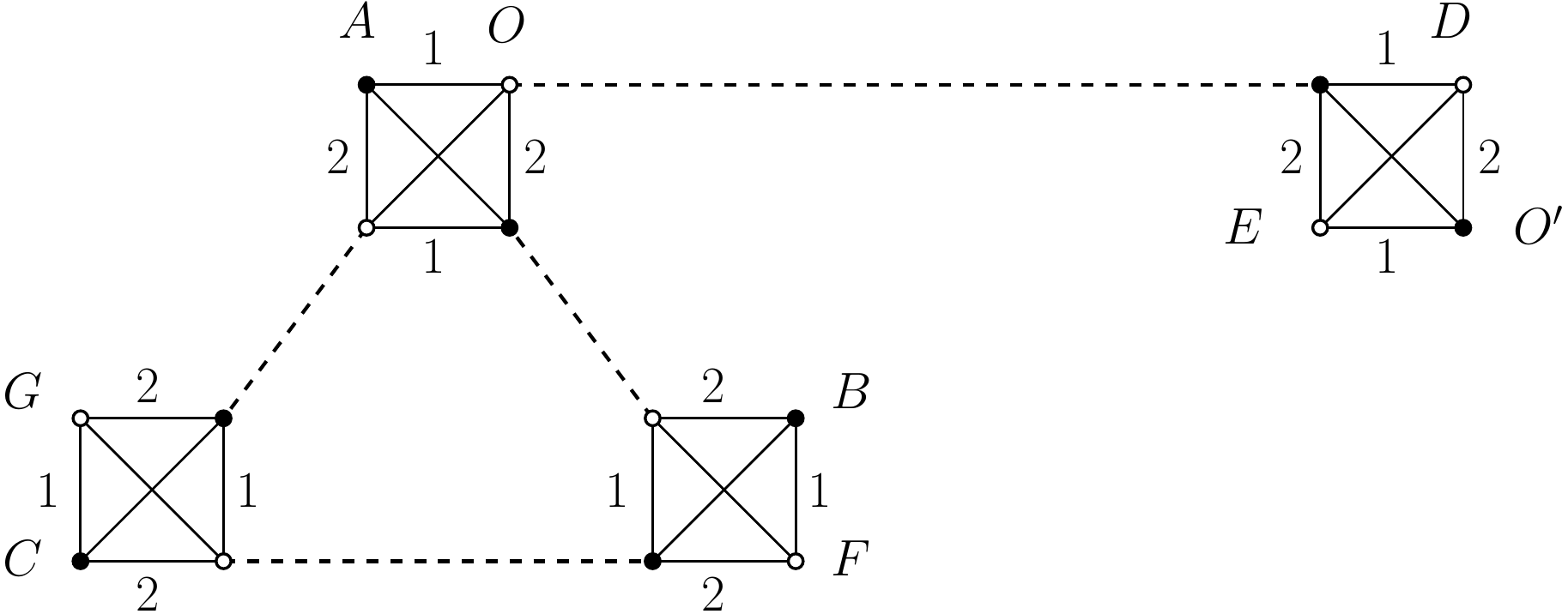} \end{array}
\end{equation}
The vertex $O'$ cannot be connected to $D$ or $E$ because this would create tadpoles. It cannot connect to $F$ or $G$ because the face would have length greater than 4. Therefore $O'$ must be connected to another bubble 
\begin{equation}
\begin{array}{c} \includegraphics[scale=.5]{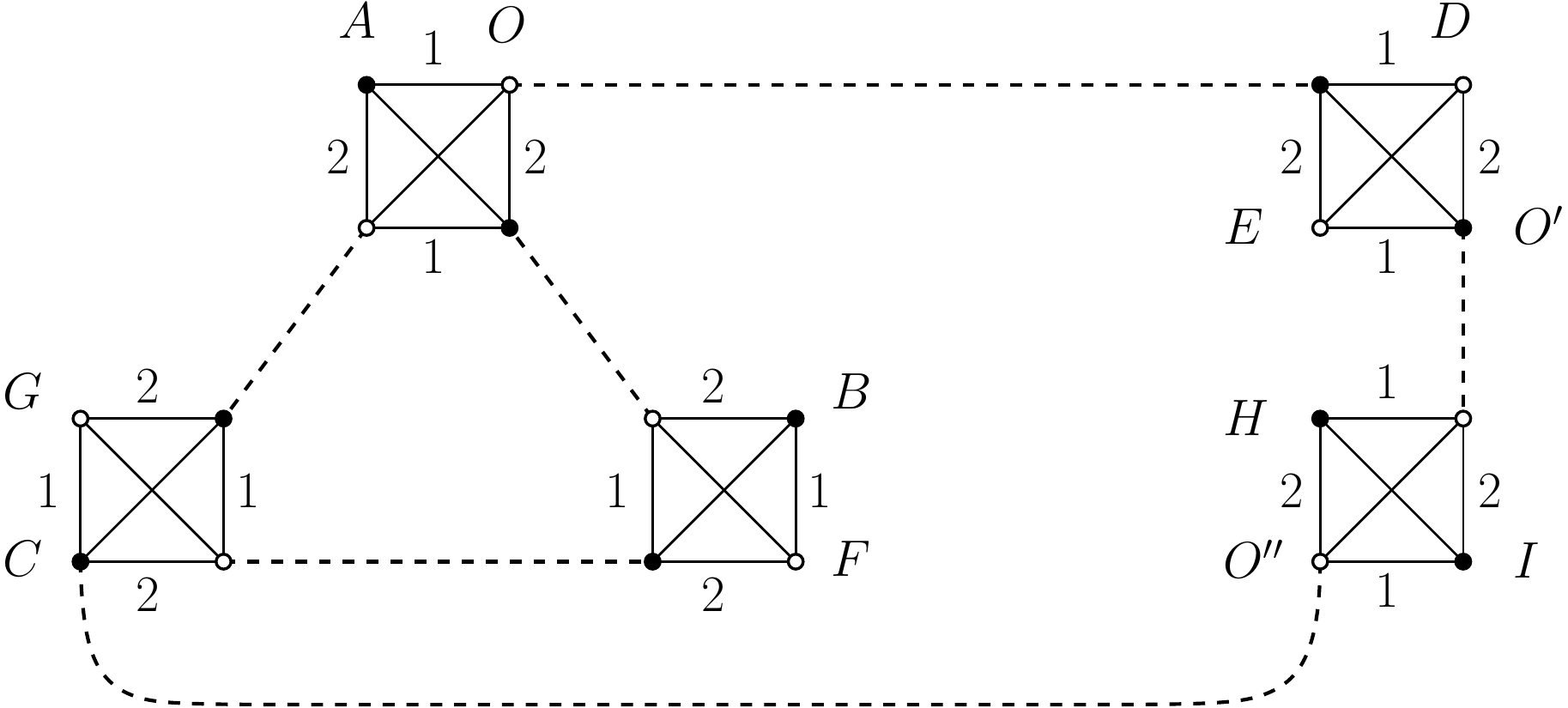} \end{array}
\end{equation}
where the connection of $O''$ to $C$ is forced to get a face of length 4.

Let us now consider the face of color 3 going through $A$ and $F$. First, assume that face connects to a new bubble and let us prove that this cannot be true. Planarity of $J_3$ requires the new bubble to lie in the same region as follows
\begin{equation}
\begin{array}{c} \includegraphics[scale=.5]{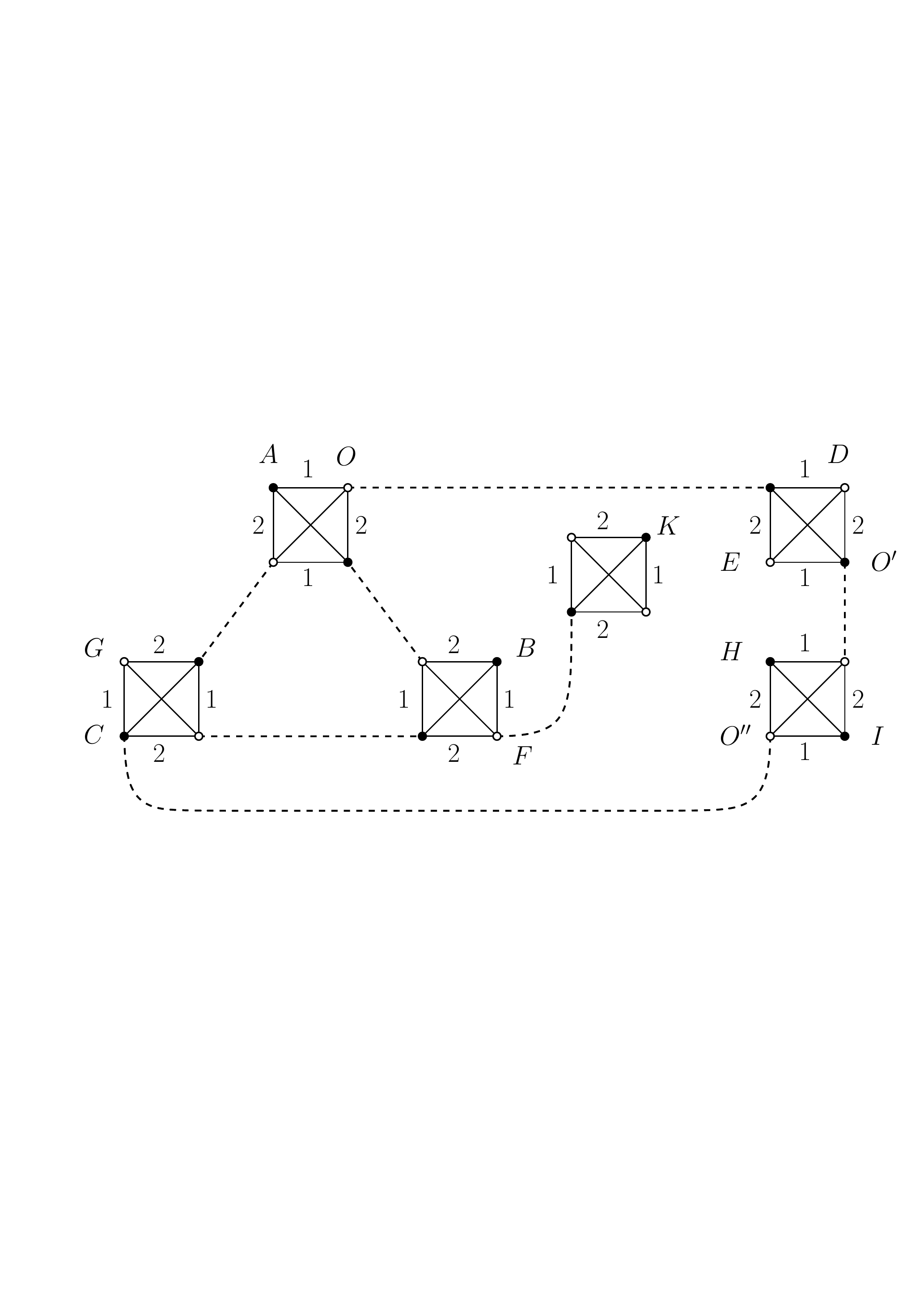} \end{array}
\end{equation}
Again $K$ cannot be connected to the white vertices of its bubbles as that would create tadpoles. If it is connected to $E$, this then forces an edge of color 0 from $D$ to $A$ so that the face be of length 4, but this creates a dipole. Any other connection would be non-planar for $J_3$, except if $K$ is connected to another new bubble. However, it is then easy to check that this new bubble would have to be connected to $A$ to close the face at length 4 and that would be non-planar.

The only possibility is thus that $F$ is connected to $H$ instead. Then {the vertex} $I$ cannot however be connected to {the vertices} $D$ or $G$ without creating a dipole. This requires $I$ to be connected to a new bubble,
\begin{equation}
\begin{array}{c} \includegraphics[scale=.5]{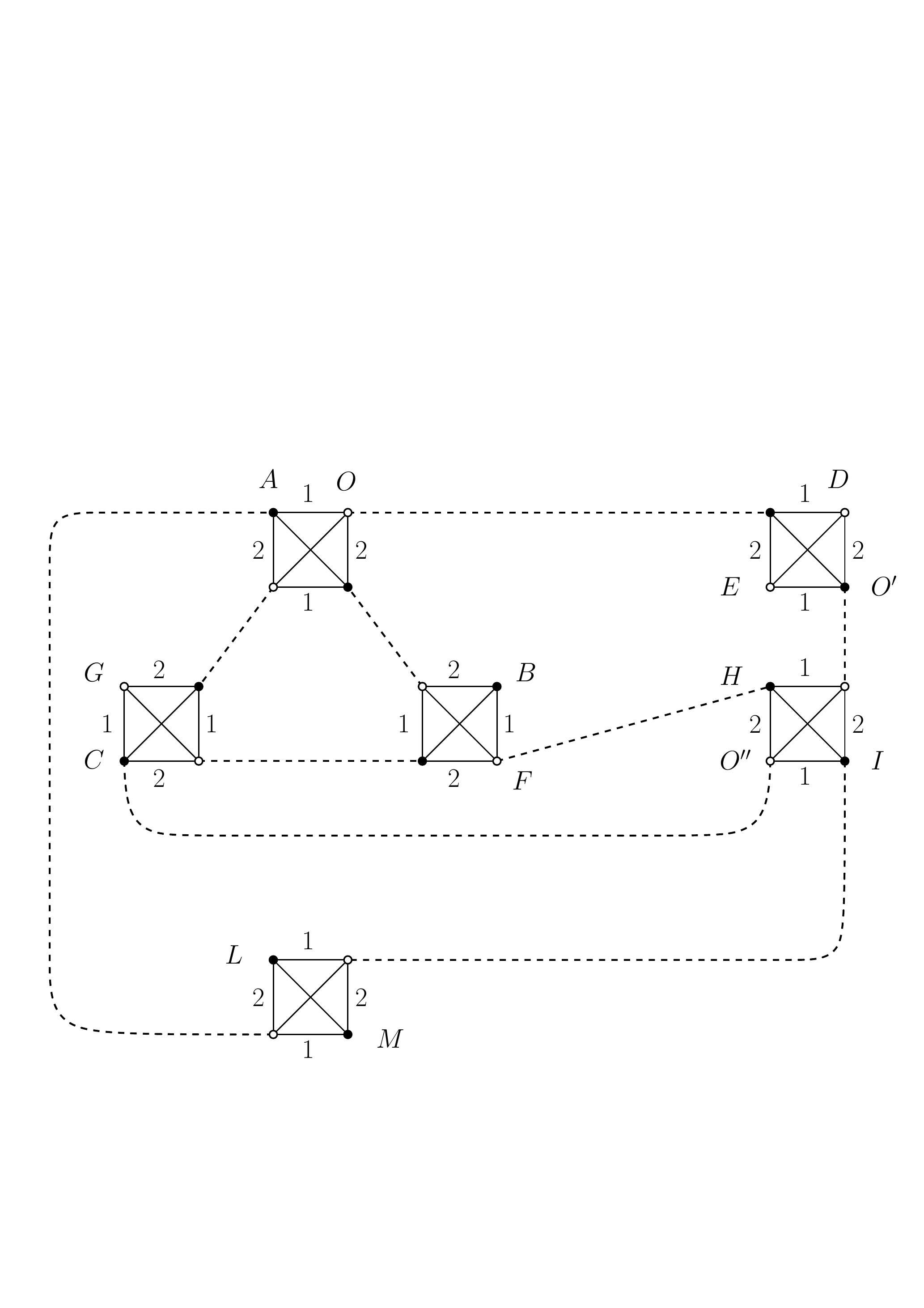} \end{array}
\end{equation}
which is in turn connected to $A$ to close the face of color 3 at length 4.

Finally, consider the face of color 3 going through $G$ and $B$. We explained above that $F$ could no be connected to a new bubble. The same reasoning applies to $B$, as can be checked directly, so that it has to be connected to $E$. Then we must create a path with two edges of color 0 and one edge of color 3 between $D$ and $G$. It is straightforward to check that if $D$ connects to a new bubble then one has to break planarity of $J_3$ to close the face. Eventually, the only possibility is to connect $D$ to $M$ and $L$ to $G$. This is the graph given in the theorem.
\end{proof}

\subsection{The graphs of degree 1} \label{sec:Degree1AllGraphs}

\begin{theorem} \label{thm:Degree1AllGraphs}
The graphs of degree 1 are the graph given in \eqref{deg12PI} and the following graphs
\begin{gather}
\begin{array}{c} \includegraphics[scale=.4]{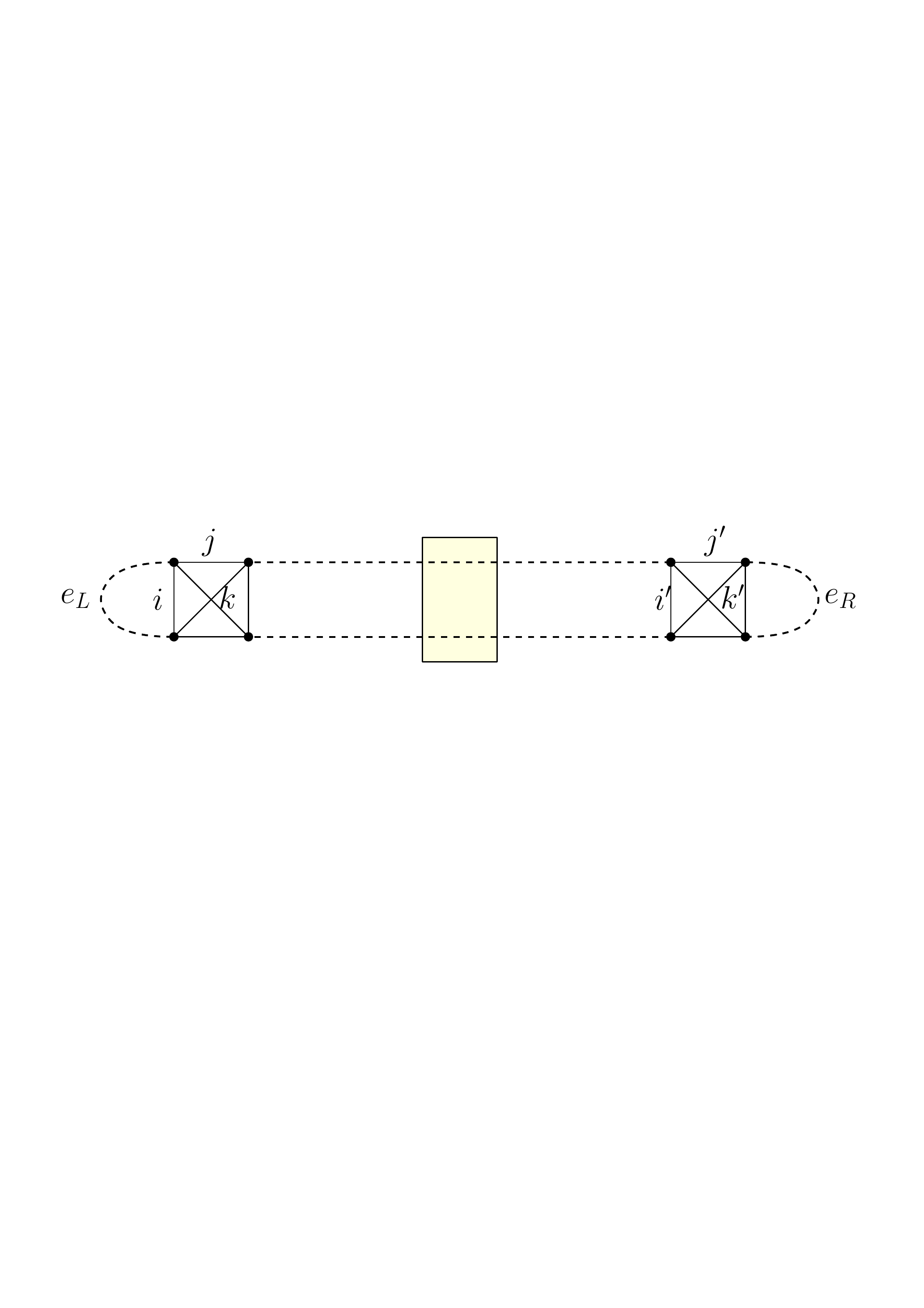} \end{array} \label{DoubleTadpoleChain} \\
\begin{array}{c} \includegraphics[scale=.4]{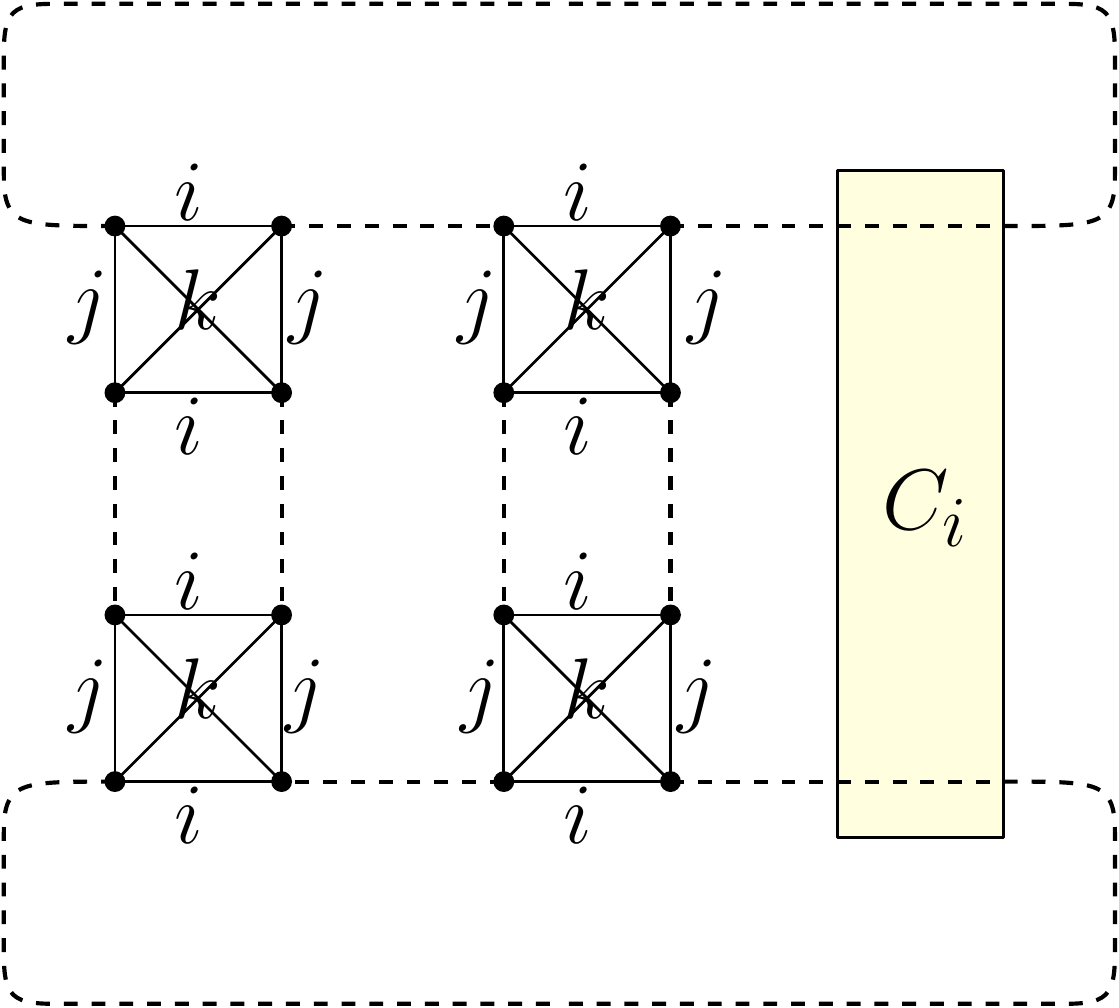} \end{array} \label{Family2Degree1}
\end{gather}
In this Theorem, dashed edges represent {\bf any melonic 2-point function}, including in \eqref{deg12PI}, and the boxes represent possibly empty chains of dipoles.
\end{theorem}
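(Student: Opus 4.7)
The plan is to follow the five-step strategy of subsection \ref{sec:Strategy}. Step 3 is already done: Theorem \ref{thm:Degree1} identifies \eqref{deg12PI} as the unique 2PI, dipole-free graph of degree 1. Step 1 I will handle at the very end, by freely reinstating arbitrary melonic 2-point functions on every edge of color 0 of the graphs found in the intermediate steps, which is legitimate since such insertions preserve the degree.

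For Step 2, I will apply the additivity $\omega(G) = \omega(G_L) + \omega(G_R)$ at any 2-edge-cut. The only nontrivial splitting for $\omega=1$ is $(1/2,1/2)$, since the $(1,0)$ case is already absorbed by Step 1 (a degree-$0$ component is melonic). Using Section \ref{sec:Degree0}, each degree-$1/2$ component is a melonic dressing of a double-tadpole, and Lemma \ref{thm:EdgeCutting} shows that cutting a color-$0$ edge inside such a component produces a tadpole attached to a (possibly melon-decorated) chain of dipoles. Gluing two such 2-point graphs along the 2-edge-cut will yield exactly the double-tadpole chain \eqref{DoubleTadpoleChain}.

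For Step 4, I will handle two subcases. Degree-preserving dipole insertions on \eqref{deg12PI} require, by Lemma \ref{thm:NumberColors} together with the third case of Proposition \ref{thm:Dipoles}, a pair of color-$0$ edges sharing three faces; a direct inspection of \eqref{deg12PI} shows that no such pair exists outside the melonic decorations already accounted for by Step 1, so no new graph arises here. Degree-increasing dipole insertions on smaller-degree graphs lifting the degree to $1$ can happen either by a single degree-$+1$ insertion (fourth case of Proposition \ref{thm:Dipoles}) on a degree-$0$ melonic graph, or by a dipole insertion on a degree-$1/2$ graph; a careful case analysis based on the classification of melonic graphs by iterated melonic insertions on the 2-bubble graph \eqref{TwoBubbleGraph}, together with the face-sharing constraints of Lemma \ref{thm:NumberColors}, will single out \eqref{Family2Degree1} as the only new family, possibly dressed by chains of degree-preserving dipoles as indicated by the boxes.

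The main obstacle will be this case analysis in Step 4: one must enumerate, on each smaller-degree representative, all edge pairs $\{e,e'\}$ whose face structure permits a dipole insertion with the required degree shift, and then iterate. Lemma \ref{thm:Commuting} will simplify matters by ensuring that this enumeration need only be carried out on melon-free representatives, so that melonic decorations do not introduce spurious new possibilities. Finally, Step 5 is an induction on the number of bubbles: any graph of degree $1$ either contains a removable melonic 2-point function (reducing the bubble count by $2$ while staying at degree $1$), a 2-edge-cut (reducing to the $(1/2,1/2)$ splitting of Step 2), a removable chain of dipoles (reducing to a graph of smaller bubble count by Step 4 in reverse), or is one of the minimal representatives \eqref{deg12PI}, a seed in the chain \eqref{DoubleTadpoleChain}, or a seed in the family \eqref{Family2Degree1}, which closes the classification.
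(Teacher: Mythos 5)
Your overall architecture (the five-step strategy, induction on the number of bubbles, the 2PR/2PI split, reduction via dipole removal) is the same as the paper's, and your Steps 1--3 and the $(1/2,1/2)$ gluing that produces \eqref{DoubleTadpoleChain} are fine. The genuine gap is in Step 4, where the family \eqref{Family2Degree1} is supposed to appear. First, a small but consequential misstatement: by Lemma \ref{thm:NumberColors} and the third case of Proposition \ref{thm:Dipoles}, a degree-preserving dipole insertion on $\{e,e'\}$ requires the two edges to share \emph{two} faces, not three; a pair sharing all three faces yields a melonic insertion. Checking only for triply-shared faces is a strictly weaker test and could miss admissible insertions (for \eqref{deg12PI} the conclusion happens to survive, since no two edges there share more than one face, but the criterion as you state it is wrong).

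More seriously, both mechanisms you propose as the source of \eqref{Family2Degree1} are vacuous. Dipole removals change the degree by $0$, $1$, $2$ or $4$ (Proposition \ref{thm:Dipoles}), so no dipole insertion can lift a degree-$1/2$ graph to degree $1$. And a degree-increasing insertion on a melonic graph does not exist either: by Lemma \ref{thm:Commuting} one may test insertions directly on the two-bubble graph \eqref{TwoBubbleGraph}, where every candidate pair of color-$0$ edges shares all three faces, so every dipole insertion is in fact a melonic insertion and preserves the degree. Consequently, in the inductive step, removing a dipole from a 2PI degree-$1$ graph must land on a degree-$1$ graph, and \eqref{Family2Degree1} actually arises from a \emph{degree-preserving} dipole insertion on the two-bubble members of the family \eqref{DoubleTadpoleChain} (the double tadpoles with empty chain), followed by chain extensions. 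Your plan only examines degree-preserving insertions on \eqref{deg12PI}, so as written it would never generate \eqref{Family2Degree1} at all; you need to enumerate the degree-preserving insertions on every graph already in the family, in particular on the double-tadpole chains, and verify that the only new outputs are \eqref{Family2Degree1} and chain lengthenings.
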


We have drawn explicitly two dipoles of color $i$ in \eqref{Family2Degree1} so that when the chain $C_i$ is empty the graph has 4 bubbles and is indeed of degree 1 (if it had only two bubbles it would be melonic).

\begin{proof}
We use an induction on the number of bubbles. 

\paragraph{Case $G$ 2PR.} Then we perform the flip as in \eqref{2EdgeCutDisconnected} which turns $G$ into the pair $\{G_L, G_R\}$ such that $\omega(G) = \omega(G_L)+\omega(G_R)$.
\begin{itemize}
\item If $G_L$ (or $G_R$) has vanishing degree, the other graph has degree 1 and fewer bubbles so we can apply the induction hypothesis. {The graph} $G_L$ is melonic and we find that $G$ is according to the theorem upon re-inserting $G_L$.
\item If $\omega(G_L)=\omega(G_R)=1/2$, then they are described by \eqref{DoubleTadpole}: double tadpoles with arbitrary melonic insertions on the edges of color 0. {The graph} $G$ is then a composition of $G_L$ and $G_R$. From Lemma \ref{thm:EdgeCutting} it is easy to see that this gives rise to the family \eqref{DoubleTadpoleChain}.
\end{itemize}
All the other cases correspond to $G$ being 2PI.

\paragraph{Case $G$ 2PI and dipole-free.} Then $G$ is exactly the graph \eqref{deg12PI}.

\paragraph{Case $G$ 2PI with a dipole.} Then the dipole can be eliminated as in \eqref{DipoleRemoval} leading to $G'$ connected, with two bubbles less than $G$. As seen in Proposition \ref{thm:Dipoles}, dipole removals decrease or preserve the degree. Since $G$ has degree 1, only the third and fourth cases of the proof of Theorem \ref{thm:Dipoles} may appear. They are the cases where the Lemma \ref{thm:NumberColors} and {Lemma} \ref{thm:Commuting} apply.
\begin{itemize}
\item If the dipole insertion decreases the degree by 1, then $G'$ is melonic. It is thus obtained by melonic insertions on the 2-bubble graph \eqref{TwoBubbleGraph}. From Lemma \ref{thm:Commuting}, we can study the dipole insertions directly on \eqref{TwoBubbleGraph}. Then, for any pair of edges of color 0 in \eqref{TwoBubbleGraph}, it is easy to check that a dipole insertion is in fact a melonic insertion, which in particular preserves the degree. 
\item Therefore the dipole removal from $G$ must preserves the degree: $\omega(G')=1$. From the induction hypothesis, $G'$ has the form given in the theorem. The last step of the proof is thus to verify that performing a dipole insertion which preserves the degree on any of those graphs does not create a graph which is not already in this family. 
{Using} 
Lemma \ref{thm:Commuting}, we can consider the three types of graphs of the theorem without any melonic 2-point graph.

\begin{itemize}
\item $G'$ has the form \eqref{DoubleTadpoleChain}. First consider the cases with only two bubbles. There are two of them, starting with
\begin{equation} \label{2BubbleTadpole1}
G' = \begin{array}{c} \includegraphics[scale=.5]{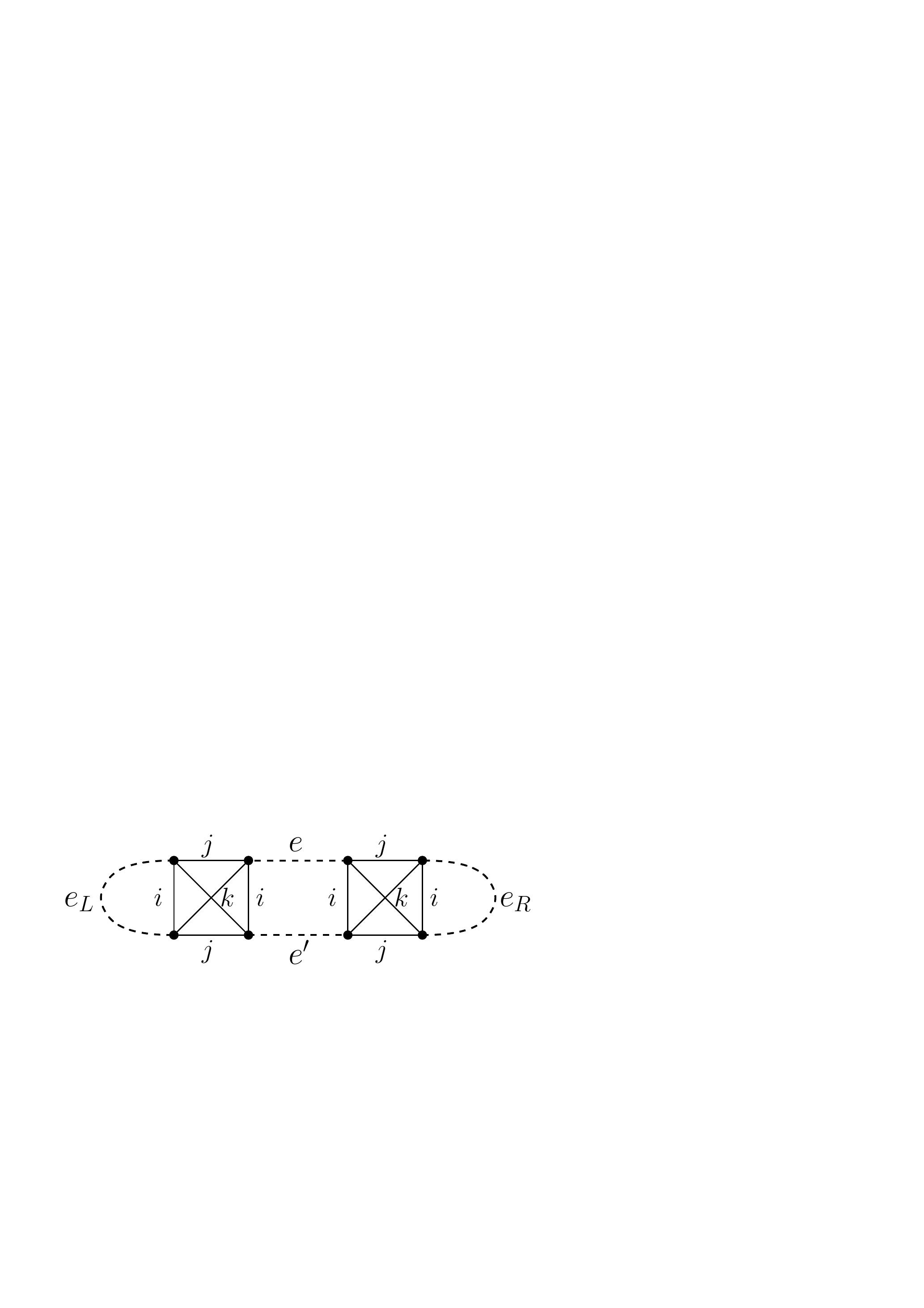} \end{array}
\end{equation}
where the same faces of color $j$ and $k$ go along $e_L$ and $e_R$. Then a dipole insertion which preserves the degree can be performed on these two edges and leads to the graph \eqref{Family2Degree1} with four bubbles, i.e.
\begin{equation} \label{Family2Degree1FourBubbles}
\begin{array}{c} \includegraphics[scale=.4]{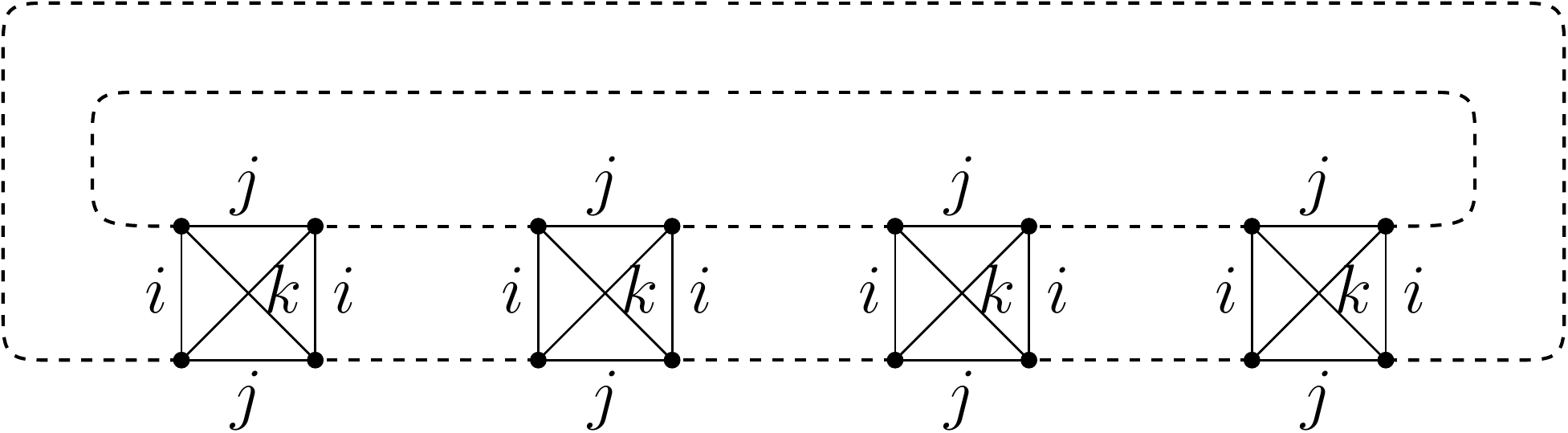} \end{array}
\end{equation}
 The other case is 
\begin{equation} \label{2BubbleTadpole2}
G' = \begin{array}{c} \includegraphics[scale=.5]{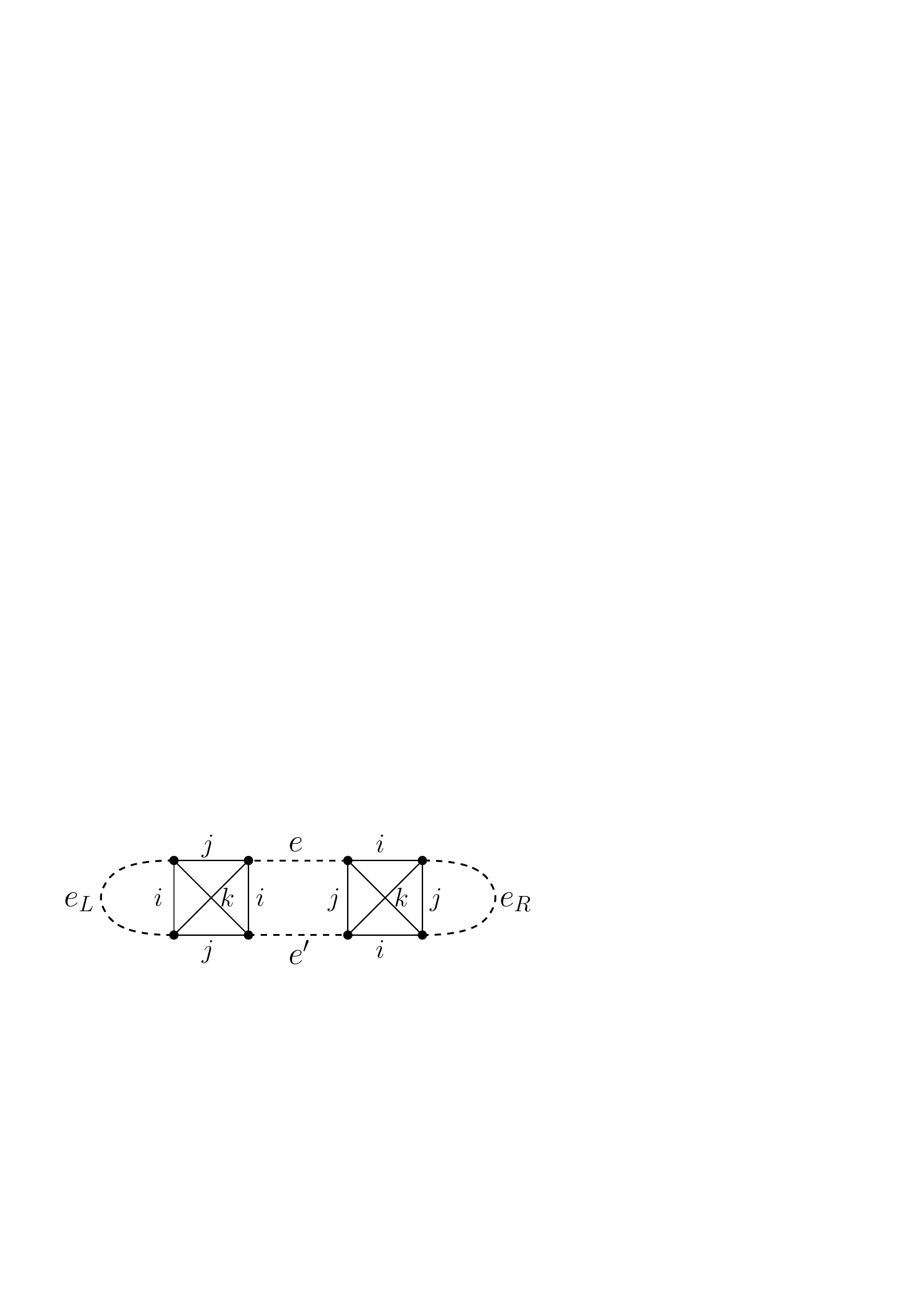} \end{array}
\end{equation}
where no dipole insertion preserving the degree can be done on $e_L, e_R$ this time. However, both in \eqref{2BubbleTadpole1} and \eqref{2BubbleTadpole2}, 
{the edges} 
$e$ and $e'$ have in common their faces of each color and a dipole of any color can be inserted without changing the degree. This 
{leads to} 
 a graph 
{such as} 
  \eqref{DoubleTadpoleChain}{,} with four bubbles.

Consider now a graph \eqref{DoubleTadpoleChain} with a non-empty chain. The dipole in position $l$ has color $j_l$,
\begin{equation}
\begin{array}{c} \includegraphics[scale=.5]{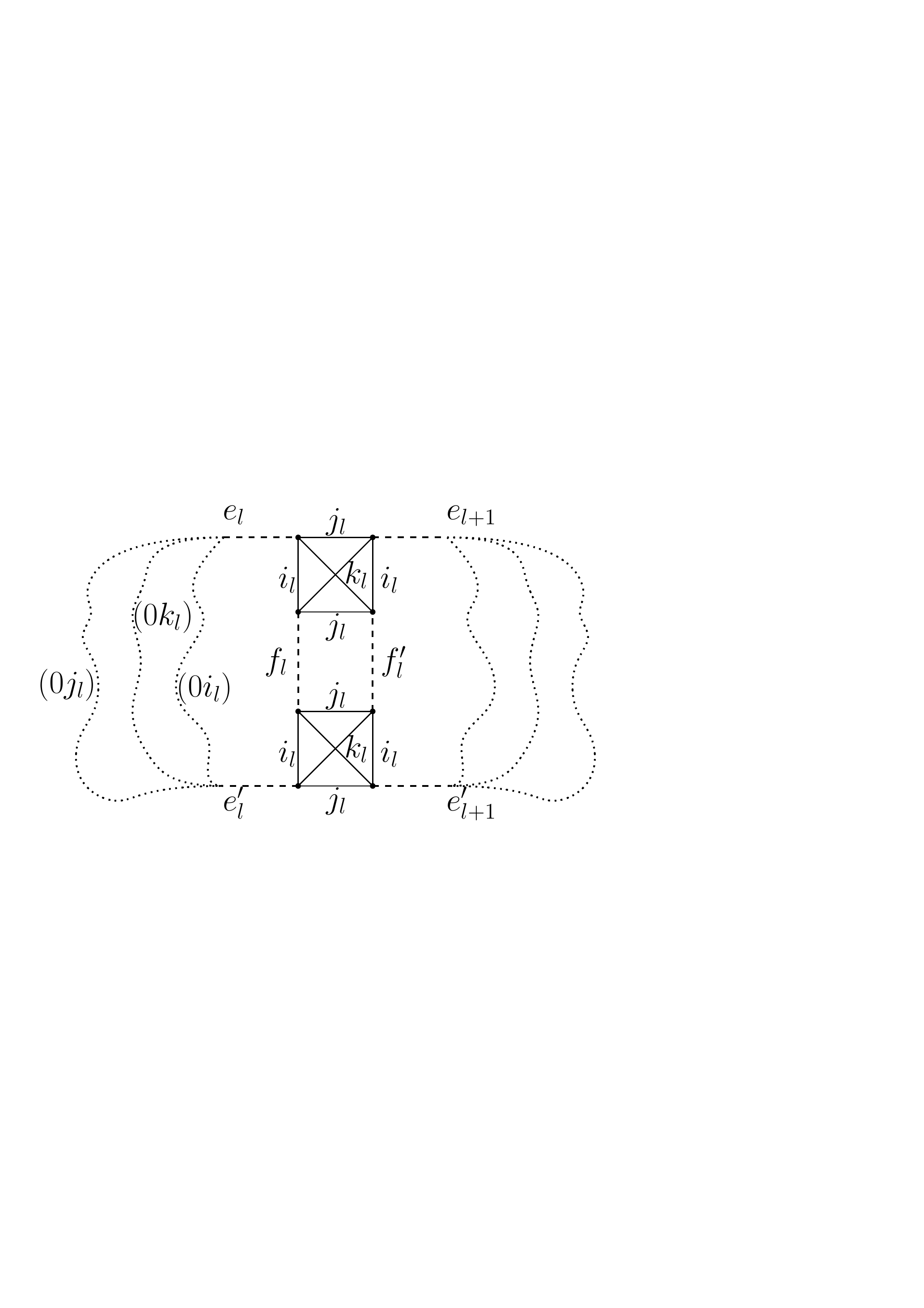} \end{array}
\end{equation}
The dotted lines represent the bicolored paths which close the faces of each color going along $e_l, e_{l+1}, e'_l, e'_{l+1}$. There can be dipole insertions which preserve the degree on the pair $\{e_l, e'_l\}$ and on the pair $\{e_{l+1}, e'_{l+1}\}$, both resulting in lengthening the chain of dipoles. These dipole insertions can be of any color. Notice that the dotted paths on the left are disjoint from those on the right. This implies that no other dipole insertions preserving the degree can be done.

\item  {The graph} $G'$ has the form \eqref{Family2Degree1}. First, it can be the graph with four bubbles pictured in \eqref{Family2Degree1FourBubbles}. Only one dipole preserving the degree can be inserted, of color $i$
\begin{equation}
\begin{array}{c} \includegraphics[scale=.4]{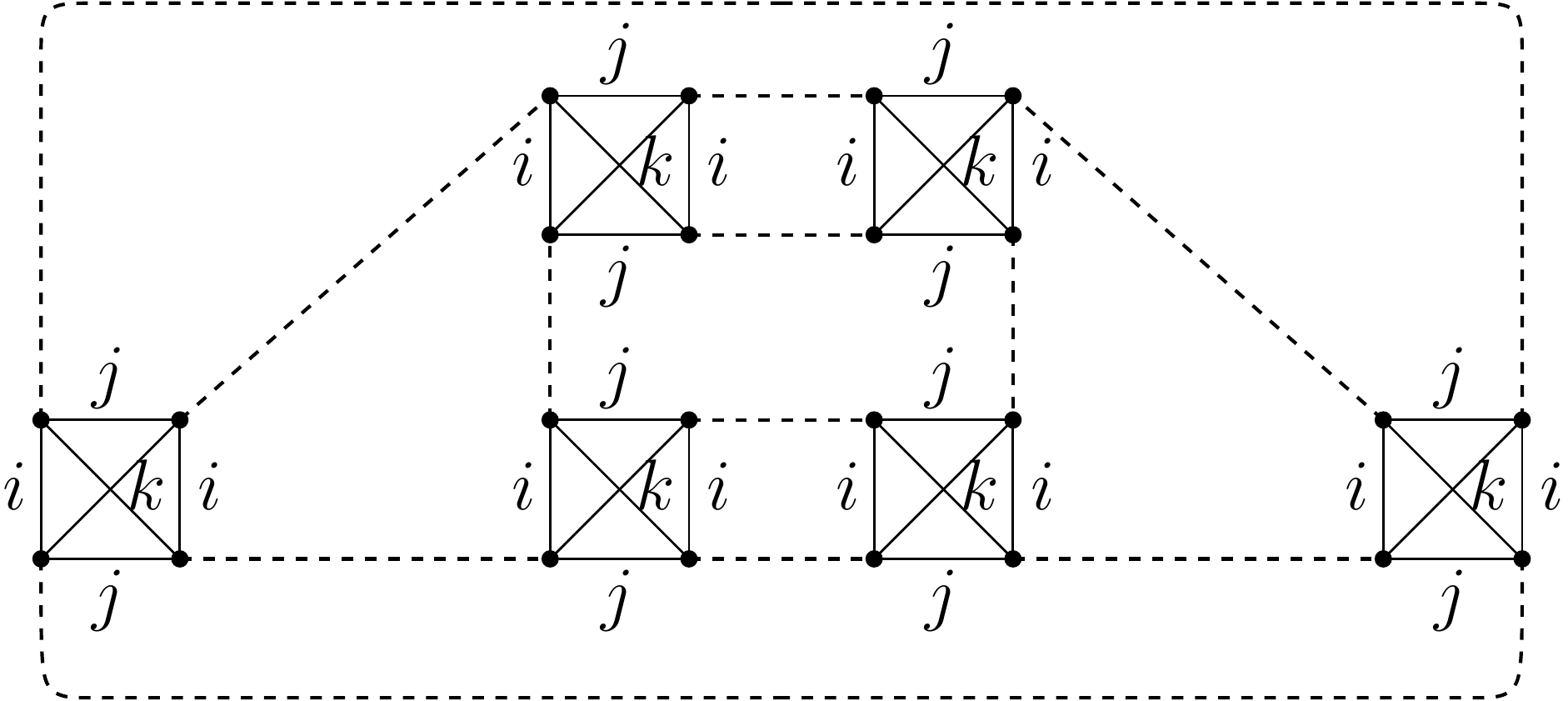} \end{array}
\end{equation}
One can then add dipoles of color $i$ to create a chain of dipoles. This corresponds to \eqref{Family2Degree1}.

No other dipole preserving the degree can be inserted. Indeed, a dipole of the chain is 
 {of the type}
\begin{equation}
\begin{array}{c} \includegraphics[scale=.5]{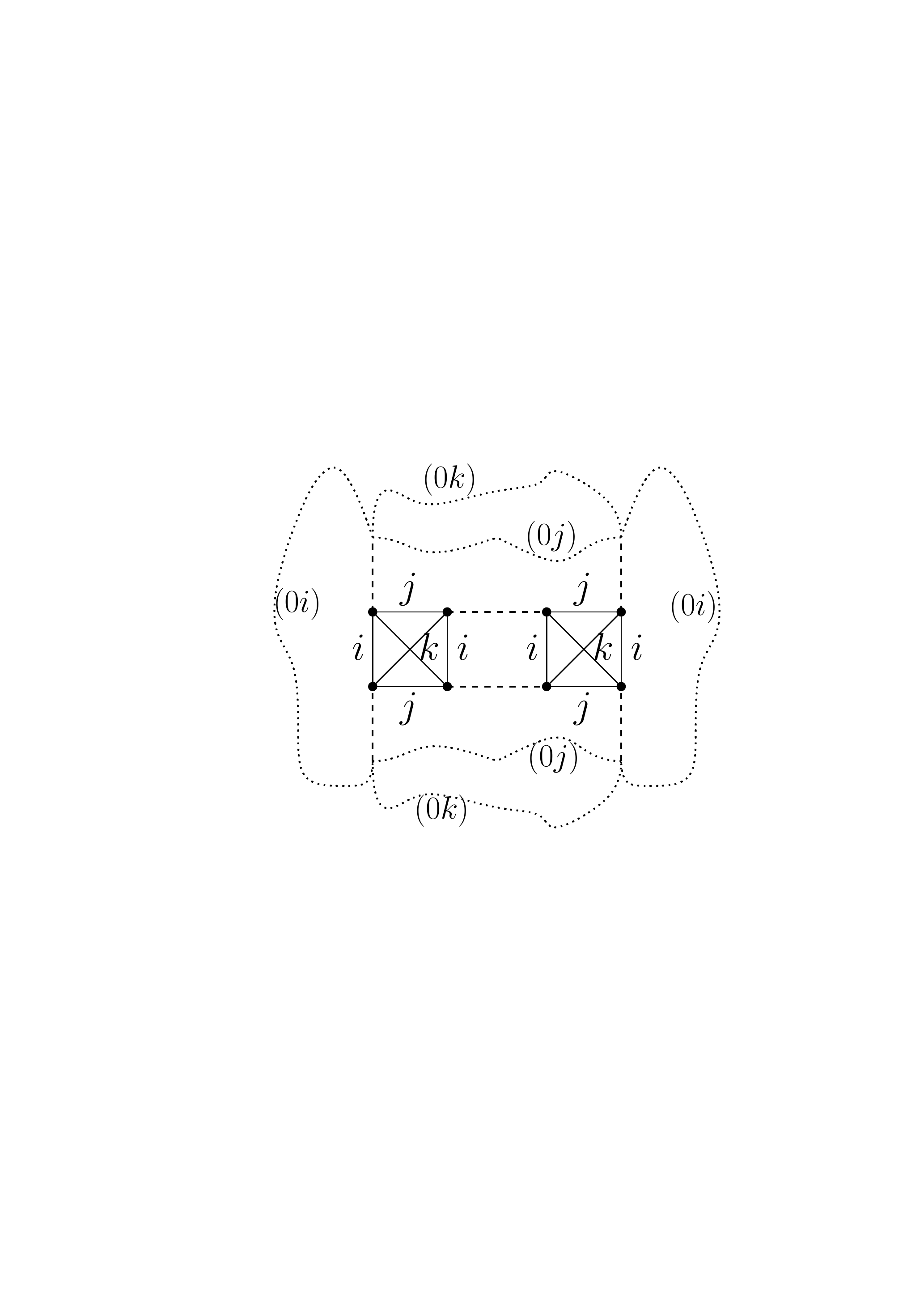} \end{array}
\end{equation}
where the dotted lines represent the bicolored paths closing the faces. Those paths are disjoint and therefore only dipoles of color $i$ extending the chain are allowed.

\item {The graph} $G'$ can be the graph \eqref{deg12PI}. It can be checked directly that no two edges have more than one face in common, meaning that no dipole insertion preserving the degree exists.
\end{itemize}
\end{itemize}
This ends the induction {and thus concludes the proof}.
\end{proof}

\section{Degree $3/2$ graphs of the \ON invariant SYK-like tensor model} \label{sec:Degree3/2}

Our strategy can be applied to graphs of higher degrees, provided one can identify the 2PI, dipole-free graphs (step \ref{enum:2PIDipoleFree} of the strategy in Section \ref{sec:Strategy}). We do so in the case of 
{degree} 
$\omega=3/2$ below. The other steps of the strategy are briefly discussed at the end 
{of the section}
and are left to be completed 
{by the interested reader}.

\begin{theorem} \label{thm:2PIDipoleFreeDegree3/2}
There is a{n} unique \ON-invariant, dipole-free, 2PI graph of degree $3/2$ which is
\begin{equation}
\begin{array}{c} \includegraphics[scale=.5]{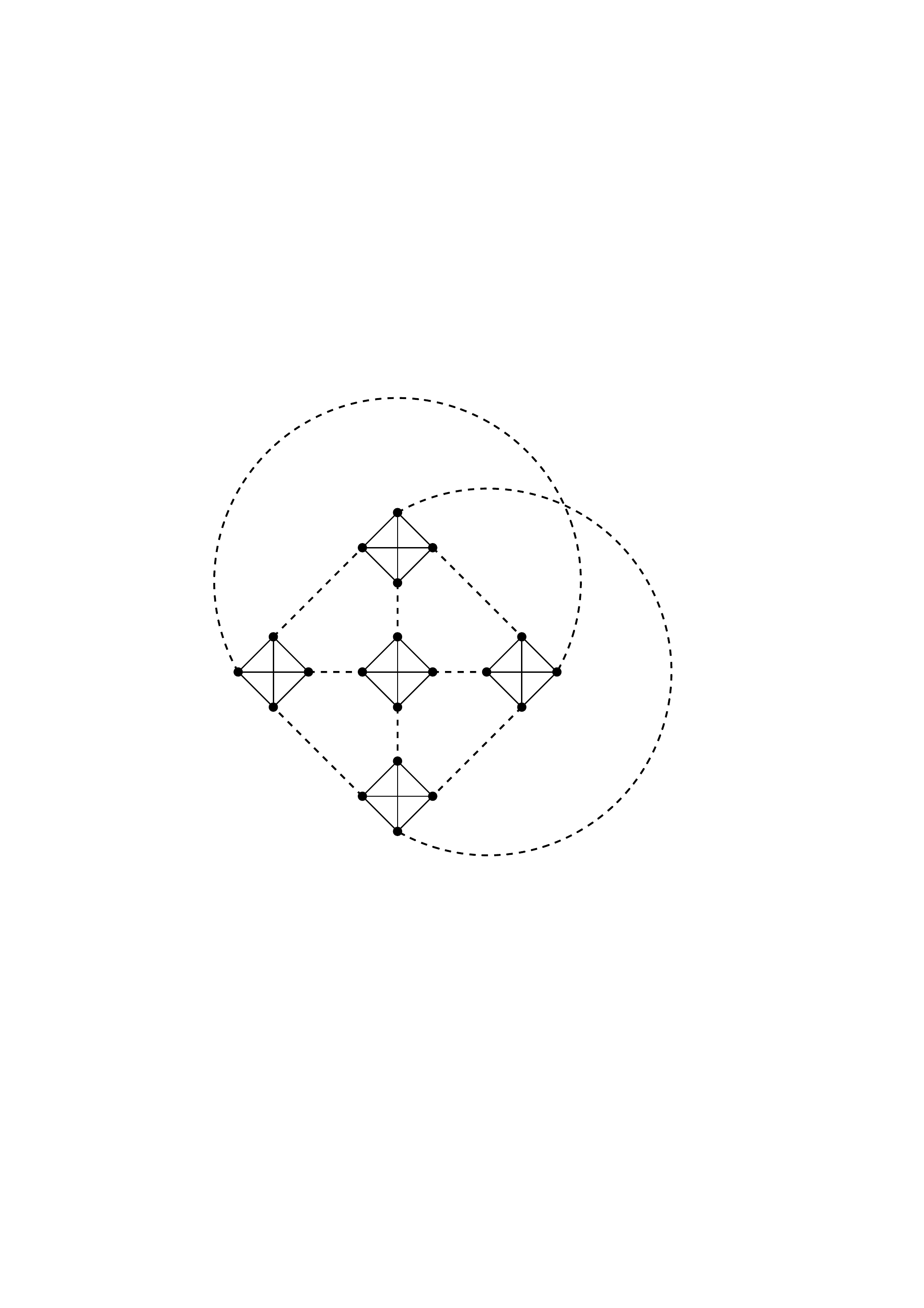} \end{array}
\end{equation}
\end{theorem}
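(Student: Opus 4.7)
The plan is to mirror the proof of Theorem \ref{thm:Degree1}, starting with a case analysis on the genus profile $(g_1, g_2, g_3)$ of the three jackets. From $\omega = g_1 + g_2 + g_3 = 3/2$ and the fact that each $g_i$ is a non-negative half-integer, there are exactly three possibilities up to color permutation: $(0,0,3/2)$, $(0,1/2,1)$ and $(1/2,1/2,1/2)$. The face-length equation of Lemma \ref{FaceLength}, rearranged as
\begin{equation*}
F_{i,3} = \sum_{l\geq 5}(l-4)\,F_{i,l} - 2 + 8 g_i,
\end{equation*}
then controls the number of triangular faces color by color, yielding in particular $F_{i,3} \geq 2$ when $g_i=1/2$, $F_{i,3} \geq 6$ when $g_i=1$, and $F_{i,3} \geq 10$ when $g_i=3/2$.

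For the two profiles containing a planar jacket I would fix $J_3$ as that jacket and use Remark \ref{thm:CanonicalEmbedding} to equip $G$ with its canonical sphere-embedding. Lemma \ref{thm:EvenLength} forces all faces of color 3 to have even length, so $F_{3,3}=0$, and the equation above collapses to $\sum_{l\geq 5,\, l\text{ even}}(l-4)\,F_{3,l} = 2$, i.e.\ exactly one hexagonal face of color 3 and all others of length 4. I would then pick a triangular face of color $1$ or $2$ (which is forced to exist by the counts above) and run the same local completion argument as in Theorem \ref{thm:Degree1}: label its three vertices, follow the faces of the other two colors, and at each step rule out any extension that would produce a tadpole, a dipole, a face of forbidden length or parity, or a non-planar closure of $J_3$. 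Each branch should either terminate in a contradiction or converge, after an appropriate permutation of colors, to the graph displayed in the statement.

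The main obstacle will be the balanced case $(1/2,1/2,1/2)$, where no jacket is planar and Remark \ref{thm:CanonicalEmbedding} only supplies an embedding of some $J_i$ on the projective plane. Here every color satisfies $F_{i,3}\geq 2$, so triangular faces coexist in all three colors. My plan is to fix one jacket, draw its polygonal $\mathbb{RP}^{2}$-embedding as in Remark \ref{thm:CanonicalEmbedding}, and perform the forced-completion argument as before, while keeping careful track of the identifications along the polygon boundary that encode the non-orientability. The extra combinatorial possibilities created by these cross-cap identifications are what make this branch genuinely harder, but the three face-length equations applied simultaneously, together with the 2PI and dipole-free hypotheses, should be enough to collapse the analysis down to either the unique graph of the theorem or a contradiction.

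Finally, once the case analysis is complete, a direct verification --- counting the faces of each color in the displayed graph and plugging them into formula \eqref{deg_CTKT2} --- confirms that this graph has degree $3/2$, and a quick inspection confirms that it is 2PI and dipole-free, which closes the proof.
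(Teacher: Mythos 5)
Your setup is sound and matches the paper's: the three genus profiles, the rearrangement of Lemma \ref{FaceLength}, the use of Lemma \ref{thm:EvenLength} to kill odd faces of the planar color and isolate a single hexagonal face with all other faces of that color of length 4, and the projective-plane embedding in the balanced case are all exactly the right tools. But what you have written is a plan, not a proof: the decisive content of this theorem is the exhaustive forced-completion analysis, and in both branches you defer it with ``each branch should either terminate in a contradiction or converge'' and ``should be enough to collapse the analysis.'' Those sentences are placeholders for essentially the entire argument. In the planar-jacket branch the actual work is to show that the hexagonal face of the planar color must visit six distinct bubbles, and that every way of closing the adjacent length-4 faces compatibly with planarity forces a tadpole, a dipole, a 2-edge-cut, or an odd-point function --- i.e.\ the planar profiles $(3/2,0,0)$ and $(1,1/2,0)$ admit \emph{no} graph at all. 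You leave open the possibility that a planar branch ``converges to the graph displayed,'' which cannot happen: the displayed graph has all three jackets of genus $1/2$, so the planar cases must be ruled out outright, and your proposal does not establish (or even assert) this.

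In the balanced case the gap is the same in kind: one must show that a length-3 face of a given color is forced to wrap the non-contractible cycle of $\mathbbm{RP}^2$ (a contractible triangle would enclose a 3-point function), that exactly two such faces per color can coexist (a third forces a configuration whose jacket count gives degree 1, a contradiction), and then that closing the remaining length-4 faces without new bubbles, dipoles, or crossings leaves a unique graph. None of these steps is routine bookkeeping; each requires a case-by-case elimination of the kind you carried out nowhere. Also note that your proposed starting point in the planar branch --- a triangular face of color 1 or 2, mimicking Theorem \ref{thm:Degree1} --- is weaker than working directly with the hexagon-plus-squares structure of the planar color, because unlike in the degree-1 proof you no longer have the rigid ``all faces of the planar color have length exactly 4'' constraint; the single length-6 face is precisely what you must exploit, and your outline does not say how. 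As it stands the proposal identifies the correct skeleton but proves neither existence-uniqueness in the balanced case nor non-existence in the planar cases.
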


Its three jackets have genus $1/2$ (topological projective planes). For each color, there are exactly 2 faces of length three and one of length four.

\begin{proof}
Let $G$ be a graph of degree $3/2$. {\it A priori}, its jackets can have genera
\begin{equation}
(g_1, g_2, g_3) = (3/2,0,0) \qquad \text{or} \qquad (g_1, g_2, g_3) = (1,1/2,0) \qquad \text{or} \qquad (g_1, g_2, g_3) = (1/2,1/2,1/2)
\end{equation}
up to color permutations. The first two cases have at least one planar jacket, while the third case has not. We first show that the jackets cannot be planar.

\paragraph{Assume one jacket is planar,} \label{sec:2PI_deg32_planar} say $J_3$. From Lemma \ref{thm:EvenLength}, the faces of color 3 have even lengths, $F_{3,2l+1}=0$. This in turn simplifies \eqref{cond_face} to
\begin{equation}
\sum_{l\geq 2} (l-2) F_{3,2l} = 1,
\end{equation}
further implying that there is one face of length 6, all others being of length 4.

It can be checked that the face of length 6 has to have six distinct bubbles. For instance, we investigate the cases where it has only five bubbles. It could be as follows
\begin{equation}
\begin{array}{c} \includegraphics[scale=.5]{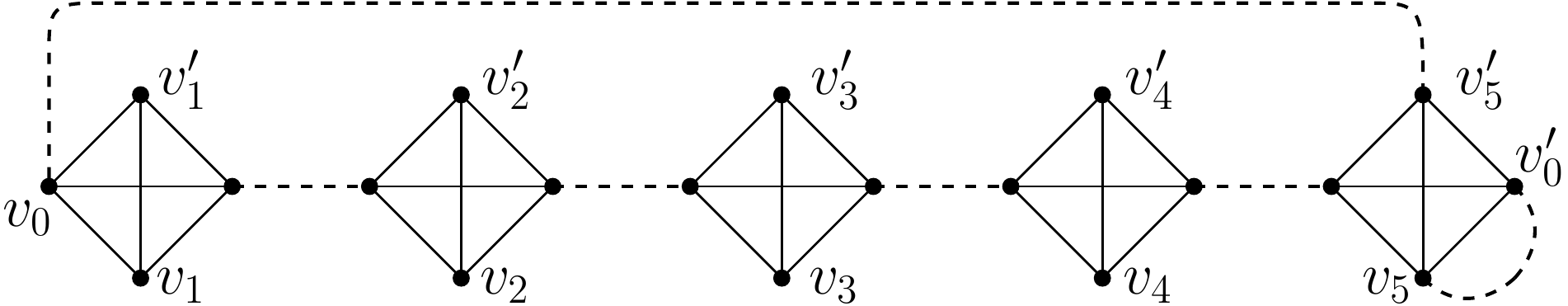} \end{array}
\end{equation}
where $v_0$ is connected $v'_5$, but then it forces a face of length 1 on the color 1 or 2 and $G$ would be 2PR.

If $v_0$ is connected to $v'_4$, then the planarity of $J_3$ forces $v_5$ to be connected to a 1-point function, and $v'_1, v'_2, v'_3$ to a 3-point function,
\begin{equation}
\begin{array}{c} \includegraphics[scale=.5]{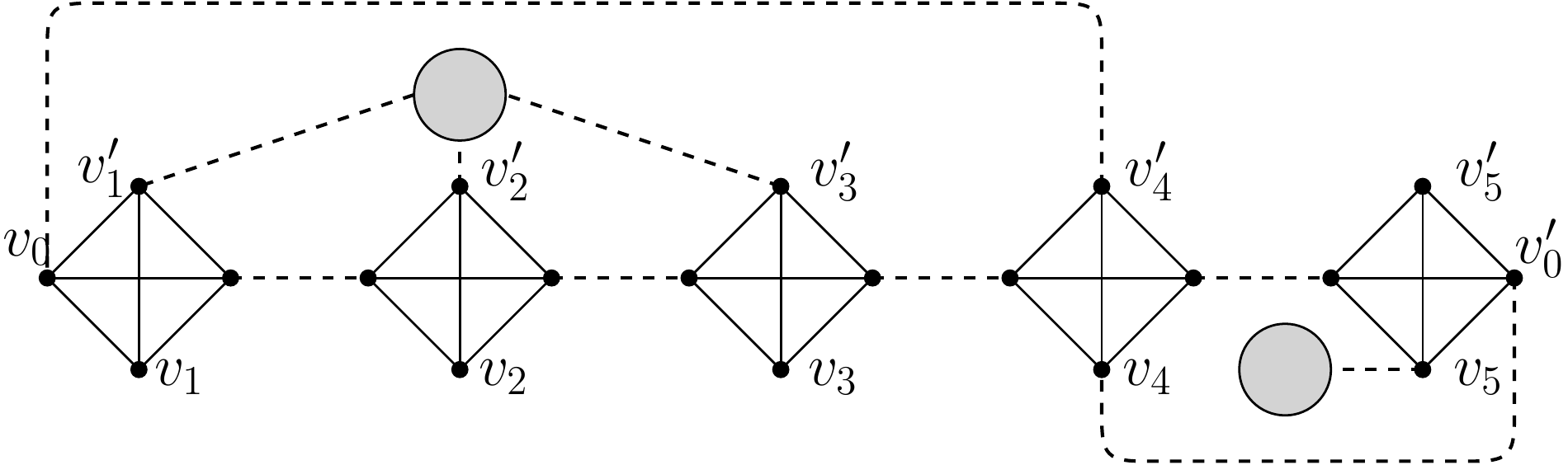} \end{array}
\end{equation}
but they do no exist since the number of vertices is always even (no odd-point functions).

If $v_0$ is connected to $v'_3$, then planarity of $J_3$ forces the edges of color 0 connected to $v_4$ and $v_5$ to form a 2-cut, making $G$ 2PR, or to connect $v_4$ to $v_5$ by a single edge of color 0 which then forms a dipole and this is forbidden as well,
\begin{equation}
\begin{array}{c} \includegraphics[scale=.5]{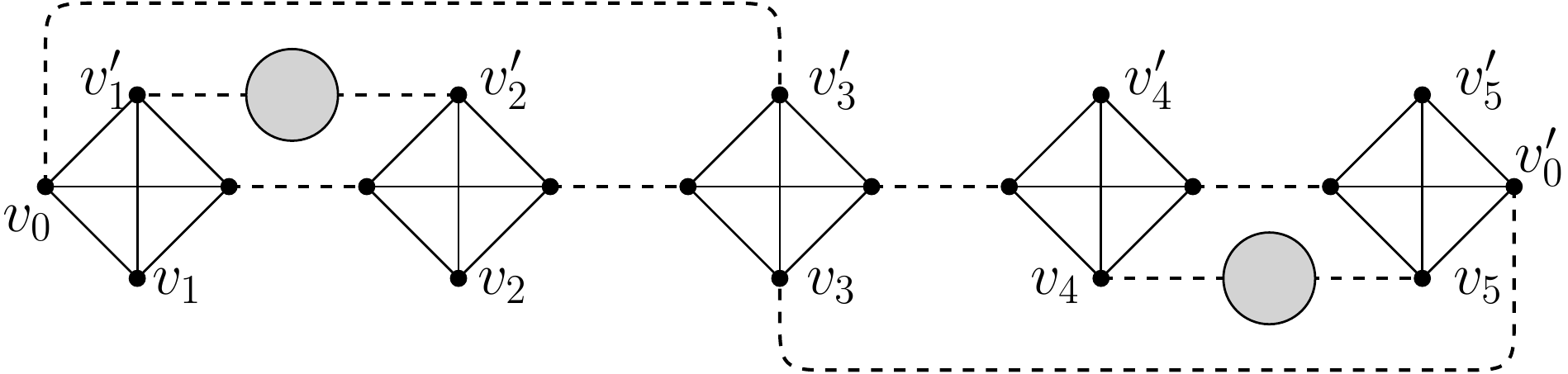} \end{array}
\end{equation}
(and similarly between $v'_1$ and $v'_2$).

Let us thus denote $b_1, \dotsc, b_6$ the six distinct bubbles forming a face of color 3 and length 6. The other edges of color 3 of those bubbles must belong to faces of length 4. The same arguments as those used just above to prove that the bubbles are distinct can be used to prove that the vertex $v_1$ of $b_1$ cannot be connected to a vertex of $b_2, \dotsc, b_6$. It therefore connects to another bubble $b_7$, and by symmetry $v'_1$ too, to $b_8$. A face of length 4 is then obtained by ``crossing'' one of the bubbles $b_2, \dotsc, b_6$. Say it crosses $b_3$,
\begin{equation}
\begin{array}{c} \includegraphics[scale=.5]{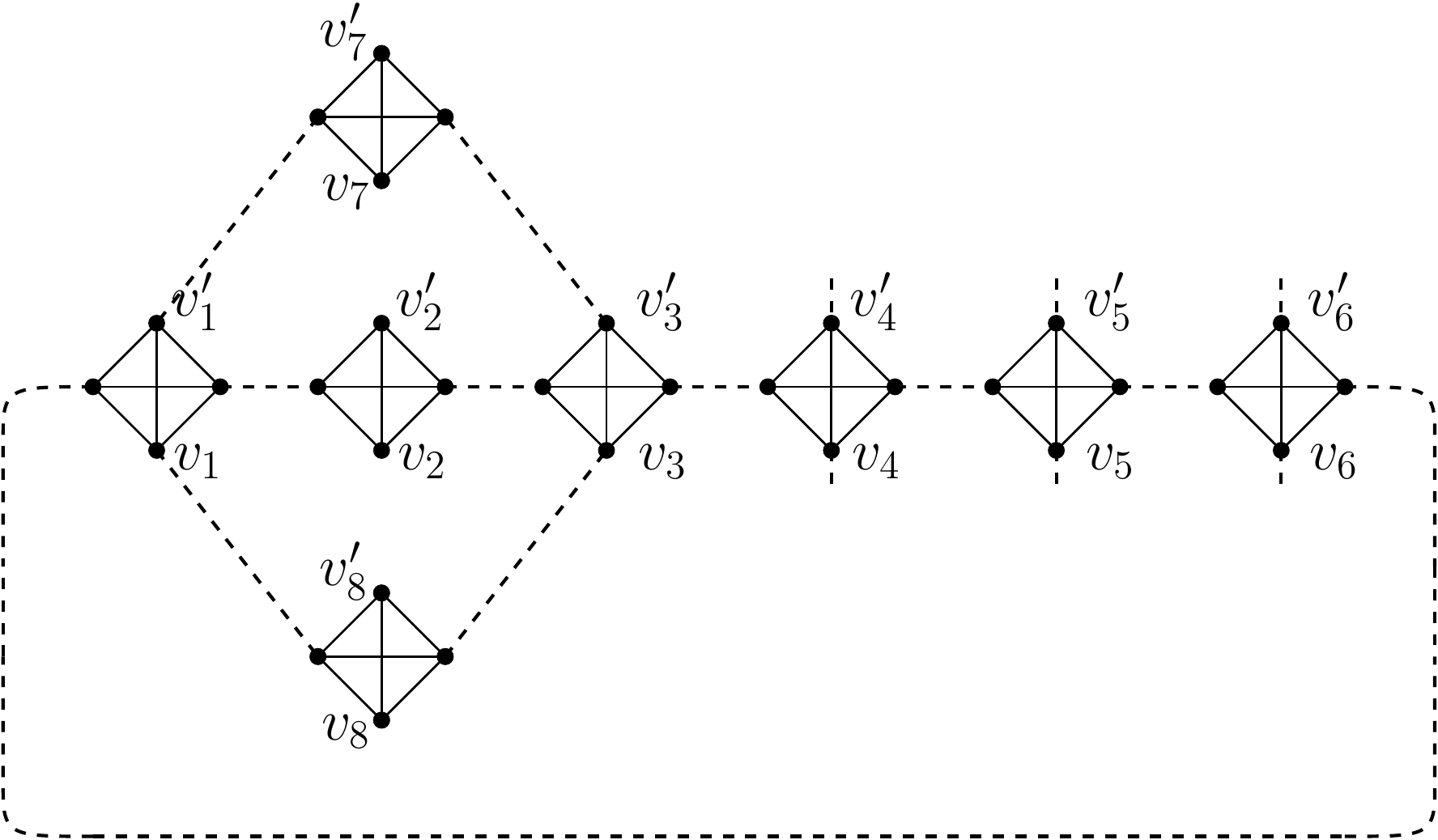} \end{array}
\end{equation}
then planarity of $J_3$ forces a 2-point function between $v'_2$ and $v_7$, and $G$ being 2PI forces this 2-point function to be trivial, i.e. a single edge of color 0. Same thing between $v_2$ and $v'_8$. The face of color 3 going along those edges must be of length 4 too. To do that, it is necessary to cross one of the bubbles $b_4, b_5$ or $b_6$. In the case it crosses $b_4$
\begin{equation}
\begin{array}{c} \includegraphics[scale=.5]{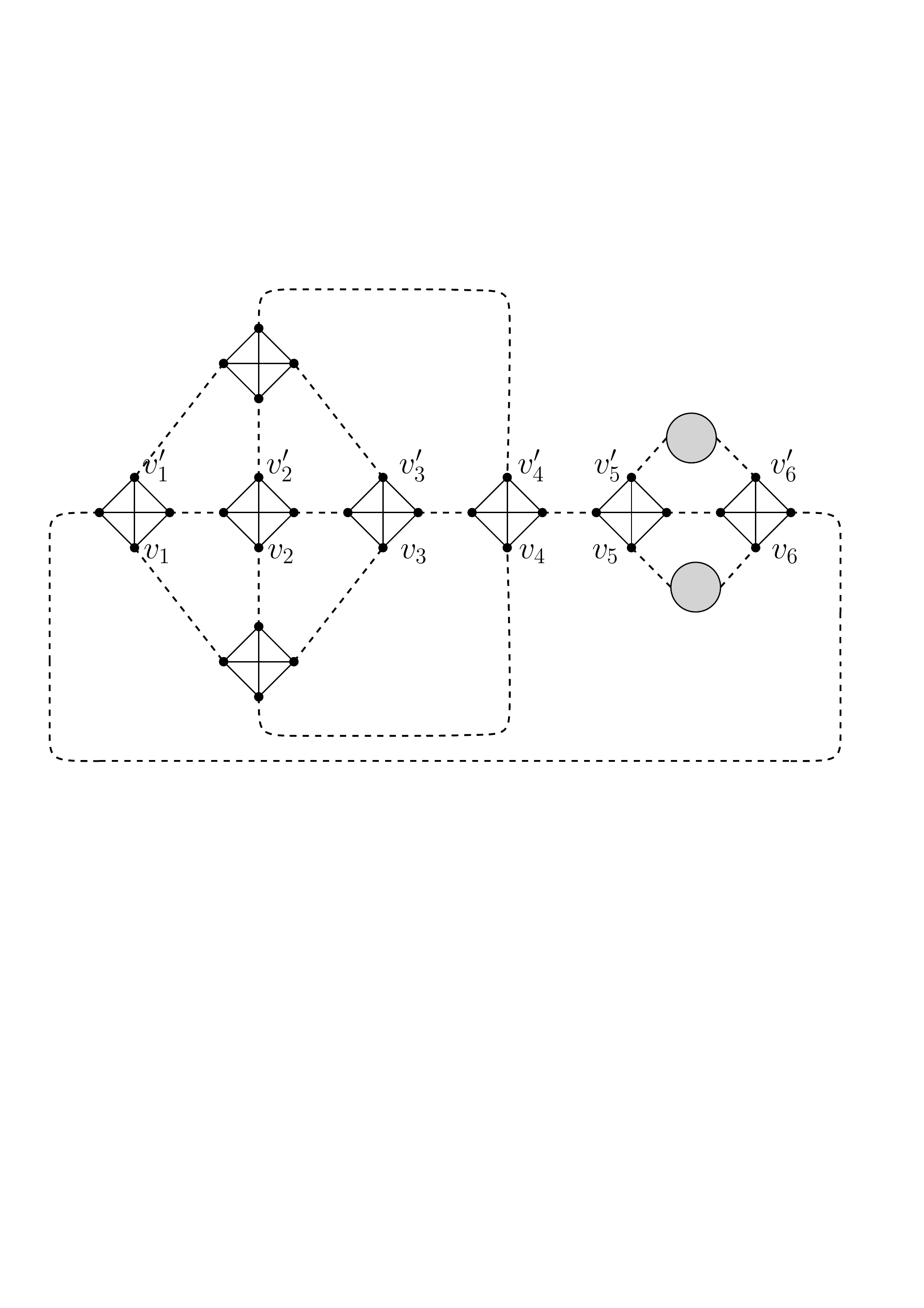} \end{array}
\end{equation}
Planarity of $J_3$ again forces a 2-point function between $v_5$ and $v_6$, making $G$ 2PR, or directly an edge of color 0 creating a dipole. If the face of color 3 going through $v_2$ and $v'_2$ crosses $b_6$, this is similar. If it crosses $b_5$ instead, planarity of $J_3$ would force 1-point function connected to $v_4$ and $v'_4$, which is impossible. 

All other cases are treated similarly. The conclusion is that $G$ 2PI, dipole-free, of degree $3/2$, cannot have a planar jacket.

\paragraph{All jackets therefore have genus $1/2$.} We can thus embed any jacket into the projective plane without crossings. Adding the edges of the missing color to the jacket, an embedding of $G$ is obtained. As a convention, we will consider that $J_3$ is embedded without crossings, and the projective plane is represented as a disc with opposite points identified.

Next thing is to study the face lengths. From Lemma \ref{FaceLength}, one finds
\begin{equation}
F_{i,3} = 2 + \sum_{l\geq 5}(l-4) F_{i,l} \geq 2
\end{equation}
meaning that there are at least two faces of length 3 for each color. 

\paragraph{The faces of length 3.} Consider a face of length 3 and color 3 and try to represent it using the embedding. It might be like
\begin{equation}
\begin{array}{c} \includegraphics[scale=.5]{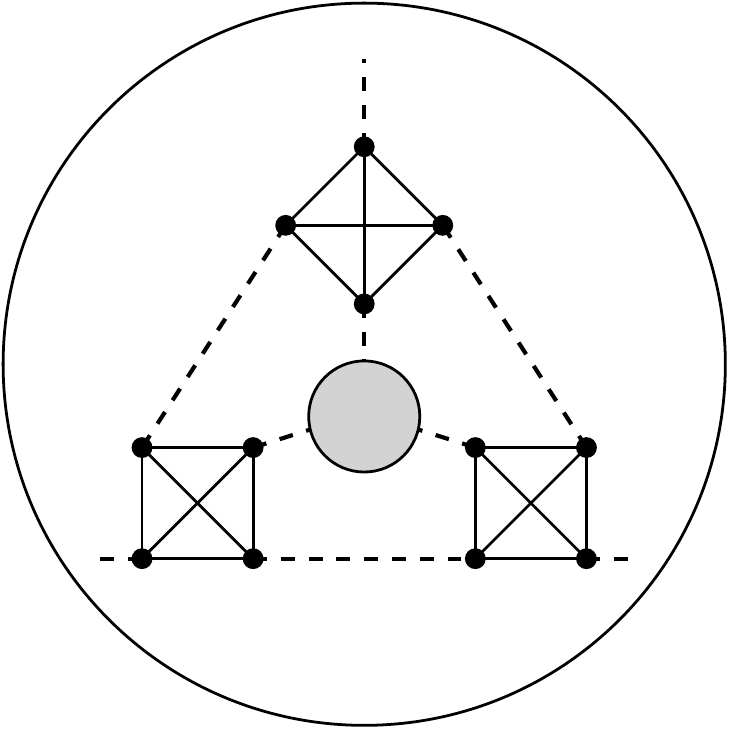} \end{array}
\end{equation}
where the interior regions must then have a 3-point function which is impossible. Therefore a face of length 3 and color 3 must follow the (unique up to homotopy) non-contractible cycle of the projective plane like
\begin{equation}
\begin{array}{c} \includegraphics[scale=.5]{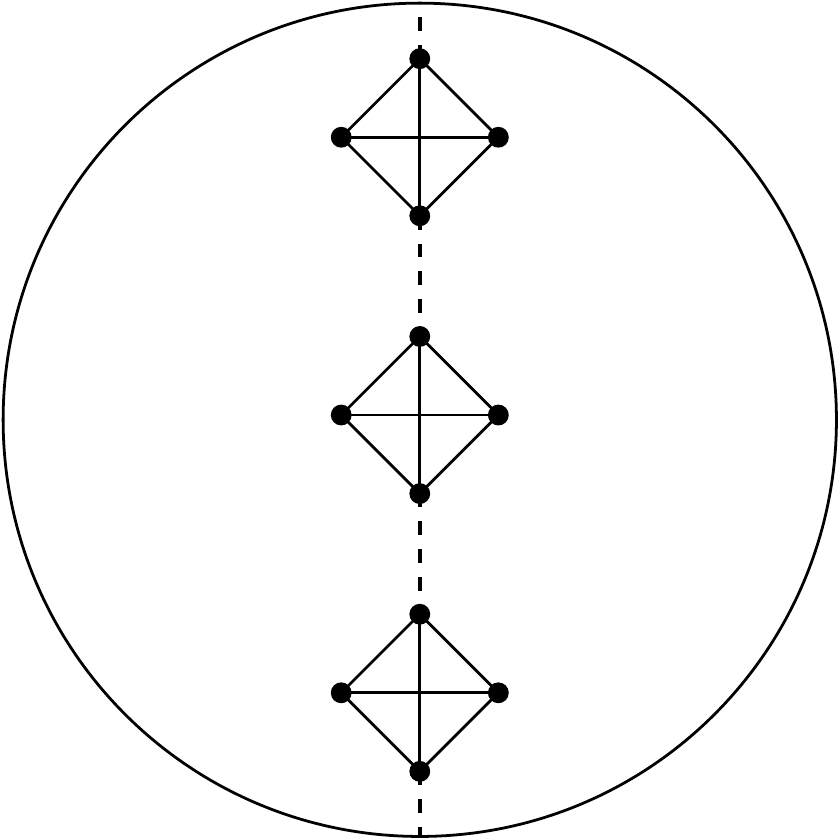} \end{array}
\end{equation}
We can then study the location of a second face of length 3 and color 3. It must obviously also wrap around the non-contractible cycle and must therefore ``cross'' the previous face. This gives
\begin{equation} \label{TwoFacesDegreeThree}
\begin{array}{c} \includegraphics[scale=.5]{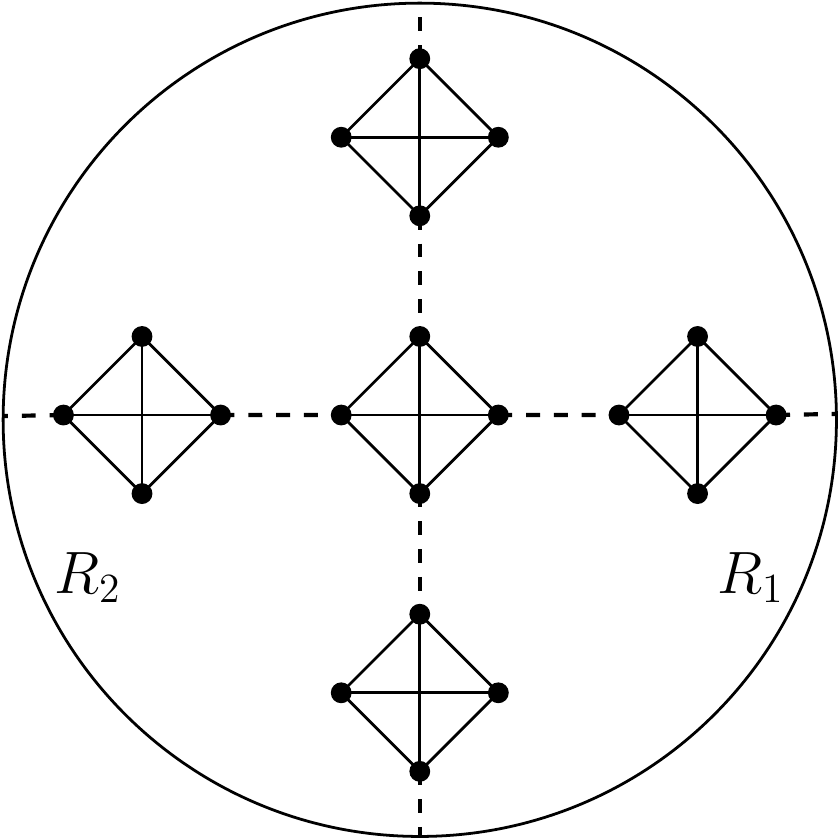} \end{array}
\end{equation}
In $J_3$, the exterior of the bubbles consists in two regions $R_1$ and $R_2$. A third face of color 3 and length 3 would have to go through both regions and thus be like
\begin{equation}
\begin{array}{c} \includegraphics[scale=.5]{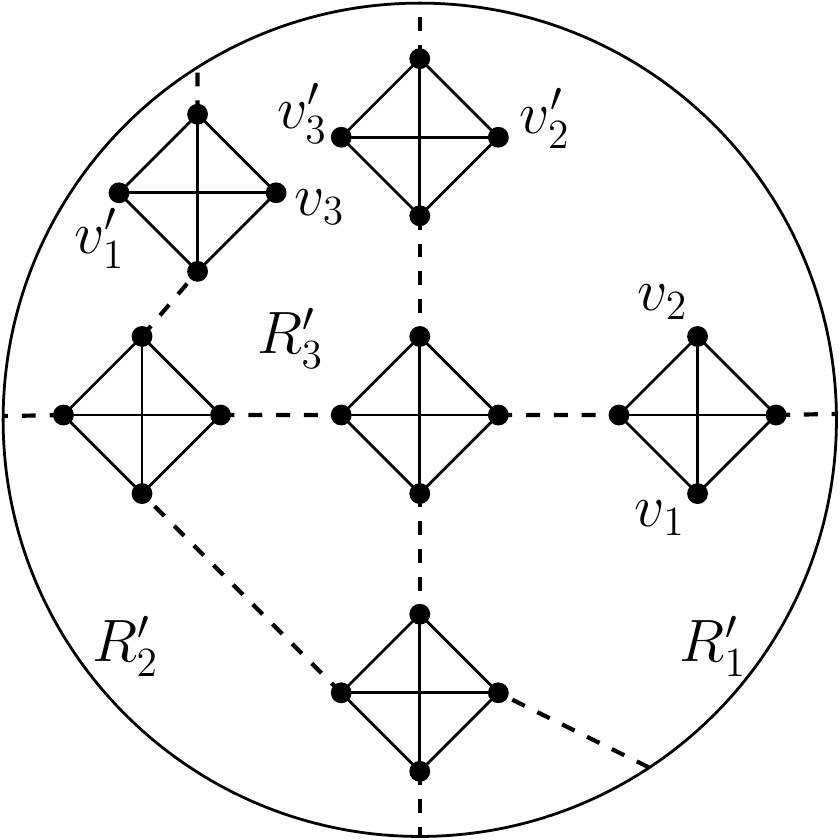} \end{array}
\end{equation}
The only unpaired vertices in the region $R'_1$ are $v_1$ and $v'_1$. Only a 2-point function can thus connect them and it in fact has to be just an edge of color 0 for $G$ to be 2PI. For the same reason, $v_2$ and $v'_2$ must be connected by an edge of color 0, as well as $v_3$ and $v'_3$. All vertices are then adjacent to edges of color 0. However the counting of faces reveals that this graph is of degree 1 (its jacket $J_2$ is planar). We conclude that there are only two faces of color 3 and length 3. The other faces of color 3 are thus all of length 4.

\paragraph{The faces of length 4.} Consider one, denoted $f$, going through one of the bubbles already drawn in \eqref{TwoFacesDegreeThree}. It has to go through the two regions $R_1$ and $R_2$ and must therefore go through at least two of the bubbles already drawn. 

Say it goes through $v_1$ and $v'_2$, with new notations depicted below, with an additional bubble between them. Then the vertex $v$ of the new bubble lies in a region with no other unpaired vertex, so it has to connect to a 1-point function which is zero,
\begin{equation}
\begin{array}{c} \includegraphics[scale=.5]{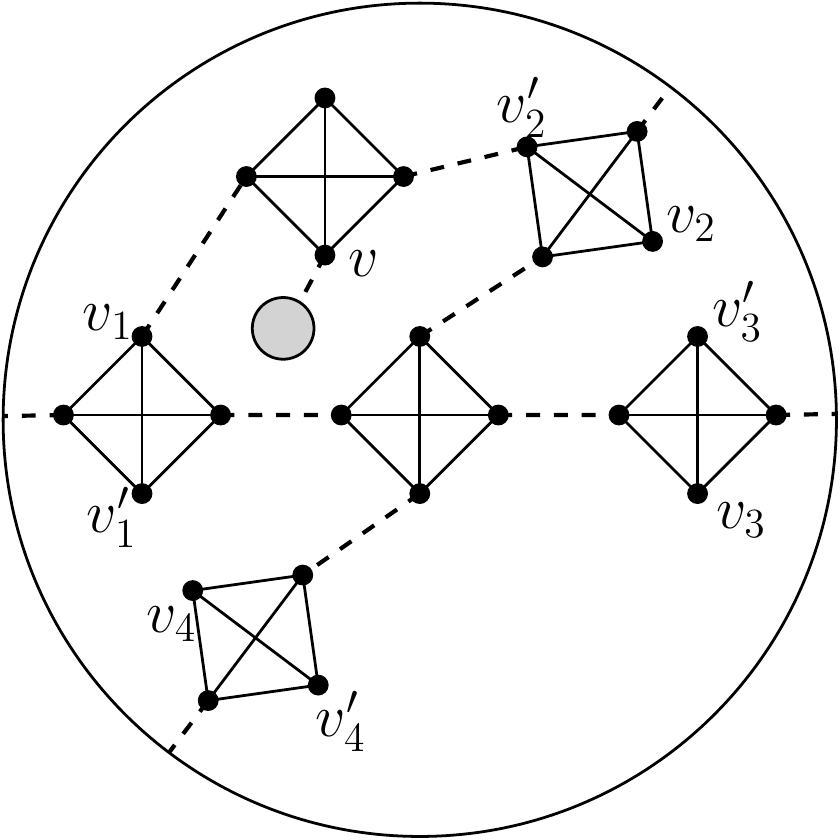} \end{array}
\end{equation}
Similarly, if there are two new bubbles between $v_1$ and $v'_2$, then their vertices must form a non-trivial 2-point function or a dipole, which are both forbidden. It goes similarly if one tries to add a bubble between $v_1$ and $v_3$ with a path of colors 0 and 3 between them, 
\begin{equation}
\begin{array}{c} \includegraphics[scale=.5]{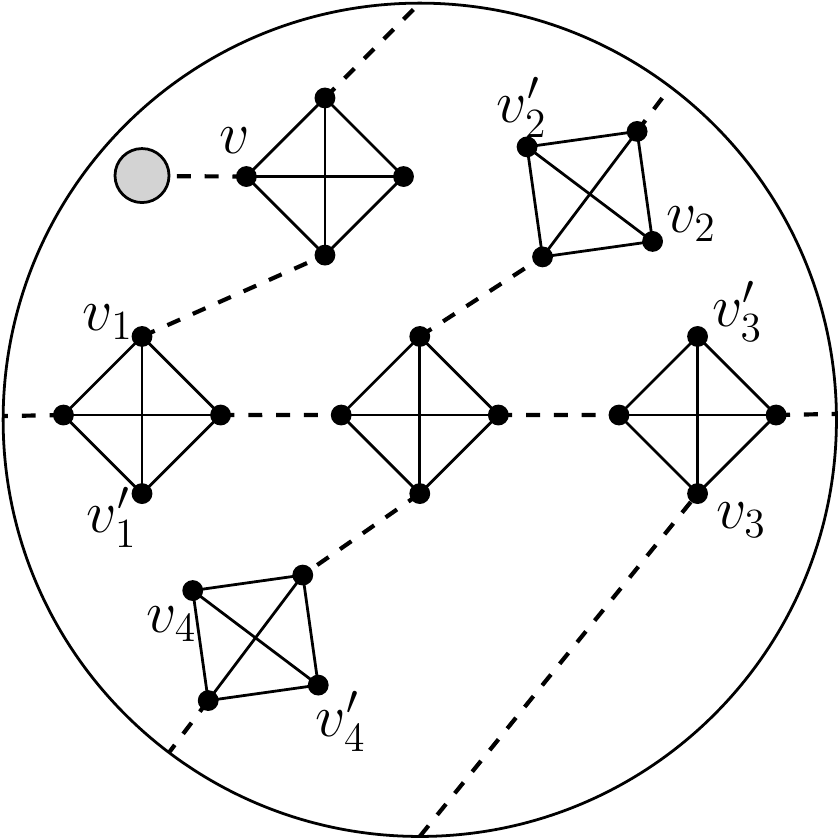} \end{array}
\end{equation}
with a 1-point function, and if two bubbles are inserted, they must create a non-trivial 2-point function or a dipole. 

Finally, let us try and add a bubble connecting $v_1$ to $v'_4$ via a path of colors 0 and 3. The embedding of $J_3$ in the projective plane enforces a trivial 2-point function between $v$ and $v'_2$ and between $v'$ and $v'_3$, and a 4-point function in the last region (it may be two 2-point functions)
\begin{equation}
\begin{array}{c} \includegraphics[scale=.5]{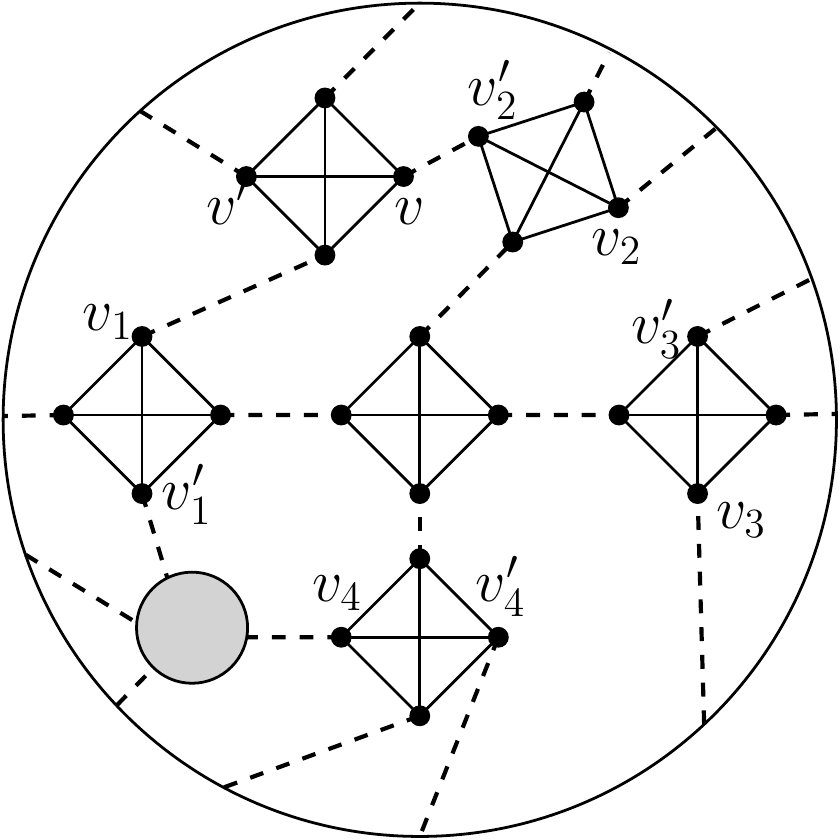} \end{array}
\end{equation}
However, there is no way to close the two open faces of color 3 so that they are of length 4. Indeed, this 4-point function has to contain a path of colors 0 and 3 between $v'_1$ and $v_4$ with exactly one bubble, which then enforces a 1-point function. If two bubbles connecting $v_1$ to $v'_4$ via a path with colors 0 and 3, then it is direct to see that this enforces 3-point functions in the regions $R$ and $R'$ below
\begin{equation}
\begin{array}{c} \includegraphics[scale=.5]{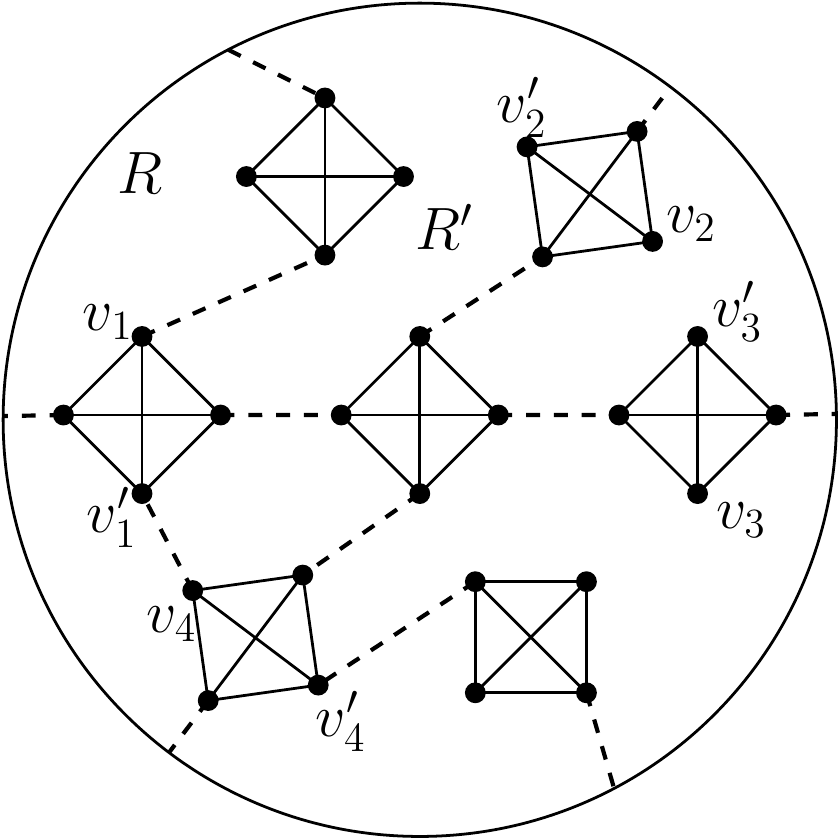} \end{array}
\end{equation}
and this is thus impossible.

This leaves no other option than to connect the unpaired vertices of \eqref{TwoFacesDegreeThree} without adding other bubbles and while maintaining the embedding of $J_3$ in the projective plane. If there is an edge of color 0 between $v_1$ and $v_3$, this creates a dipole. If $v_1$ is connected to $v'_4$ instead, then one might connect $v_4$ to $v_2$ but this creates a dipole. Instead, $v_4$ can be connected to $v'_3$ but then $v_3$ needs a 1-point function to avoid crossings, which is impossible,
\begin{equation}
\begin{array}{c} \includegraphics[scale=.5]{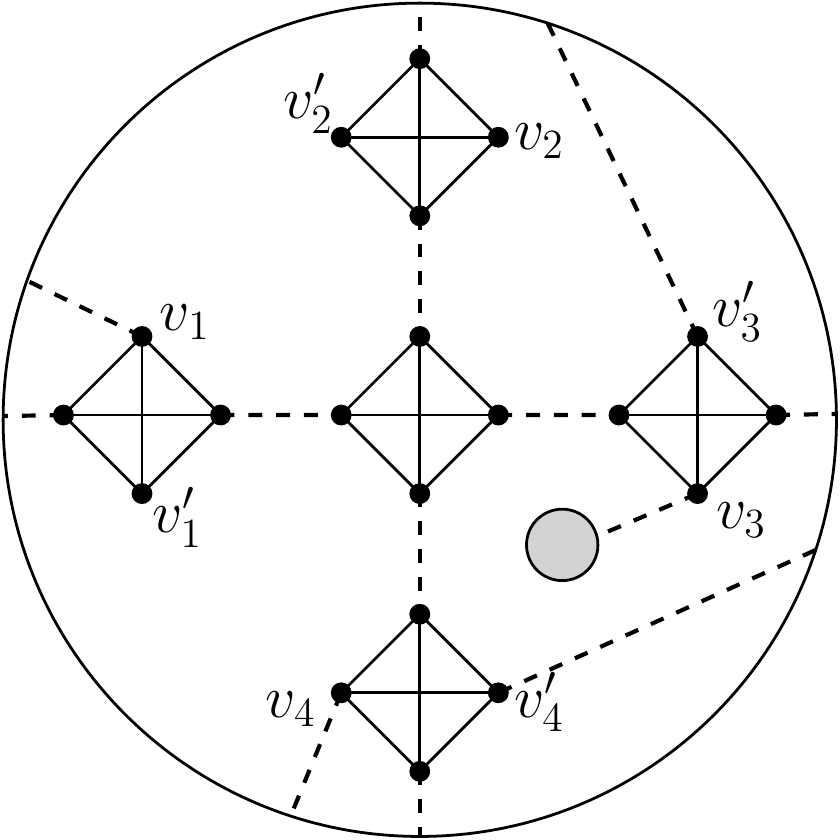} \end{array}
\end{equation}
Therefore $v_1$ can only {be} connected to $v'_2$. Then $v_2$ cannot be connected to $v_4$ without forming a dipole, and cannot be connected to $v'_1$ because the face of color 3 would have length 2 only. Thus, $v_2$ can only be connected to $v'_3$, and then $v_3$ to $v'_4$ and finally $v_4$ to $v'_1$
\begin{equation}
\begin{array}{c} \includegraphics[scale=.5]{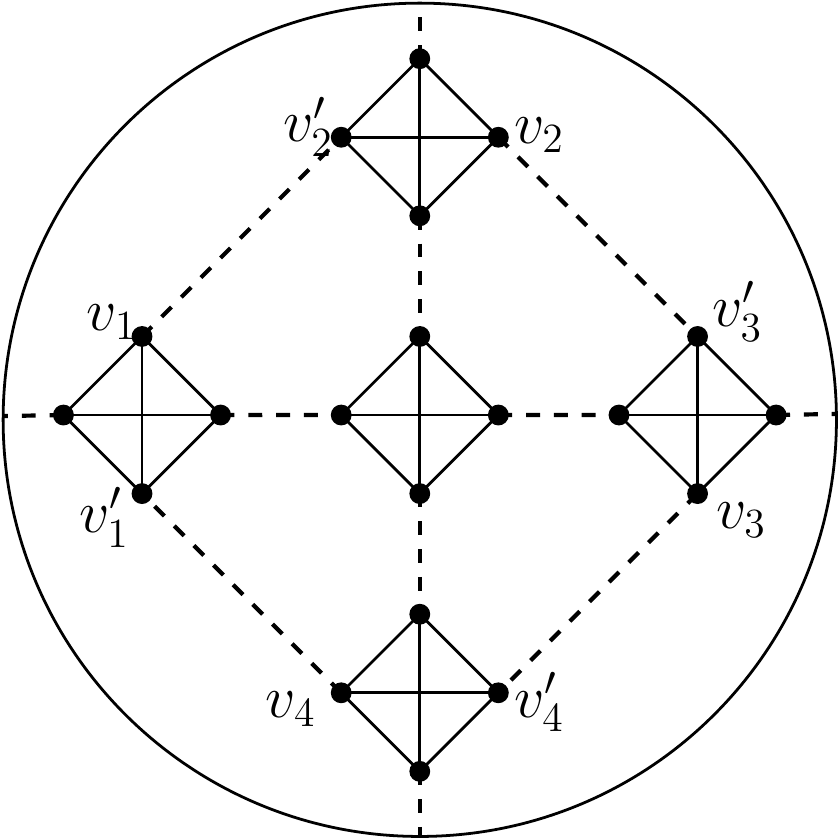} \end{array}
\end{equation}
which is the graph of the Theorem. {This concludes the proof.}
\end{proof}

\bigskip

Let us 
{end this section by adding a few comments}
on the other steps of the strategy to find all graphs of degree 3/2.
\begin{itemize}
\item As for 2PR graphs without melonic insertions (Step \ref{enum:2PR}), it is enough to consider all compositions of $G_L$ and $G_R$ with $\omega(G_L) = 1$ and $\omega(G_R)=1/2$. Those graphs have been detailed throughout this article. We leave the compositions to the {interested} reader.
\item As for the dipole insertions, as in the case of degree 1, only those which preserve the degree and those which increase the degree by one have to be considered.
\end{itemize}

\section{Concluding remarks} \label{sec:Conclusion}

Let us now recall that in \cite{Fusy1}, the dominant graphs at any order in the large $N$ expansion of the MO model have been studied. 
We have compared in this paper the diagrammatics of the two SYK-like tensor models, the MO model and the CTKT, or \ON-invariant one. Our main results are{:}
\begin{itemize}
\item an identification of a subset of \ON-invariant graphs which correspond to MO graphs: \ON-invariant graphs with an orientable jacket are MO with the same degree.
\item a recipe to identify graphs of a fixed degree $\omega$, outlined in Section \ref{sec:Strategy} and applied to the graphs of degree 1 and and 3/2.
\end{itemize}

{A first perspective for future work is to apply} 
our strategy to $n$-point functions, which are central in SYK-like models (see \cite{last})
{and to thus}
 find the relevant graphs at a given order in the $1/N$ expansion.

Given the strong relation we proved between the MO and \ON-invariant models, it is natural to compare our results with the combinatorial analysis of the MO model performed in \cite{Fusy1}. There, graphs of arbitrary degree were considered, but only the ``most singular'' ones which were shown to arise from rooted binary trees (see Proposition $22$ of \cite{Fusy1}). It then allowed for the implementation of the double scaling mechanism for the MO tensor model, see \cite{donald}. In the present paper (independently of the fact that we study here the more general \ON-invariant model), we instead focus on a strategy which identifies all graphs at fixed $\omega$, and not some most singular ones. The cost is that it seems difficult to apply it to an arbitrary order $\omega$, but this is a{nother} direction for future work.

Let us also note that that type of analysis of the general term in the large $N$ expansion has already been done in \cite{Fusy2} (see also \cite{BLT} for a{n} LO and NLO analysis) for the colored SYK model (which is a particular case of the SYK generalization introduced in \cite{Gross}). Therefore, another interesting perspective for future work is the comparison of such an analysis of the general term of \ON-invariant SYK-like tensor model with the results of the analysis of \cite{Fusy2} for the colored SYK model.

\begin{acknowledgments}
Valentin Bonzom and Adrian Tanasa are partially supported by the CNRS Infiniti "ModTens" grant. Adrian Tanasa is partially supported by the PN 09 37 01 02 grant. Victor Nador is fully supported by the CNRS Infiniti "ModTens" grant. Valentin Bonzom is supported by the ANR MetAConc project ANR-15-CE40-0014.
\end{acknowledgments}

\bibliographystyle{spmpsci}      
\bibliography{bib_CTKT}     
\end{document}